\documentclass[sigconf,natbib=false,nonacm]{acmart}

\renewcommand\footnotetextcopyrightpermission[1]{}
\AtBeginDocument{%
  }

\AtEndPreamble{
  \theoremstyle{acmdefinition}
  \newtheorem{remark}[theorem]{Remark}
}

\usepackage[T1]{fontenc}
\usepackage[utf8]{inputenc}
\RequirePackage[datamodel=acmdatamodel,
	style=acmnumeric]{biblatex}
\setcopyright{none}

\usepackage{microtype}
\usepackage[ruled,linesnumbered,vlined]{algorithm2e}
\usepackage[capitalize]{cleveref}
\usepackage{environ}
\usepackage{tikz}

\newcommand{\GspanFP}[3]
    {\langle #1 \rangle _{#2}(#3)}
\newcommand{\casesland}{
\tikz[overlay,baseline]{%
    \path [fill=white,draw=black] (0.59ex, 0.49ex) circle [radius=0.7ex];
    \node () at (0.59ex, 0.49ex) {\scalebox{0.6}{$\land$}};
    }%
}

\newcommand{\caseslor}{
\tikz[overlay,baseline]{%
    \path [fill=white,draw=black] (0.59ex, 0.49ex) circle [radius=0.7ex];
    \node () at (0.59ex, 0.49ex) {\scalebox{0.6}{$\lor$}};
    }%
}

\NewEnviron{landcases}{
  \rlap{$\begin{cases}
    \BODY
  \end{cases}$}
  {\casesland}
  \phantom{
    \begin{cases}
      \BODY
    \end{cases}
  }
}

\NewEnviron{lorcases}{
    \rlap{$\begin{cases}
      \BODY
    \end{cases}$}
    {\caseslor}
    \phantom{
      \begin{cases}
        \BODY
      \end{cases}
    }
  }
\renewcommand{\P}{\mathsf{P}}
\newcommand{\FP}{\mathsf{FP}}
\newcommand{\FPC}{\mathsf{FPC}}
\newcommand{\FO}{\mathsf{FO}}
\newcommand{\ar}{\mathrm{ar}}
\newcommand
    \indexGrp[2]
    {\ensuremath{|#1\mathop{:}#2|}}
\newcommand{\Stab}{\mathrm{Stab}}

\newcommand{\STRUC}{\mathrm{STRUC}}
\newcommand{\Aut}{\mathrm{Aut}}
\newcommand{\range}[1]{[#1]}
\newcommand{\Mod}{\mathrm{Mod}}
\newcommand{\Id}{\mathrm{Id}}
\newcommand{\CPT}{\mathsf{CPT}}

\newcommand{\type}{\mathrm{type}}
\newcommand{\CFI}{\mathsf{CFI}}
\newcommand{\rk}{\mathsf{rk}}

\newcommand{\ord}{\mathsf{ord}}
\newcommand{\freeVar}{\mathrm{free}}
\newcommand{\Sym}{\mathrm{Sym}}
\newcommand{\numb}{\mathrm{numb}}
\newcommand{\element}{\mathrm{elem}}
\newcommand{\graph}{\mathrm{graph}}
\newcommand{\func}{\mathrm{func}}
\newcommand{\perm}{\mathrm{perm}}
\newcommand{\oA}{A^\le}
\newcommand{\stroA}{A^<}
\newcommand{\tq}{\ensuremath{\ |\ }}

\newcommand{\im}{\mathrm{im}}

\newcommand{\oPBCG}{\mathrm{PBCG}^<}

\newcommand\Gspan[1]{\langle #1 \rangle}
\newcommand{\ifp}{\mathrm{ifp}}
\newcommand{\sifp}{\mathrm{sifp}}
\newcommand{\Bij}{\mathrm{Bij}}
\newcommand{\sift}{\mathsf{sift}}
\begin{document}
	\title{Subgroup accessibility in Group Order Logic}
	\author{Anatole Dahan}
	\orcid{0000-0002-6157-8397}
	\affiliation{
		\position{Research Associate}
		\department{Department of Computer Science and Technology}
		\institution{University of Cambridge}
		\city{Cambridge}
		\country{UK}}
	\email{ad929@cam.ac.uk}
\begin{abstract}
	We investigate the expressive power of fixed-point logics ($\FP$) and their extensions in defining generating sets for accessible subgroups of definable permutation groups. This operation, computable in polynomial time via the Schreier-Sims algorithm, plays a central role in the group-theoretic approach to Graph Isomorphism and Graph Canonisation. In particular, it underpins polynomial-time canonisation for bounded colour-class graphs--a class for which no natural logic capturing $\P$ is currently known.
	We first show that this operation cannot, in general, be expressed in any logic for $\P$. This limitation arises from the fact that accessible subgroups need not admit symmetric generating sets of polynomial size.
	However, we prove that when the base group admits a definable ordered generating set, the accessible subgroup operation becomes definable in fixed-point logic with the group order operator ($\FP + \ord$). This is achieved by partially simulating the Schreier-Sims algorithm within $\FP + \ord$.
	As a corollary, we show that fixed-point logic with counting ($\FPC$) can also define the operation when the base group is abelian. In particular, $\FPC$ can define the automorphism group of any graph with abelian colours--despite being unable to canonise such graphs.
\end{abstract}

\begin{CCSXML}
<ccs2012>
   <concept>
       <concept_id>10003752.10003790.10003799</concept_id>
       <concept_desc>Theory of computation~Finite Model Theory</concept_desc>
       <concept_significance>500</concept_significance>
       </concept>
   <concept>
       <concept_id>10003752.10003777.10003778</concept_id>
       <concept_desc>Theory of computation~Complexity classes</concept_desc>
       <concept_significance>500</concept_significance>
       </concept>
   <concept>
       <concept_id>10003752.10003777.10003787</concept_id>
       <concept_desc>Theory of computation~Complexity theory and logic</concept_desc>
       <concept_significance>500</concept_significance>
       </concept>
   <concept>
       <concept_id>10003752.10003777.10003779</concept_id>
       <concept_desc>Theory of computation~Problems, reductions and completeness</concept_desc>
       <concept_significance>300</concept_significance>
       </concept>
 </ccs2012>
\end{CCSXML}

\ccsdesc[500]{Theory of computation~Finite Model Theory}
\ccsdesc[500]{Theory of computation~Complexity classes}
\ccsdesc[500]{Theory of computation~Complexity theory and logic}
\ccsdesc[300]{Theory of computation~Problems, reductions and completeness}
\keywords{Descriptive complexity, Logic for P, Finite Model Theory}

\maketitle
\section{Introduction}

The quest to identify a logic that precisely characterises the class of problems solvable in polynomial time ($\P$) is a central challenge in descriptive complexity theory.
While fixed-point logic ($\FP$) captures $\P$ on ordered structures~\cite{immermanLanguages1983}, no logic is currently known to capture $\P$ on arbitrary finite structures.
Extensions of $\FP$, such as fixed-point logic with counting ($\FPC$) and fixed-point logic with rank ($\FP + \rk$,~\cite{dawarLogicsRankOperators2009}), have significantly expanded the expressive power of fixed-point logic, but they still fall short of capturing all polynomial-time properties, as shown by increasingly refined separation results~\cite{caiOptimalLowerBound1992,gradelRank2019,lichterSeparating2023}.

In response to these limitations, recent work has investigated group-theoretic extensions of $\FP$.
Notably, Group Order Logic ($\FP + \ord$)~\cite{dahanGroup2025} introduces a new operator that computes the order—that is, the cardinality—of a group generated by a definable set of permutations. This operation is polynomial-time computable using the \emph{Schreier–Sims algorithm}~\cite{simsComputational1970,simsComputation1971}, provided the underlying domain is linearly ordered.

The introduction of the $\ord$ operator was partly motivated by its natural appeal as a generalisation of the $\rk$ operator. Just as $\rk$ defines a fundamental linear-algebraic quantity which is computable efficiently using Gaussian elimination, $\ord$ captures a core computational task in the theory of permutation groups.
However, the analogy between $\rk$ and $\ord$ only goes so far: the $\rk$ operator seems to subsume a wide range of tractable linear-algebraic operations, whereas the Schreier–Sims algorithm enables various operations that have yet to be studied within the framework of $\FP + \ord$.
Most prominently, the Schreier–Sims algorithm can compute generating sets for subgroups of small index of definable permutation groups. Such subgroups are called \emph{accessible}. In this article, we investigate the ability of $\FP + \ord$ to define generating sets for such subgroups.

This study aligns with another motivation for the introduction of the $\ord$ operator: the translation of the group-theoretic approach to Graph Isomorphism and Graph Canonisation within an isomorphism-invariant logical framework.
In the computational framework, the Graph Canonisation problem consists of selecting 
(and computing) a unique representative of each isomorphism class of graphs.
A logic $\mathcal L\ge \FPC$ canonises structures in a class $\mathcal C$ if there is a formula which defines on any structure $\mathfrak A\in\mathcal C$ a copy of $\mathfrak A$ over the numerical domain.
 It follows directly from the Immerman-Vardi~\cite{vardiComplexityRelationalQuery1982} theorem 
that if a logic $\mathcal L\ge \FPC$ canonises structures in a class $\mathcal C$, then $\mathcal L$ captures $\P$ on $\mathcal C$~\cite{groheDescriptiveComplexityCanonisation2017}. Yet, no candidate logic for $\P$ is known to canonise bounded colour-class graphs, a class for which the group-theoretic approach has been shown to yield polynomial-time canonisation algorithms~\cite{babai_monte-carlo_1979,babaiCanonical1983}.
These canonisation results fundamentally rely on the Schreier-Sims algorithm enabling the computation of generating sets for accessible subgroups, and reveal a deep connection between the Graph Isomorphism, Graph Canonisation and Graph Automorphism problems. The latter entails the computation of a generating set for the automorphism group of a graph. It is well-known that Graph Isomorphism can be reduced to Graph Automorphism~\cite{babai_monte-carlo_1979,luksIsomorphismGraphsBounded1982}, and historically, 
breakthroughs on the Graph Automorphism problem have systematically transferred to Graph Canonisation algorithms~\cite{babaiCanonical1983}.
This constitutes a compelling case for investigating under which conditions the automorphism group of a graph can be defined in $\FP + \ord$ or even $\FPC$. In light of the way this problem has been solved for restricted classes of graphs~\cite{babai_monte-carlo_1979,luksIsomorphismGraphsBounded1982}, this in turn raises the question: in which circumstances can a logic define a generating set for an accessible subgroup?

\paragraph*{Contributions}

Our contributions are threefold. First, we establish a fundamental limitation of logical approaches to the definition of sets of generators for accessible subgroups by showing that the operation cannot, in general, be expressed in any logic for $\P$. This arises because 
accessible subgroups need not admit symmetric generating sets of polynomial size. This result suggests 
the inherent unsuitability of the representation of groups by small generating sets in an isomorphism-invariant context.

Second, we identify a natural condition under which this limitation can be overcome. To be precise, our core technical result (\cref{thm:ordered_schreier_sims}) states that when the base group admits a definable \emph{ordered} generating set, the accessible subgroup operation is definable in $\FP + \ord$. Our proof technique involves a partial simulation of the Schreier-Sims algorithm within $\FP + \ord$. When the base group is abelian, we additionally show that the $\ord$ operator is not needed and $\FPC$ already suffices (\cref{corol:use_of_ord_op}).

Taken together, these definability results culminate in our main theorem: on a broad class of coloured structures (including all structures with abelian colours), $\FPC$ can define a generating set for the full automorphism group (\cref{thm:pbcg}).
More precisely, whenever each colour class $A_i$ carries a group $\Gamma_i$ with $\Aut(\mathfrak A_i)\le\Gamma_i\le\Sym(A_i)$ and $|\Gamma_i|\le|A|^k$ for some fixed $k$, and $\Phi$ supplies an ordered generating set for each $\Gamma_i$, $\FPC$ can define a generating set for $\Aut(\mathfrak A)$.
This class of structures encompasses all structures with abelian colours (in the sense of~\cite[Definition 6.1]{pakusaLinear2015}). In particular, our result should be contrasted with the fact that such graphs need not even be distinguishable in $\FPC$~\cite{caiOptimalLowerBound1992} (and hence cannot be canonised in $\FPC$), demonstrating the limitations of the usual reduction from Graph Isomorphism to Graph Automorphism in the present (non union-closed) setting.

While we do not assume commutativity of the $\Gamma_i$, our proof relies on the fact that commutativity is \emph{supplied by the ordering}. As shown in \cref{sec:effect_of_ordering}, when a definable ordered generating set is part of the structure, we can reduce our structure to one where each $\Gamma_i$ is abelian. 
A structure presented with canonically ordered generators thus always has an abelian automorphism group, and the $\ord$ operator, although it is what makes \cref{thm:ordered_schreier_sims} available in general, is never used in \cref{sec:col_graph_aut}.

\paragraph*{Outline}
Section~\ref{sec:preliminaries} introduces the necessary preliminaries, including $\FP + \ord$, the Schreier-Sims algorithm and the notion of subgroup accessibility. In Section~\ref{sec:limits}, we establish the limitations of logical approaches to subgroup accessibility, showing that no logic for $\P$ can define the operation in general.
In Section~\ref{sec:subgroup_comp}, we show that accessible subgroups are definable in $\FP + \ord$ under the assumption of an ordered generating set (\cref{thm:ordered_schreier_sims}), and that in the special case of abelian base groups, $\FPC$ already suffices (\cref{corol:use_of_ord_op}).
In Section~\ref{sec:col_graph_aut}, we apply the results of the previous section to polynomially bounded colour-class graphs whose colour-class generators are ordered. We first establish in \cref{sec:effect_of_ordering} that such an ordering forces $\Aut(\mathfrak A)$ to be abelian (\cref{lem:ordered_centralises,corol:opbcg_abelian}), and then use this to show that $\FPC$ defines a generating set for $\Aut(\mathfrak A)$ on the whole class, with no commutativity hypothesis and without the $\ord$ operator (\cref{thm:pbcg}). In \cref{sec:subgroup_comp}, most proofs are deferred to the appendix.

\section{Preliminaries}
\label{sec:preliminaries}
We denote signatures by upper-case Greek letters, structures by Fraktur symbols (e.g., $\mathfrak{A}, \mathfrak{B}, \mathfrak{C}$), and their respective domains by the corresponding Roman symbols (e.g., $A, B, C$). We assume all structures to be finite, and all signatures to be relational.

Tuples of the form $(v_1, \dots, v_l)$ are denoted by $\vec{v}$. For a set $X$, we write $|X|$ to indicate its cardinality, and for a tuple $\vec{x} = (x_1, \dots, x_k)$, we write $|\vec{x}|$ to denote its length $k$. 
Given $n\in\mathbb N$, we denote $\range{n}$ the set $\{1,2,\dots,n\}$.
We identify a function $f : A\to B$ with its \emph{graph} $\{(a,b)\in A\times B\tq f(a) = b\}$, which we nonetheless write $\graph(f)$ whenever we wish to stress that a function is being viewed as a relation. Conversely, if $R\subseteq A\times B$ is the graph of a function $f: A\to B$, we denote $f$ by $\func(R)$.
Given a function $f : Y\to Z$ and $X\subseteq Y$, we denote by $f_{\restriction X}$ the restriction of $f$ to $X$. 

Given a signature $\Sigma$, the \emph{arity} of $R\in\Sigma$ is denoted $\ar(R)$, and $\STRUC[\Sigma]$ denotes the class of all finite $\Sigma$-structures. Whenever we discuss computational complexity, we tacitly assume some fixed standard encoding of finite $\Sigma$-structures as binary strings; as usual, all \emph{reasonable} encodings are polynomial-time interreducible, so our complexity statements are independent of this choice.
Given a logic $\mathcal{L}$, we denote by $\mathcal{L}[\Sigma]$ the set of formulae in $\mathcal{L}$ over $\Sigma$. We write $\freeVar(\varphi)$ for the set of free variables of $\varphi$, and $\Mod(\varphi)$ for the class of models of a sentence $\varphi$. For a formula $\varphi \in \mathcal{L}[\Sigma]$ and a $\Sigma$-structure $\mathfrak{A}$, we denote by $\varphi(\mathfrak{A})$ the set of assignments $v : \freeVar(\varphi) \to A$ such that $(\mathfrak{A}, v) \models \varphi$. By imposing an ordering on the free variables of $\varphi$, we can view $\varphi(\mathfrak{A})$ as a $|\freeVar(\varphi)|$-ary relation over $A$. 
Typically, this ordering is explicitly specified through the definition of the formula. For instance, if a formula $\varphi(x, y, z)$ is defined, we order the components of $\varphi(\mathfrak{A})$, starting with the $x$-component, followed by $y$, and then $z$. The function that maps $\mathfrak A$ to $\varphi(\mathfrak A)$ is the \emph{query} defined by $\varphi$.
Given two logics $\mathcal L$ and $\mathcal L'$, we write $\mathcal L\le \mathcal L'$ if any query definable in $\mathcal L$ is definable in $\mathcal L'$. 
We denote by
\[\varphi[\psi(\vec x) / \chi(\vec y,\vec x)]\]
the formula $\varphi$ where each occurrence of $\psi(\vec x)$ (for any tuple of variables $\vec x$) is substituted by $\chi(\vec y,\vec x)$ with variables suitably renamed to avoid capture.

To keep the horizontal length of formulae reasonable, we occasionally denote large disjunctions of the form $A\lor B\lor C$ and $A\land B\land C$ as
\[\begin{lorcases} A\\ B\\ C\end{lorcases} \quad\text{and}\quad\begin{landcases} A\\ B\\ C\end{landcases}\quad\text{respectively.}\]

We assume the reader is familiar with first-order logic, which we denote $\FO$. 
$\FP$ is the extension of $\FO$ with an operator $\ifp$ enabling the computation of inflationary fixed-points of definable second-order functions (i.e. mapping relations to relations).
Formally, given any formula $\varphi\in\FP$, $R$ a second-order variable and $\vec x$ a tuple of first-order variables with $\ar(R) = |\vec x|$, $\psi := (\ifp_{R,\vec x} \varphi)$  is also in $\FP$ and has free variables $\freeVar(\varphi)\setminus\{R,\vec x\}$. Given any structure $\mathfrak A$ and valuation $v$ of $\freeVar(\psi)$, $\psi(\mathfrak A,v)$ evaluates to the fixed-point of the sequence $X_n$ defined inductively by $X_0 = \emptyset$ and $X_{i + 1} = \varphi(\mathfrak A, v, X_i)\cup X_i$. 
The Immerman-Vardi theorem~\cite{immermanRelationalQueriesComputable1986} states that, over \emph{ordered} structures, $\FP$ captures $\P$: a class of ordered structures is $\FP$-definable if and only if it is decidable in polynomial time.

We now turn to the definition of $\FPC$, which is an extension of $\FP$ enabling counting. Suppose that $\le \notin \Sigma$. Given any $\Sigma$-structure $\mathfrak A$, its \emph{numerical domain}, denoted $\oA$, is the set $\{0,1,\dots,|A|\}$; it has $|A| + 1$ elements, one more than $A$, so that the counting operator below can return the value $|A|$. We write $\stroA$ for its initial segment $\{0,1,\dots,|A| - 1\}$, which has exactly $|A|$ elements and serves as the domain of canonical copies. We denote by $\mathfrak A^+$ the $\Sigma\sqcup\{\le\}$-structure whose domain is the \emph{disjoint union} $A\sqcup \oA$, where for each $R\in\Sigma$, $R^{\mathfrak A^+} = R^{\mathfrak A}$ and where $\le^{\mathfrak A^+}\subseteq \oA\times \oA$ is the natural linear order on $\oA$. We call $\mathfrak A^+$ the \emph{numerical extension of $\mathfrak A$}. $\FPC[\Sigma]$ is obtained by considering formulae of $\FP[\Sigma\sqcup\{\le\}]$, which, for any $\Sigma$-structure $\mathfrak A$, are to be evaluated on $\mathfrak A^+$.  The elements and numerical domains of the structure interact through the additional \emph{counting operator}: given a formula $\varphi$, $t := (\# x. \varphi)$ is a numerical term with free variables $\freeVar(\varphi)\setminus\{x\}$. Given a valuation $v$ of $\freeVar(t)$, $t(\mathfrak A,v)$ evaluates to $|\{ a\in A \tq (\mathfrak A,v,a)\models \varphi\}|$.

The Immerman-Vardi theorem ensures that all $\P$-computable arithmetic functions can be defined by $\FP$ on $\oA$. 
Note that, with this definition, arithmetic functions involving integers larger than $|A|$ require the encoding of those integers as tuples of numerical values. A $k$-tuple $\vec\mu\in(\oA)^k$ is read as the base-$(|A| + 1)$ expansion $\sum_{i = 1}^{k}(|A| + 1)^{k - i}\mu_i$, so that $k$-tuples encode exactly the integers in $\{0,\dots,(|A|+1)^k - 1\}$. Since $(|A| + 1)^k - 1\ge |A|^k$ for all $k\ge 1$, every integer up to $|A|^k$ is encodable as a $k$-tuple of numerical values, which is the only fact we use.
Given a tuple of numerical variables $\vec\mu$ and an integer $m \le |A|^{|\vec\mu|}$, we write $\vec\mu\gets m$ for the assignment mapping $\vec\mu$ to the unique tuple of numerical values which encodes $m$ in $\mathfrak A$.

We follow usual conventions when writing and handling formulae. In particular, we separate \emph{domain variables}, which range over $A$, composed of \emph{domain elements}, and \emph{numerical variables}, which range over $\oA$, the set of \emph{numerical elements}. This distinction determines the \emph{type} of a variable. 
When reasonable, we keep distinct names for domain variables (for instance $x,y,z$) and numerical variables (for instance $i,j,\mu,\nu,\lambda$). 
However, for reasons that will become clear in the following chapters, this strong separation of variable symbols can often come at the cost of readability. 
In particular, it is often convenient to consider tuples containing variables of different types.
Similarly, we usually distinguish between names for variables and their values. 
When this seems to hinder readability, we may break this rule.
The \emph{type} of a tuple $\vec x$ of variables, denoted $\type(\vec x)$, is the unique word $w\in\{\element,\numb\}^*$ whose $i$-th letter $w_i$ is $\element$ if $x_i$ is a domain variable, and $\numb$ otherwise.
We often need to consider the set underlying all potential valuations of a tuple $\vec x$, and therefore denote by $A^{\vec x}$ (or $A^{\type(\vec x)}$) the set $\prod_{i = 1}^{|\vec x|} A^{\type(\vec x)_i}$, where $A^{\element} := A$ and $A^{\numb} := \oA$. For example, if $\vec x = (y, \mu)$ with $y$ a domain variable and $\mu$ a numerical variable, $A^{\vec x} = A \times \oA$. 

We also allow types instead of arity in the definition of signatures. For instance, if a relation symbol $R$ has type $(\numb,\numb,\element)$, an interpretation of $R$ on $A$ is a subset of $\oA\times\oA\times A$. 
Finally, we extend this notation to relations themselves, so that if $X$ is a relation over $A$, $\type(X)$ is the unique type-word such that $X\subseteq A^{\type(X)}$. 

An \emph{isomorphism} between two $\Sigma$-structures $\mathfrak A,\mathfrak B$ is a bijection $f : A\to B$ such that, for any relation $R\in \Sigma$ and any tuple $\vec a\in A^{\type(R)}$,
\[\vec a\in R(\mathfrak A) \iff f^*(\vec a)\in R(\mathfrak B)\text{ 
where }f^*(\vec a)_i := \begin{cases}
    f(a_i)&\text{if }a_i\in A\\
    a_i&\text{if }a_i\in \oA
\end{cases}\]
Given a logic $\mathcal L\ge\FPC$ and a class $\mathcal C$ of $\Sigma$-structures, $\mathcal L$ \emph{canonises} structures in $\mathcal C$ if there are formulae $(\varphi_R)_{R\in\Sigma}$, each with $|\type(R)|$ free numerical variables, such that, for any structure $\mathfrak A\in\mathcal C$, 
\[(\stroA, (\varphi_R(\mathfrak A)_{R\in\Sigma}))\simeq \mathfrak A.\]
In \cite[Definition 3.3.2]{groheDescriptiveComplexityCanonisation2017}, canonisation is defined in a slightly more general fashion. In the context of this article, this difference is immaterial. 

Given a logic $\mathcal L$ and a class of structures $\mathcal C$, $\mathcal L$ is said to \emph{capture} $\P$ \emph{on} $\mathcal C$ if
\begin{itemize}
    \item for any polynomial-time query $Q$ over the signature $\Sigma$, there is a formula $\varphi\in \mathcal L$ such that, for any $\Sigma$-structure $\mathfrak A\in\mathcal C$, $Q(\mathfrak A) = \varphi(\mathfrak A)$; 
    \item for any sentence $\varphi\in\mathcal L$, there is a polynomial-time algorithm that accepts exactly the encodings of structures in $\Mod(\varphi)$; 
    \item moreover, the map sending a sentence $\varphi\in\mathcal L$ to such a polynomial-time algorithm is itself computable.
\end{itemize}
It follows directly from the Immerman-Vardi theorem that if $\mathcal L\ge \FPC$ canonises structures in $\mathcal C$, then $\mathcal L$ captures $\P$ on $\mathcal C$~\cite{groheDescriptiveComplexityCanonisation2017}. 

Throughout, we use one property of logics repeatedly and in an essential way: \emph{isomorphism-invariance}. A logic $\mathcal L$ is isomorphism-invariant if, whenever $f : \mathfrak A\to\mathfrak B$ is an isomorphism, $f$ maps $\varphi(\mathfrak A)$ onto $\varphi(\mathfrak B)$ for every $\varphi\in\mathcal L[\Sigma]$. Taking $\mathfrak B := \mathfrak A$, this says that the relation defined by any formula on $\mathfrak A$ is preserved by every automorphism of $\mathfrak A$. All logics considered in this article --- and, by definition, every logic in the sense of Gurevich~\cite{gurevich1988logic} --- are isomorphism-invariant. The negative results of \cref{sec:limits} use \emph{nothing else}: they apply to $\FPC$, $\FP + \rk$, $\FP + \ord$, $\CPT$ and infinitary logics alike, and in particular do not depend on any assumption about the complexity of formula evaluation.

\subsection*{Group Order Logic and the Schreier-Sims algorithm}
\label{sec:schreier_sims}
A \emph{permutation} of $A$ is a bijection from $A$ to $A$. A \emph{permutation group} over $A$ is a non-empty set of permutations of $A$ closed under composition and taking inverses. The group of all permutations of $A$ is called the \emph{symmetric group} on $A$, and denoted by $\Sym(A)$. Given $G\le\Sym(A)$, a subset $H\subseteq G$ is a \emph{subgroup} of $G$, denoted $H\le G$, if $H$ is itself a permutation group over $A$. Given $S\subseteq \Sym(A)$, the smallest subgroup of $\Sym(A)$ containing $S$ is denoted $\langle S\rangle$, and $S$ is then said to \emph{generate} $\langle S\rangle$.
If $H\le G$, the relation
\[ x\sim_H y := x^{-1}y\in H\]
defines an equivalence relation on $G$, and this equivalence relation induces an equipartition of $G$. Each part of this partition is called a \emph{left coset of $H$ in $G$}, and we denote by $\indexGrp{G}{H}$ the number of such parts. Clearly $\indexGrp{G}{H}= |G|/|H|$. 
Given $G$ a subgroup of $\Sym(A)$, and $a\in A$, the \emph{orbit of $a$ in $G$} is the set $a^G := \{\sigma(a)\tq \sigma\in G\}$. The \emph{stabiliser of $a$ in $G$} is the subgroup $G_a := \{\sigma\in G\tq \sigma(a) = a\}$.

We study Group Order Logic, which was introduced in~\cite{dahanGroup2025}. It is an extension of $\FPC$ with a new operator $\ord$ that computes the order of a group generated by a definable set of permutations.

The following three definitions build up to that operator.
First, the graph of permutation of $A$ is a binary relation on $A$, hence exactly the sort of object a formula with two free variables defines: 
\begin{definition}
    A formula $\varphi(\vec s,\vec t)$ \emph{defines 
    a permutation $\sigma\in \Sym(A^{\vec s})$ on $\mathfrak A$} if $\varphi(\mathfrak A) = \graph(\sigma)$. 
\end{definition}
To speak of a whole \emph{set} of permutations we let some of the variables act as parameters, each valuation of the parameters picking out one permutation:
\begin{definition}
    A formula $\varphi(\vec p,\vec s,\vec t)$ \emph{defines $S\subseteq \Sym(A^{\vec s})$ binding $\vec p$ in $\mathfrak A$} if
\[ S = \{\sigma\in \Sym(A^{\vec s})\tq \exists \vec a\in A^{\vec p},  \graph(\sigma) = \varphi(\mathfrak A,\vec a)\}.\]
In such a case, we denote the group $\langle S\rangle$ as $\GspanFP{\varphi}{\vec p.\vec s.\vec t}{\mathfrak A}$.
\end{definition}
Given as input such a formula $\varphi$, the $\ord$ operator returns the order of the group $\GspanFP{\varphi}{\vec p.\vec s.\vec t}{\mathfrak A}$ --- a number. As that number might be exponentially larger than $|A|$, it is given in binary over tuples of numerical elements. 

Here and below, we write $\varphi(\vec x)$ to mean that the free variables of $\varphi$ are \emph{exactly} those in $\vec x$; we are explicit whenever additional parameters are allowed.

\begin{definition}
    \label[definition]{def:ord_operator}
Given a $\Sigma$-formula $\varphi(\vec p,\vec s,\vec t)$, with $\type(\vec s) = \type(\vec t) = T$, and a $\Sigma$-structure $\mathfrak A$,
$(\ord_{\vec p.\vec s.\vec t} \varphi)$ is a $2|T|$-ary numerical relation that encodes $|\GspanFP{\varphi}{\vec p. \vec s.\vec t}{\mathfrak A}|$, that is, for any $\vec\mu\in (\oA)^{2|T|}$, 
$\vec\mu\in (\ord_{\vec p.\vec s.\vec t} \varphi)^\mathfrak A$ iff the $\left(\sum_{i = 1}^{2|T|} |A|^{2|T|-i}\mu_i\right)$-th bit of the binary decomposition of $|\GspanFP{\varphi}{\vec p. \vec s.\vec t}{\mathfrak A}|$ is a 1.
\end{definition}
\begin{example}
    Consider 
    \[\varphi(p_1,p_2,s,t) := 
    \begin{lorcases}
        s = p_1 \land  t = p_2\\
        s = p_2 \land t = p_1\\
        s = t \land s\ne p_1\land s\ne p_2
    \end{lorcases}\]
    On any structure $\mathfrak A$ and for any $a,b\in A$, $\varphi(\mathfrak A,a,b)$ is precisely the graph of the transposition $(a\ b)$. That is, on $(\mathfrak A,a,b)$, $\varphi$ \emph{defines} $(a\ b)$, and thus $\varphi$ defines $S := \{(a\ b)\tq a,b\in A\}$ on $\mathfrak A$. Since all permutations can be written as a product of transpositions, we have $\langle S\rangle = \Sym(A)$ and $(\ord_{p_1,p_2,s,t})$ evaluates to $|A|!$ on $\mathfrak A$. 
    In the same way,
    \[\psi(p_1,p_2,p_3,s,t) :=
    \begin{lorcases}
    s = p_1 \land t = p_2\\
    s = p_2 \land t = p_3\\
    s = p_3 \land t = p_1\\
    s = t \land s \ne p_1\land s\ne p_2\land s \ne p_3
    \end{lorcases}\]
    defines the set of all 3-cycles binding $p_1,p_2,p_3$, which generates the \emph{alternating group} over $A$, and thus $(\ord_{p_1,p_2,p_3.s.t}\psi)$ evaluates to $|A|!/2$.
\end{example}
It is shown in~\cite{dahanGroup2025} that $\FP + \ord > \FP + \rk$. Moreover, $\FP + \ord$ has polynomial-time model-checking, making it a new candidate for a logic capturing $\P$ on arbitrary finite structures. 
The $\ord$ operator is computable in polynomial time using the Schreier-Sims algorithm~\cite{simsComputational1970,simsComputation1971}. As this algorithm plays an important role in our results, we present it in detail below. 

Let $S$ be a set of permutations over a finite set $A$, and let
\[\langle S\rangle = H_0 \ge H_1 \ge H_2 \ge \dots H_m = 1\]
be a descending chain of subgroups of $\Sym(A)$. A \emph{Strong Generating Set} (SGS) for $S$ relative to $(H_i)$ is a family $(T_i)_{i < m}$ of sets of permutations such that $T_i$ is a \emph{transversal} of $H_{i + 1}$ in $H_i$, i.e., $T_i$ contains exactly one representative of each coset of $H_{i + 1}$ in $H_i$.
A SGS acts as a \emph{factorising scheme} for $\langle S\rangle$, in the sense that any $\sigma\in \langle S\rangle$ admits a \emph{unique} representation as a product of elements of $T_i$ for $i < m$. As such, $|\langle S\rangle| = |T_0|\cdot |T_1|\cdots |T_{m - 1}|$.
Moreover, given $(T_i)$, and provided membership in each $H_{i+1}$ can itself be tested in polynomial time for elements of $H_i$, this unique representation can be computed in polynomial time via the \emph{sifting procedure}, depicted in \cref{alg:sift}, which is an essential subroutine of the Schreier-Sims algorithm.
We follow the presentation in which a SGS records complete transversals $T_i$. While this is not the most efficient way to implement the algorithm, it suffices to place the problems at hand within $\P$. 

\begin{algorithm}
    \caption{Sifting Procedure}
    \label{alg:sift}
    \DontPrintSemicolon
    \KwIn{$(T_i)_{i< m}$, a SGS for $\langle S\rangle$ relative to $(H_i)_{i=0}^m$ and $\tau\in\Sym(A)$}
    \KwResult{$\mathbf{true}$ if $\tau\in\langle S\rangle$; otherwise a pair $(g,\lambda)$ recording the level $\lambda$ at which sifting failed, together with the residue $g$}
    $g := \tau$\;
    \For{$\lambda = 0$ \KwTo $m - 1$}{
        \If{$\exists! \sigma\in T_\lambda$  s.t. $(\sigma^{-1}\cdot g)\in H_{\lambda + 1}$}{
            $g \gets \sigma^{-1}\cdot g$\;        
        }
        \Else{
            \Return{$(g,\lambda)$}\;
        }
        }
        \lIf{$g = \Id$}{\Return{true}}
        \lElse{\Return{$(g,m)$}}
\end{algorithm}
Although SGSs can be constructed over any chain of subgroups, we are particularly interested in cases where the sifting procedure runs in polynomial time (w.r.t. $|A|$), a condition captured by the following definition: 
\begin{definition}
    \label[definition]{def:adequacy}
    A subgroup chain $H_0 \ge H_1 \ge \dots \ge H_m$ is $k$-\emph{adequate} if the following conditions hold:
\begin{itemize}
    \item Given $\sigma\in H_i$, whether $\sigma\in H_{i + 1}$ can be decided in time $O(|A|^k)$.
    \item $m < |A|^k$.
    \item For all $i < m$, the index $|H_i : H_{i + 1}| := |H_i| / |H_{i + 1}|$ is bounded by $|A|^k$.
\end{itemize}
\end{definition}

The second part of the Schreier-Sims algorithm is the \emph{construction procedure}, which computes a SGS for a given set of permutations $S$ and a subgroup chain $(H_i)_{i < m}$ from $\langle S\rangle$ to $1$, the trivial group.
The procedure is depicted in \cref{alg:sgs_cons}.
\begin{algorithm}
    \caption{Construction procedure}
    \label{alg:sgs_cons}
    \DontPrintSemicolon
    \KwIn{$S\subseteq \Sym(A)$ and $(H_i)_{i\le m}$, a subgroup chain for $\langle S\rangle$.}
    \KwOut{$(T_i)_{i < m}$, a SGS for $(H_i)$}
    $(T_i) := (\{\Id\},\{\Id\},\dots,\{\Id\})$\;
    $\ell := $ the elements of $S$, as a list\;
    \While{$\ell \ne\emptyset$}{
      $\tau :=  \ell.\mathrm{pop}()$\;
      \If{$\sift((T_i),\tau) = (g,i)$}{
      $T_i := T_i\cup\{g\}$\;
      $\ell.\mathrm{add}(
        \bigcup_{j \le i} gT_j \cup 
        \bigcup_{j \ge i} T_jg
      )$\;
      }
    }
    \Return{$(T_i)_{i < m}$}\;
    \vspace{1.635em}
\end{algorithm}

Here $\ell$ is a list: a mutable finite sequence supporting $\mathrm{pop}()$, which removes and returns its first element, and $\mathrm{add}(\cdot)$, which appends a set of permutations.
The elements of $S$ are therefore processed \emph{one after another}, and \cref{alg:sgs_cons} must fix some order in which to do so. Correctness does not depend on which order is chosen --- any order yields a SGS --- and in the classical setting the order is simply inherited from the encoding of the input. It is precisely because no such linear order comes for free in an isomorphism-invariant setting that the choice becomes an obstacle in \cref{sec:subgroup_comp}. 
The operation on line 7 is of particular importance in \cref{sec:subgroup_comp}.
It acts as a saturation procedure, ensuring that the resulting $(T_i)$ is indeed a SGS.
In particular, it ensures that for any two permutations $\sigma,\tau\in\bigcup T_i$, the sifting procedure accepts the products $\sigma\tau$ and $\tau\sigma$.
We refer the reader to~\cite{simsComputational1970,simsComputation1971} for a detailed discussion on the correctness of those two procedures. The runtime of this algorithm was studied by Furst, Hopcroft and Luks~\cite{furstPolynomialtimeAlgorithmsPermutation1980}, who showed that when the subgroup chain is $k$-adequate, those two procedures run in time $O(|A|^{f(k)})$ for some fixed function $f$ depending only on $k$.

Note that when a linear order on $A$ is provided, one can take as $(H_i)$ the \emph{chain of stabiliser subgroups of $\langle S\rangle$}: 
\begin{equation}\label{eqn:stab_adequate_chain}
    H_i := \{\sigma\in \langle S\rangle \tq\forall j \le i, \sigma(a_{j}) = a_{j}\}
\end{equation}
where $a_j$ is the $j$-th element of $A$. This yields the polynomial-time computability of the group order and group membership problems.

The Schreier-Sims algorithm enables a third operation of particular importance to our work, which relies on the following definition: 
\begin{definition}\label[definition]{def:accessibility}
A subgroup $K$ is \emph{$k$-accessible} from $G$ if there is a $k$-adequate subgroup chain $G = H_0 \ge H_1 \ge \dots \ge H_m = K$. 
\end{definition}
If $K$ is a $k$-accessible subgroup of $G$, the Schreier-Sims algorithm enables the computation of a generating set for $K$, given as input a generating set $S$ for $G$.
Indeed, suppose we are given a $k$-adequate subgroup chain $\langle S\rangle = K_0 \ge K_1 \ge \dots \ge K_l = K$ witnessing the accessibility of $K$. Then, if $(H_i)_{i \le m}$ with $H_m = 1$ is a $k$-adequate subgroup chain for $\langle S\rangle$, the following chain is also $k$-adequate: 
\[ \langle S\rangle \ge K_1\ge \dots\ge K_l = K  \ge H_1\cap K \ge H_2\cap K \ge \dots \ge H_m\cap K = 1.\]
Applying the Construction procedure to $S$ and this new chain yields a SGS $(T_i)_{i < m + 1}$ for $\langle S\rangle$, and $\langle \bigcup_{i = l}^{m} T_i \rangle = K_l$. Formally:
\begin{lemma}[\cite{furstPolynomialtimeAlgorithmsPermutation1980}]
    \label[lemma]{lem:accessible_schreier_sims}
    Let $\mathcal K$ be a class of structures, and $\mathbf{G,H}$ two functions mapping any $\mathfrak A\in\mathcal K$ to two groups $\mathbf H(\mathfrak A)\le \mathbf G(\mathfrak A)\le \Sym(A^d)$ for some fixed $d$.
    Suppose that:
    \begin{itemize}
        \item there is a polynomial-time algorithm which computes a generating set for $\mathbf G(\mathfrak A)$ on input any encoding of $\mathfrak A$; 
        \item for some fixed $k$, for all $\mathfrak A\in\mathcal K$, $\mathbf H(\mathfrak A)$ is $k$-accessible from $\mathbf G(\mathfrak A)$. 
    \end{itemize}
    Then, there is a polynomial-time algorithm which computes a generating set for $\mathbf H(\mathfrak A)$ on input any encoding of $\mathfrak A$.
\end{lemma}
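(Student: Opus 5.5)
The plan is to run the Schreier--Sims construction procedure (\cref{alg:sgs_cons}) on a concatenated subgroup chain, as sketched just before the statement. Fix an encoding of $\mathfrak A$. From it we get a linear order $a_1 < \dots < a_N$ on $A^d$, where $N = |A|^d$; for instance, take the lexicographic order induced by the order of elements in the encoding. First, run the assumed polynomial-time algorithm to obtain a generating set $S$ with $\langle S\rangle = \mathbf G(\mathfrak A)$. Since $S$ is produced in polynomial time, it has polynomial size.

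Next, build the chain. By hypothesis there is a $k$-adequate chain $\mathbf G(\mathfrak A) = K_0 \ge K_1 \ge \dots \ge K_l = \mathbf H(\mathfrak A)$. I read the hypothesis, as the paper implicitly does, as including that membership of $\sigma \in K_i$ in $K_{i+1}$ is decidable in time $O(|A|^k)$ by an algorithm that is uniform in $\mathfrak A$, i.e.\ given the encoding. Append the pointwise-stabiliser chain $H_j\cap \mathbf H(\mathfrak A)$, where $H_j$ fixes $a_1,\dots,a_j$ as in \eqref{eqn:stab_adequate_chain}.

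We then check that the concatenation is $k'$-adequate for some $k'$ depending on $k$ and $d$.
\begin{itemize}
\item The length is at most $l + N < |A|^k + |A|^d$.
\item Each stabiliser index is at most $|a_{j+1}^{K}| \le |A|^d$, where $K = \mathbf H(\mathfrak A)$.
\item Membership in $H_{j+1}\cap K$, for an element already in $H_j\cap K$, only requires checking that $a_{j+1}$ is fixed. This takes time polynomial in $|A|^d$.
\end{itemize}

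Now apply the construction procedure to $S$ with this chain. By Furst--Hopcroft--Luks it runs in time $O(|A|^{f(k')})$ and returns an SGS $(T_i)_{i<l+N}$. The sifting step needs the membership tests, and these are available along the whole chain. Because $(T_i)_{i\ge l}$ is a family of transversals for the tail of the chain, which runs from $K$ down to $1$, the factorisation property shows that every element of $K$ is a product of elements of $\bigcup_{i\ge l}T_i$. Conversely, each $T_i$ with $i\ge l$ lies in the $i$-th group of the chain, which is contained in $K$. Hence $\bigcup_{i\ge l}T_i$ generates $\mathbf H(\mathfrak A)$, and we output it.

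The main obstacle is the uniformity and effectiveness of the membership tests for the $K_i$ given only the encoding. The chain itself also needs to be computable, or at least its length $l$ and its tests need to be usable. If the statement is meant loosely, I would make this an explicit assumption, as in Furst--Hopcroft--Luks. Apart from this, the argument is just bookkeeping around the cited runtime bound and the correctness of \cref{alg:sgs_cons}.
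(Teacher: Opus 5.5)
Your proposal is correct and is essentially the paper's argument: concatenate the witnessing chain $\mathbf G(\mathfrak A)=K_0\ge\dots\ge K_l=\mathbf H(\mathfrak A)$ with the pointwise-stabiliser chain intersected with $\mathbf H(\mathfrak A)$, run the construction procedure on this chain, and output the tail transversals $\bigcup_{i\ge l}T_i$. Your caveat is fair: the membership tests must be uniformly computable from the encoding, which the paper's informal $k$-adequacy definition leaves implicit (its later notion of witnessed accessibility makes this explicit in the logical setting).
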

While the $\ord$ operator clearly expresses the order of a group and enables membership testing~\cite[Lemma III.3]{dahanGroup2025}, it is not clear whether generating sets for accessible subgroups can always be defined in $\FP + \ord$. This is the main technical question we address in this article.

Let us first formally define the problem instances we aim to study. Looking at~\cref{def:adequacy,def:accessibility}, notice that part of this definition is computational: we expect membership in $G_{i + 1}$ to be decidable in polynomial-time for elements in $G_i$. In order to express the ability of a logic $\mathcal L$ to define generating sets of accessible subgroups as a closure property (in the flavour of \cref{lem:accessible_schreier_sims}), we restrict our attention to those accessible subgroups for which those membership tests are \emph{definable} in $\mathcal L$.
\begin{definition}
  \label[definition]{def:witnessed_accessibility}
    Let $\Sigma$ be a signature, $\mathcal K\subseteq \STRUC[\Sigma]$, and suppose that $\mathbf{G,H}$ are functions, mapping any structure $\mathfrak A\in \mathcal K$ to groups $\mathbf H(\mathfrak A)\le\mathbf G(\mathfrak A)\le \Sym(A^T)$ for some fixed type $T$.

    A logic $\mathcal L$ \emph{witnesses the $k$-accessibility of $\mathbf H$ from $\mathbf G$ in $\mathcal K$} if there is an $\mathcal L[\Sigma]$ formula $\varphi_\in(\vec \mu,X)$ where $\vec\mu$ is a tuple of numerical variables, and $X$ is a second-order relation variable of type $T^2$, such that for any $\mathfrak A\in\mathcal K$ with $n := |A|$:
    \begin{itemize}
        \item The following defines a chain of subgroups $(\mathbf G_i(\mathfrak A))_{i = 0}^{n^k}$:\begin{itemize}
            \item $\mathbf G_0(\mathfrak A) := \mathbf G(\mathfrak A)$
            \item $\mathbf G_{i + 1}(\mathfrak A) := \{\sigma\in \mathbf G_i(\mathfrak A)\tq (\mathfrak A,i,\graph(\sigma))\models \varphi_\in\}$, where $i$ is interpreted as a $k$-tuple of numerical values (assigned to $\vec\mu$) 
        \end{itemize} 
        \item $\mathbf G_{n^k}(\mathfrak A) = \mathbf H(\mathfrak A)$
        \item For all $i<n^k$, $\indexGrp{\mathbf G_i(\mathfrak A)}{\mathbf G_{i + 1}(\mathfrak A)}\le n^k$
    \end{itemize}
\end{definition}
In \cref{sec:limits}, we exhibit pairs $(\mathbf G,\mathbf H)$ such that $\FPC$ witnesses the accessibility of $\mathbf H$ from $\mathbf G$ and defines generating sets for $\mathbf G$, while \emph{no logic} can define a generating set for $\mathbf H$. In \cref{sec:subgroup_comp}, we show on the other hand that if we additionally assume that $\FP + \ord$ defines an \emph{ordered} generating set for $\mathbf G$, then it defines an (ordered) generating set for $\mathbf H$.
In \cref{sec:col_graph_aut}, we use this property of $\FP + \ord$ to show its ability to define the automorphism groups of a large class of graphs.
\section{The undefinability of generating sets for accessible subgroups}
\label{sec:limits}
We investigate the ability of a logic $\mathcal L$ to define generating sets for which it witnesses accessibility. 
We show in this section that isomorphism-invariance is, in this regard, incompatible with the representation of groups as generating sets of permutations. 

Indeed, it is a direct consequence of isomorphism-invariance that if $\varphi$ is a formula, and $\tau$ an automorphism of $\mathfrak A$, it must hold that $\varphi(\mathfrak A)^\tau = \varphi(\mathfrak A)$
where, if $R$ is a relation on $A$, \[R^\tau := \{(\tau(a_1),\dots,\tau(a_k))\tq (a_1,\dots,a_k)\in R\}\] is the image of $R$ under $\tau$.
If such a relation $R$ is the graph of a permutation $\sigma$, $R^\tau$ is the graph of $\sigma^{\tau^{-1}} = \tau\sigma\tau^{-1}$:
\[R^\tau = \{(\tau(a),\tau\sigma(a))\tq a\in A\} = \{ (a,\tau\sigma\tau^{-1}(a))\tq a\in A\}.\]
This implies that any set of permutations definable in a logic must be closed under conjugation by any automorphism of the structure. We call such a set of permutations \emph{canonical} (w.r.t. $\mathfrak A$).
Formally:
\begin{lemma}
    \label[lemma]{lem:canonical_generating_set}
    Consider a logic $\mathcal L$ and let $\varphi\in\mathcal L[\Sigma]$ be a formula with $k + 2$ free variables. For $\mathfrak A\in \STRUC[\Sigma]$, let
    $S^{\mathfrak A} := \{\perm(\varphi(\mathfrak A,\vec a)) \tq \vec a\in A^k\}$ be the set of permutations defined by $\varphi$ on $\mathfrak A$ parameterised by the first $k$ variables. Then: 
    \begin{itemize}
    \item $|S| < |A|^k$
    \item For any $\tau\in\Aut(\mathfrak A)$,
    $
        S = \{ \tau\rho\tau^{-1}\tq \rho\in S\} =: S^{\tau}
    $
    \end{itemize}
    That is, $S$ is a canonical set of permutations over $\mathfrak A$ of polynomial size. 
\end{lemma}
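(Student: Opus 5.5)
The plan has two independent parts, one for each item of the statement. Both rely only on the fact that $S$ is indexed by parameter tuples and on isomorphism-invariance of $\mathcal L$.

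\emph{Size bound.} The map $\vec a\mapsto \perm(\varphi(\mathfrak A,\vec a))$ is a partial function from $A^k$ onto $S$: each parameter tuple $\vec a$ for which $\varphi(\mathfrak A,\vec a)$ is the graph of a permutation contributes exactly one element of $S$. Hence $|S|\le |A|^k$. The statement asks for the strict inequality $|S|<|A|^k$, which literally fails in degenerate cases, for instance when $|A|=1$, or when every tuple yields a distinct permutation. I will therefore prove the bound $|S|\le|A|^k$, which is all that is used later, namely that $S$ has polynomial size. If numerical parameters are allowed, the same count gives $|S|\le(|A|+1)^k$, which is still polynomial.

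\emph{Canonicity.} Fix $\tau\in\Aut(\mathfrak A)$. By isomorphism-invariance, applied to the formula $\varphi$ with all $k+2$ free variables, the relation $\varphi(\mathfrak A)\subseteq A^{k+2}$ satisfies $\varphi(\mathfrak A)^\tau=\varphi(\mathfrak A)$. Unfolding this at a fixed parameter gives $\varphi(\mathfrak A,\tau(\vec a))=\varphi(\mathfrak A,\vec a)^\tau$ as binary relations on $A$. If $\varphi(\mathfrak A,\vec a)=\graph(\rho)$, the computation just before the lemma gives $\graph(\rho)^\tau=\graph(\tau\rho\tau^{-1})$. So $\varphi(\mathfrak A,\tau(\vec a))$ is the graph of the permutation $\tau\rho\tau^{-1}$, which shows $\tau\rho\tau^{-1}\in S$, that is, $S^\tau\subseteq S$. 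For the reverse inclusion, apply the same argument to $\tau^{-1}\in\Aut(\mathfrak A)$ to get $S^{\tau^{-1}}\subseteq S$, and conjugate by $\tau$ to obtain $S\subseteq S^\tau$. Alternatively, note that $\rho\mapsto\tau\rho\tau^{-1}$ is injective and $S$ is finite, so $S^\tau\subseteq S$ already forces equality.

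\emph{Main care point.} The only delicate step is justifying the passage from invariance of the whole relation $\varphi(\mathfrak A)$ to the parameterwise identity $\varphi(\mathfrak A,\tau\vec a)=\varphi(\mathfrak A,\vec a)^\tau$. This needs the fact that $\tau$ acts coordinatewise on tuples. If some variables are numerical, they are fixed by $\tau$, which matches the convention for $f^*$ in the definition of isomorphism, so the identity still holds.
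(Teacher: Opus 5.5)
Your proof is correct and follows the same route as the paper, whose justification is the paragraph just before the lemma: isomorphism-invariance gives $\varphi(\mathfrak A)^\tau=\varphi(\mathfrak A)$ and $\graph(\rho)^\tau=\graph(\tau\rho\tau^{-1})$, with the size bound coming from the indexing by parameter tuples. You are also right that only $|S|\le|A|^k$ holds in general, since the strict inequality can fail, and this non-strict bound is all the later arguments need.
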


In \cref{thm:strong_limit}, we exhibit a class of structures on which $\FPC$ witnesses the accessibility of some group $\mathbf H$, while any canonical set of permutations that generates $\mathbf H$ has exponential size.

As a first step, and to illustrate this result, we show that there are some groups for which a generating set can be computed in polynomial time, while no small generating sets of permutations for those groups are canonical. 
\begin{lemma}
    \label[lemma]{lem:symmetric_generators_limit}
    There is a class of graphs $\mathcal K$ and a function $\mathbf H$ mapping $\mathfrak A\in \mathcal K$ to a subgroup of $\Sym(A)$ such that:
    \begin{itemize}
        \item There is a polynomial-time Turing machine that computes a generating set for $\mathbf H(\mathfrak A)$ when given as input any encoding\footnote{for any reasonable encoding scheme.} of $\mathfrak A$, for $\mathfrak A\in\mathcal K$. 
        \item For any $\mathfrak A\in\mathcal K$, any generating set $S$ of $\mathbf H(\mathfrak A)$ that is canonical w.r.t. $\mathfrak A$ has size $\ge (|A|/2)!$.
    \end{itemize}
\end{lemma}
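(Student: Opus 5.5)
We need a class of graphs whose automorphism group is huge, while $\mathbf H(\mathfrak A)$ is a small group (so a generating set is easy to compute) which nonetheless has no small canonical generating set. The natural choice is to take $\mathfrak A$ to be a perfect matching on $2n$ vertices, with edges $\{a_i,b_i\}$ for $i\in\range{n}$, so $|A| = 2n$ and $\Aut(\mathfrak A) = \mathbb Z_2\wr \Sym(n)$. For $\mathbf H(\mathfrak A)$ we take a subgroup $H$ of the elementary abelian group $E := \langle (a_i\ b_i)\tq i\in\range{n}\rangle\cong\mathbb Z_2^n$. Since $H$ is elementary abelian, it has a generating set of size at most $n$. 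We choose $H$ so that any generating set can be found in polynomial time; the cleanest choice is the even-weight subgroup $H := \{\prod_{i\in I}(a_i\ b_i)\tq |I|\text{ even}\}$, generated by $\{(a_1\ b_1)(a_i\ b_i)\tq 2\le i\le n\}$. The algorithm reads off the matching and outputs these $n-1$ double transpositions, which handles the first bullet.

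For the second bullet, let $S$ be a canonical generating set of $H$. Each element of $S$ is $\sigma_I := \prod_{i\in I}(a_i\ b_i)$ for some $I$. Conjugating by $\tau\in\Aut(\mathfrak A)$ that induces $\pi\in\Sym(n)$ on the edges gives $\sigma_{\pi(I)}$. So $S$ corresponds to a union of $\Sym(n)$-orbits of subsets of $\range{n}$, that is, to a set $W$ of sizes: $S=\{\sigma_I\tq |I|\in W\}$.

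The problem is that this even-weight $H$ has a small canonical generating set, namely all double transpositions, of size $\binom n2$. So the bound $\ge (|A|/2)!$ forces a different $H$: the set of weights must only ever give sets that are large. We therefore want $H$ to be canonical-generation-hostile. The fix is to replace $\mathbb Z_2$ on each edge by a structure whose automorphisms mix more, and let $H$ be a subgroup that is not normalised by $\Aut(\mathfrak A)$ in a "small orbit" way.

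Concretely, we take $A$ to consist of two sets $X = \{x_1,\dots,x_n\}$ and $Y=\{y_1,\dots,y_n\}$ with a complete bipartite structure, or with no edges at all, so that $\Aut(\mathfrak A)$ contains $\Sym(X)\times\Sym(Y)$. We let $H$ be generated by a single permutation, e.g. $H = \langle \sigma\rangle$ where $\sigma$ is the product of transpositions $(x_i\ y_i)$ along a bijection $X\to Y$ chosen by the algorithm from the encoding. Then $H\cong\mathbb Z_2$, and a generating set is computed trivially. Any generating set $S$ of $H$ must contain $\sigma$. If $S$ is canonical, it contains all conjugates $\tau\sigma\tau^{-1}$ for $\tau\in\Sym(X)\times\Sym(Y)$, which are the $n!$ distinct perfect-matching involutions between $X$ and $Y$. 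Hence $|S|\ge n! = (|A|/2)!$.

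The main subtlety is definitional rather than technical. $\mathbf H$ must be a function of $\mathfrak A$, but here it depends on a choice. I would resolve this by fixing $\mathbf H(\mathfrak A)$ as the group determined by the bijection $x_i\mapsto y_i$ induced by the canonical labelling of the elements in the standard encoding, or by enriching $\mathcal K$ with index information that is invisible to automorphisms. The paper's framing ("any encoding") suggests $\mathbf H$ should be fixed per structure with domain $A$. So I would simply define $\mathbf H(\mathfrak A)$ via a fixed choice of bijection for each $\mathfrak A$, and note that the algorithm, given any encoding, outputs $\{\sigma\}$ for that fixed $\sigma$. This is the step I expect to need the most care, since the algorithm must reconstruct $\sigma$ from an arbitrary encoding. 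Taking $A=\range{2n}$ with $\sigma=\prod_i(i\ n+i)$ handles it: the domain names themselves determine $\sigma$.
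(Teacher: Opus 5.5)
Your final construction fails, and the failure is at the step you flagged as delicate. The group $\mathbf H(\mathfrak A)=\langle\sigma\rangle$, with $\sigma=\prod_i(x_i\ y_i)$, is not normalised by $\Aut(\mathfrak A)$. For $n\ge 2$, already $\tau=(x_1\ x_2)\in\Aut(\mathfrak A)$ gives $\tau\sigma\tau^{-1}=(x_1\ y_2)(x_2\ y_1)\prod_{i\ge 3}(x_i\ y_i)\notin\mathbf H(\mathfrak A)$.

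This breaks the first bullet under its own quantifier ``any encoding''. Suppose an ordering $<_2$ of $A$ is the image of $<_1$ under an automorphism $\tau$. Both orderings yield the same string, so the machine's single output decodes to some set $P$ under $<_1$ and to $\tau P\tau^{-1}$ under $<_2$. Both can generate $\mathbf H(\mathfrak A)$ only if $\tau$ normalises it. Letting the domain names fix $\sigma$ amounts to assuming one particular encoding, which is what ``any encoding'' rules out.

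Even granting that, the second bullet holds only vacuously. Since $\tau\langle S\rangle\tau^{-1}=\langle\tau S\tau^{-1}\rangle$, every canonical $S$ generates an $\Aut(\mathfrak A)$-normalised group, so $\langle\sigma\rangle$ has no canonical generating set at all. The $n!$ conjugates you list lie outside $\mathbf H(\mathfrak A)$, so your argument really proves that the $S$ you are bounding cannot exist. That is not the phenomenon the lemma is meant to exhibit: a group that does have canonical generating sets, all of them huge. That phenomenon is what \cref{thm:strong_limit} exploits, and by \cref{lem:canonical_generating_set} only $\Aut(\mathfrak A)$-normalised groups are relevant to definability.

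The missing idea is to take an isomorphism-invariant $\mathbf H$. For such an $\mathbf H$, a generating set need not contain any \emph{prescribed} element, so the lower-bound argument must change. The paper uses the structure you already had, $\mathcal K=\{K_{n,n}\}$ with parts $C,D$, and sets $\mathbf H(\mathfrak A):=\Aut(\mathfrak A)$.

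The first bullet is then immediate from any encoding: output the transpositions inside each part plus one part-exchanging permutation. For the second bullet, argue as follows:
\begin{itemize}
\item The side-preserving automorphisms form a subgroup of index $2$, so a generating set $S$ must contain \emph{some} swap $\sigma$, and $u:=\sigma_{\restriction C}:C\to D$ is a bijection.
\item For $\alpha\in\Sym(C)$, the permutation $\hat\alpha$ acting as $\alpha$ on $C$ and trivially on $D$ is an automorphism, and $(\hat\alpha\sigma\hat\alpha^{-1})_{\restriction C}=u\circ\alpha^{-1}$.
\item These $n!$ conjugates are pairwise distinct and now genuinely lie in $\mathbf H(\mathfrak A)$, so canonicity gives the non-vacuous bound $|S|\ge n!=(|A|/2)!$.
\end{itemize}
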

\begin{proof}
Consider $\mathcal K := \{ K_{n,n}, n\in\mathbb N\}$, where $K_{n,n}$ is the complete bipartite graph with two parts of $n$ vertices each (represented in \cref{fig:knn}).

\begin{figure}[h]
    \centering
    \includegraphics[width=0.55\linewidth]{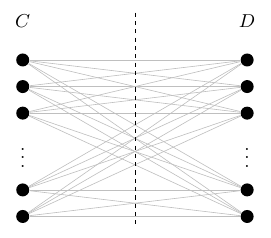}
    \caption{$K_{n,n}$ with its two parts $C$ and $D$.}
    \Description{A schematic complete bipartite graph: two vertical columns of bullet vertices labelled C and D, with light edges between them and vertical dots indicating there are n vertices on each side; a dashed vertical separator marks the bipartition.}
    \label{fig:knn}
\end{figure}

Let $\mathbf H(\mathfrak A) := \Aut(\mathfrak A)$.
It is easy to compute a generating set for $\Aut(\mathfrak A)$ for any $\mathfrak A\in\mathcal K$ in polynomial time: we identify the two parts $C,D$ of $\mathfrak A$, pick any $f\in \Sym(A)$ such that $f(C) = D$ and consider 
\[ S := \{(i\ j)\tq i,j\in C\}\cup \{(i\ j)\tq i,j\in D\}\cup \{f\}.\]
Clearly, this set $S$ can be computed in polynomial time, and $\langle S\rangle = \Aut(\mathfrak A_n)$.
On the other hand, suppose $\langle T\rangle = \Aut(\mathfrak A)$ and $T$ is canonical. Since the side-preserving automorphisms form a subgroup of index $2$, $T$ must contain at least one permutation $\sigma$ with $\sigma(C) = D$; write $u := \sigma_{\restriction C} : C\to D$ and $v := \sigma_{\restriction D} : D\to C$, both bijections.
For $\alpha\in\Sym(C)$, the permutation $\hat\alpha$ acting as $\alpha$ on $C$ and as the identity on $D$ is an automorphism of $\mathfrak A$, and
\[(\hat\alpha\circ\sigma\circ\hat\alpha^{-1})_{\restriction C} = u\circ\alpha^{-1}.\]
As $u$ is a bijection, distinct $\alpha$ yield distinct maps $u\circ\alpha^{-1}$, hence distinct conjugates of $\sigma$. Since $T$ is canonical it contains all of them, so $|T|\ge|\Sym(C)| = n! = (|A|/2)!$.
\end{proof}


The same argument can be adapted to an accessible subgroup, therefore demonstrating that the representation of groups as generating sets is strictly weaker in an isomorphism-invariant context, since this problem can be solved in polynomial time using the Schreier-Sims algorithm (\cref{lem:accessible_schreier_sims}).

\begin{theorem}
    \label[theorem]{thm:strong_limit}
    There is a class of structures $\mathcal K$, two functions $\mathbf G,\mathbf H$ mapping $\mathfrak A\in\mathcal K$ to subgroups of $\Sym(A)$ and a constant $l$ such that:
    \begin{itemize}
        \item There is an $\FPC$ formula $\varphi_{\mathbf G}$ which, on any $\mathfrak A\in\mathcal K$, defines a generating set for $\mathbf G(\mathfrak A)$.
        \item $\FPC$ witnesses the $l$-accessibility of $\mathbf H$ from $\mathbf G$ in $\mathcal K$. 
        \item All canonical generating sets for $\Aut(\mathfrak A)$ have size $\ge 2^{|A|/4}$.
    \end{itemize}
\end{theorem}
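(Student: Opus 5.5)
The plan is to modify the $K_{n,n}$ argument of \cref{lem:symmetric_generators_limit} so that the "swap the two halves" phenomenon is spread over $m$ independent coordinates. Each coordinate will contribute a factor $2$ to the number of conjugates of any generator that swaps the halves. The base group $\mathbf G$ will be a direct product of small local groups, so it has an obvious polynomial-size canonical generating set. The subgroup $\mathbf H$ will be obtained by forcing the local swap bits to agree, one coordinate at a time, which gives a chain with steps of index $2$. We take $\mathbf H(\mathfrak A) := \Aut(\mathfrak A)$, which matches the third item of the statement.

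\emph{Construction.} For each $m$, let $\mathfrak A_m$ have domain $A = \bigcup_{i\le m}\{c_i,c_i',d_i,d_i'\}$, so $|A| = 4m$. It carries the following relations:
\begin{itemize}
\item a relation $B$ whose classes are the blocks $B_i := \{c_i,c_i',d_i,d_i'\}$;
\item a linear order on the blocks, given as a binary relation between their elements;
\item a pairing relation $P$ with classes $\{c_i,c_i'\}$ and $\{d_i,d_i'\}$;
\item an equivalence $Q$ with exactly two classes, $C := \{c_i,c_i'\}_i$ and $D := \{d_i,d_i'\}_i$.
\end{itemize}
Since the blocks are ordered, any automorphism fixes each block setwise. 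Inside $B_i$ it preserves the two pairs, and globally it either preserves $Q$ or swaps its two classes. Hence $\Aut(\mathfrak A_m)$ consists of the independent flips $(c_i\ c_i')$ and $(d_i\ d_i')$, together with the permutations that swap $C$ and $D$ in every block simultaneously. We let $\mathbf G(\mathfrak A_m)$ be the same group with $Q$ forgotten, namely $\prod_i D_4^{(i)}$, where $D_4^{(i)}$ is the stabiliser of the pair partition of $B_i$.

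\emph{Upper bounds.} $\varphi_{\mathbf G}$ defines, with parameters ranging over elements, all flips $(x\ x')$ with $P(x,x')$, together with all local swaps $(x\ y)(x'\ y')$ where $x,x',y,y'$ lie in one block, $P(x,x')$, $P(y,y')$ and $\neg P(x,y)$. This is an $\FO$-definable, canonical set that generates $\mathbf G$.

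For accessibility, an element $\sigma\in\mathbf G$ has a swap bit $s_j(\sigma)\in\{0,1\}$ at each block $j$, namely whether $\sigma$ maps $c_j$ into $\{d_j,d_j'\}$. Put $\mathbf G_{i}$ to be the set of $\sigma\in\mathbf G$ with $s_j(\sigma) = s_1(\sigma)$ for all $j\le i$; indices beyond $m$ (up to $n^l$) repeat $\mathbf G_m$. The formula $\varphi_\in(\mu,X)$ is in $\FPC$: it identifies the $\mu$-th block through the block order, counting blocks below a given one, and reads the bit off $\graph(\sigma) = X$. Each step has index at most $2$, and $\mathbf G_m$ is exactly the set of elements with all swap bits equal, which is $\Aut(\mathfrak A_m)$. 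So $l=1$ suffices, apart from bookkeeping on tuple lengths.

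\emph{Lower bound, the main step.} Let $T$ be a canonical generating set of $\Aut(\mathfrak A_m)$. The side-preserving automorphisms form an index-$2$ subgroup, so $T$ contains some $\sigma$ with all swap bits equal to $1$. In block $i$ the restriction $\sigma_i$ is one of the four elements of $D_4$ that swap the pairs. Conjugating by the flip $\phi_i := (c_i\ c_i')$, which is an automorphism of $\mathfrak A_m$, changes $\sigma_i$ to $\phi_i\sigma_i\phi_i^{-1}\neq\sigma_i$. Indeed, a pair swap $x\mapsto y$ composed with the flip gives $c_i\mapsto\sigma_i(c_i')\ne\sigma_i(c_i)$. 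Since conjugation by $\phi_i$ does not affect the other blocks, the conjugates of $\sigma$ by $\prod_{i\in I}\phi_i$, for $I\subseteq[m]$, are pairwise distinct. Canonicity then gives $|T|\ge 2^m = 2^{|A|/4}$.

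I expect the main obstacle to be checking carefully that $\Aut(\mathfrak A_m)$ is exactly $\mathbf G_m$. In particular, the block order must rigidify the blocks without also fixing the $C$/$D$ sides. To guarantee this, I would encode the order only between blocks (through $B$), and never as a relation that distinguishes $c$'s from $d$'s.
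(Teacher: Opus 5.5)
Your proposal is correct and follows essentially the paper's proof. Your structure is first-order interdefinable with the paper's coloured $K_{2m,2m}$: $Q$ is non-adjacency, and $P$ is non-adjacency within a colour class. Your lower bound is exactly the paper's argument that the within-side flips $\theta_J$ act freely by conjugation on the side-swapping generators.

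The differences are minor, and in a couple of places tidier than the paper:
\begin{itemize}
\item You take the smaller base group $\prod_i D_4^{(i)}$ with an index-$2$ chain comparing swap bits to block $1$. The paper uses $\prod_i\Sym(C_i\cup D_i)$ with an edge-preservation chain of index at most $48$.
\item You replace the paper's four-case conjugation check by the uniform observation $\phi_i\sigma_i\phi_i^{-1}(c_i)=\sigma_i(c_i')\neq\sigma_i(c_i)$.
\end{itemize}
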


\begin{proof}
    We construct, for each $n\in\mathbb N$, a structure $\mathfrak A_n$ such that the class $\mathcal K := \{\mathfrak A_n\tq n\in\mathbb N\}$ satisfies the conditions of the theorem.
    Fix $n$ an integer. For any integer $i$, let $C_i := \{(i,0),(i,1)\}$ and $D_i := \{(i,2),(i,3)\}$. Let $C := \bigcup_{i = 1}^n C_i$ and $D := \bigcup_{i = 1}^n D_i$.
    Consider the $\{\preceq,E\}$-structure $\mathfrak A_n$ such that:
    \begin{itemize}
        \item $A = C\cup D$
        \item $a\preceq^{\mathfrak A_n} b$ iff there are some $i\le j$ such that $a\in C_i\cup D_i$ and $b\in C_j\cup D_j$.
        \item $E^{\mathfrak A_n} = (C\times D)\cup (D\times C)$
    \end{itemize}
Intuitively, $\mathfrak A_n$ is a coloured simple graph with $4n$ vertices such that each colour class induces a subgraph with $4$ vertices, isomorphic to $K_{2,2}$ (equivalently, to a $4$-cycle), namely the complete bipartite graph between $C_i$ and $D_i$. See \cref{fig:colourclasses}. Finally, the whole graph is isomorphic to $K_{2n,2n}$, and each colour-class intersects both of the two parts of this bipartite graph at two distinct vertices. That is, we consider precisely the same structure as in the previous theorem, except a colouring of colour-class size 4 has been added. 

\begin{figure}[h]
    \centering
    \includegraphics[width=0.60\linewidth]{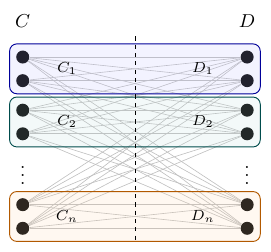}
    \caption{The coloured graph $\mathfrak A_n$: $K_{2n,2n}$ with a colouring into classes of size $4$, each split as $(C_i,D_i)$ across the bipartition.}
    \Description{A complete bipartite graph with two vertical columns of vertices labelled C and D. A dashed vertical line separates the two sides. Three coloured rounded rectangles indicate colour classes, each enclosing two vertices on the left and two on the right, and labelled in reading order as C1, D1, C2, D2, Cn, Dn.}
    \label{fig:colourclasses}
\end{figure}

We now show that $\mathcal K := \{\mathfrak A_n, n\in\mathbb N\}$ has the desired properties.
First, for any $n$, $\mathfrak A_n$ is a $4$-bounded colour-class graph, and as such the automorphism groups of structures in $\mathcal K$ is computable in polynomial-time~\cite{babai_monte-carlo_1979}.

We now show that the $l$-accessibility of $\Aut(\mathfrak A_n)$ is definable in $\FP + \ord$ (uniformly on $n$):
Consider $\mathbf G(\mathfrak A_n):= \prod_{i = 1}^n \Sym(C_i\cup D_i)$. Obviously, $\Aut(\mathfrak A_n)\le \mathbf G(\mathfrak A_n)$.
Moreover, $\mathbf G(\mathfrak A_n)$ is generated by
\[\left\{\left((i,\lambda)\ (i,\mu)\right)\tq i\le n,\ \lambda,\mu\in\{0,1,2,3\}\right\}.\]
We now define $\varphi_\mathbf G$, such that, for any $n$, $\varphi_\mathbf G (\mathfrak A_n)$ defines a generating set of $\mathbf G(\mathfrak A_n)$.
\[\varphi_\mathbf G(p,q,s,t) :=\begin{landcases}
     p\preceq q\\
     q\preceq p\\
     \begin{lorcases}
        s = p\land t = q\\
        s = q\land t = p\\
        s\ne p\land s\ne q\land s = t
     \end{lorcases}
    \end{landcases}
\]
Let us now show that $\Aut(\mathfrak A_n)$ is accessible from $\mathbf G(\mathfrak A_n)$, and that this accessibility is \emph{definable} in $\FPC$. 
Consider $\mathbf G_{0}(\mathfrak A_n) := \mathbf G(\mathfrak A_n)$, and for any $k<n$, let $\mathbf G_{k}(\mathfrak A_n)$ be the subgroup of $\mathbf G(\mathfrak A_n)$ which preserves the edge relation on the union of the first $k$ colour-classes, i.e.\ which stabilises setwise the set of edges having both endpoints in $\bigcup_{i\le k}(C_i\cup D_i)$.
It is easy to see that $\mathbf G_k(\mathfrak A_n)$ is the setwise-stabiliser of the partition $\left\{\bigcup_{i = 1}^k C_i,\bigcup_{i = 1}^k D_i\right\}$ in $\mathbf G(\mathfrak A)$ (for the natural action of $\mathbf G(\mathfrak A_n)$ restricted to those colour-classes).
As such, any two $\sigma,\tau\in\mathbf G_k(\mathfrak A_n)$ belong to the same $\mathbf G_{k + 1}(\mathfrak A_n)$-coset if they act in the same way on both \emph{the partition} $\left\{\bigcup_{i = 1}^k C_i,\bigcup_{i = 1}^k D_i\right\}$, and on \emph{the points in} $C_{k + 1}\cup D_{k+1}$. As there are two possible ways to act on the former, and $|S_4| = 24$ ways to act on the latter, we have $|\mathbf G_k(\mathfrak A_n): \mathbf G_{k + 1}(\mathfrak A_n)| \le 48$, for any $n$.
It only remains to show that a formula $\varphi_{\in}$ that defines conditional membership to $\mathbf G_{k + 1}$ from $\mathbf G_k$ can be constructed in $\FPC$. This can be done as follows: 
\[\varphi_\in(k,X) := \forall x,y\in \bigcup_{i = 1}^{k+1} C_i\cup D_i, \exists x',y', \begin{landcases}
    X(x,x')\\
    X(y,y')\\
    E(x,y)\iff E(x',y')
\end{landcases}
\]
The fact that membership to the first $k$ colour-classes (used here in the quantification of $x$ and $y$) can be defined in $\FPC$ is straightforward. 

Finally, it is left to show that any canonical generating set for $\Aut(\mathfrak A_n)$ has size at least $2^n$. The idea is similar to that of \cref{lem:symmetric_generators_limit}, but the combinatorics of the colour classes makes the bookkeeping slightly more delicate, so we argue by an orbit count rather than by exhibiting the swaps explicitly.

Call $\sigma\in\Aut(\mathfrak A_n)$ a \emph{swap} if $\sigma(C) = D$ (equivalently, $\sigma$ exchanges the two sides of the bipartition rather than preserving them). Since $\mathfrak A_n$ is connected and $\{C,D\}$ is its unique bipartition, every $\sigma\in\Aut(\mathfrak A_n)$ either preserves both $C$ and $D$ or is a swap, and the swaps form a coset of the index-$2$ subgroup of side-preserving automorphisms. In particular, every generating set of $\Aut(\mathfrak A_n)$ contains at least one swap.

Fix a swap $\sigma$. Since $\sigma$ preserves each colour class $C_i\cup D_i$ (being colour-preserving) while exchanging $C_i$ and $D_i$, its restriction $\sigma_i$ to the $4$-element set $C_i\cup D_i$ is a permutation mapping $C_i$ onto $D_i$ and $D_i$ onto $C_i$. There are exactly four such permutations of $C_i\cup D_i$: two are products of two disjoint transpositions, and two are $4$-cycles whose two diagonals are $C_i$ and $D_i$. 

For $i\le n$, let $\theta_i := ((i,0)\ (i,1))$ be the transposition of the two vertices of $C_i$, which is an automorphism of $\mathfrak A_n$ since $\mathfrak A_n$ is complete bipartite between $C$ and $D$ and $\theta_i$ preserves both sides and every colour class. For $J\subseteq\range n$ put $\theta_J := \prod_{i\in J}\theta_i\in\Aut(\mathfrak A_n)$; these $2^n$ automorphisms form an elementary abelian subgroup $\Theta\le\Aut(\mathfrak A_n)$.

We claim that $\Theta$ acts freely\footnote{i.e. $g\cdot x = x$ implies $g = 1$.} on the set of swaps by conjugation, so that every swap has at least $2^n$ conjugates under $\Aut(\mathfrak A_n)$. Indeed, conjugation acts componentwise on colour classes, so it suffices to check that $\sigma_i^{\theta_i} \ne\sigma_i$ for each of the four possible $\sigma_i$. Identifying $C_i\cup D_i$ with $\{0,1,2,3\}$ so that $C_i = \{0,1\}$ and $D_i = \{2,3\}$, we have $\theta_i = (0\ 1)$, and:
\[
  (0\,2)(1\,3)^{\;(0\,1)} = (1\,2)(0\,3),\qquad
  (0\,3)(1\,2)^{\;(0\,1)} = (1\,3)(0\,2),
\]
\[
  (0\,2\,1\,3)^{\;(0\,1)} = (1\,2\,0\,3) = (0\,3\,1\,2),\qquad
  (0\,3\,1\,2)^{\;(0\,1)} = (0\,2\,1\,3).
\]
Hence $\theta_J\sigma\theta_J^{-1} = \sigma$ forces $J = \emptyset$, and the map $J\mapsto\theta_J\sigma\theta_J^{-1}$ is injective.

If $S$ is a canonical generating set for $\Aut(\mathfrak A_n)$, as argued above, $S$ contains a swap $\sigma$. By \cref{lem:canonical_generating_set}, $S$ is closed under conjugation by $\Aut(\mathfrak A_n)$, hence contains the $2^n$ pairwise distinct permutations $\theta_J\sigma\theta_J^{-1}$ for $J\subseteq\range n$. Therefore $|S|\ge 2^n = 2^{|A|/4}$, concluding our proof.
\end{proof}

Having shown that the isomorphism-invariance hinders the representation of accessible subgroups by generating sets, we now turn to the study of a restricted case where this operation can be defined in $\FP + \ord$.

\section{Accessible subgroups of permutation groups with ordered generators}
\label{sec:subgroup_comp}

Looking back at the Schreier-Sims algorithm, we can see that it uses a linear order in two different ways. First, a linear order on the domain of the permutations is needed to consider the chain of stabiliser subgroups described in \cref{eqn:stab_adequate_chain}, which constitutes an adequate chain of subgroups in the sense of \cref{def:adequacy}.
Second, a linear ordering on the generating set itself is implicitly used in \cref{alg:sgs_cons} to choose which elements to insert in the SGS first.\footnote{This is hidden in the use of $\mathrm{pop()}$ on line 4.}
In the classical setting, this ordering on $S$ is simply induced by the ordering of $A$ used to encode the input.

In the context of isomorphism-invariant computation, a linear order on the domain is generally unavailable, preventing the use of point-stabiliser chains. Nevertheless, a group may still admit a definable ordered generating set even when the underlying domain cannot be canonically ordered.
This specific configuration appears in several classes of structures used to separate extensions of $\FPC$ from $\P$, most notably the class of graphs with \emph{abelian colours} introduced by Pakusa~\cite{pakusaLinear2015} (see \cref{sec:col_graph_aut}).
For these structures, $\FPC$ defines an \emph{ordered} generating set for a group $\mathbf G(\mathfrak A)$ from which $\Aut(\mathfrak A)$ is accessible, satisfying the requirement for our algorithm.

In this section, we show that this weaker condition—the presence of a linear order \emph{on the generating set} of the ambient group $\mathbf G$—is sufficient to define a generating set for any $k$-accessible subgroup of $\mathbf G$ in $\FP + \ord$. This result forms the algorithmic basis for the applications to coloured graphs in \cref{sec:col_graph_aut}.
\begin{definition}
\label[definition]{def:ordered_generating_set}
Let $\Sigma$ be a signature, $\mathcal K\subseteq\STRUC[\Sigma]$ and $\mathbf G$ be a function mapping each $\mathfrak A\in\mathcal K$ to a group $\mathbf G(\mathfrak A)\le \Sym(A^T)$ for some fixed type $T$.
A logic $\mathcal L$ \emph{defines an ordered generating set for $\mathbf G$ in $\mathcal K$} if there is a formula $\varphi_{\mathbf G}(\vec\mu,\vec s,\vec t)$ where $\type(\vec s) = \type(\vec t) = T$ and $\vec\mu$ a tuple of \emph{numerical variables} such that, for any $\mathfrak A\in\mathcal K$,
\[\GspanFP{\varphi_{\mathbf G}}{\vec\mu.\vec s.\vec t}{\mathfrak A} = \mathbf G(\mathfrak A).\]
\end{definition}

\begin{theorem}
    \label[theorem]{thm:ordered_schreier_sims}
Let $\Sigma$ be a signature, $\mathcal K\subseteq\STRUC[\Sigma]$ and $\mathbf{G,H}$ be functions, mapping any structure $\mathfrak A\in\mathcal K$ to two groups $\mathbf H(\mathfrak A)\le \mathbf G(\mathfrak A)\le\Sym(A^T)$ for some fixed type $T$.

If $\FP + \ord$ defines an ordered generating set for $\mathbf G$ and witnesses the $k$-accessibility of $\mathbf H$ from $\mathbf G$, then $\FP + \ord$ defines an ordered generating set for $\mathbf H$.
\end{theorem}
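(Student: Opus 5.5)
The plan is to replace the Schreier–Sims construction procedure, which would require a point-stabiliser chain for $\mathbf H(\mathfrak A)$ that we cannot define, by an iterated application of Schreier's lemma along the witnessed chain $\mathbf G_0(\mathfrak A)\ge\dots\ge\mathbf G_{n^k}(\mathfrak A)=\mathbf H(\mathfrak A)$. We use the $\ord$ operator to keep the generating sets of polynomial size.

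Throughout, every permutation we manipulate is named by a tuple of numerical elements, so the names carry a canonical linear order. Its graph is defined by a single inductive definition. Concretely, we build by an $\ifp$ (with the level $i$ as a numerical stage counter) a relation $R(\vec\iota,\vec\nu,\vec s,\vec t)$ such that, for each level $\vec\iota\gets i$ and index $\vec\nu$, $R(\mathfrak A,\vec\iota,\vec\nu)$ is the graph of the $\nu$-th element of an ordered generating set $S_i$ of $\mathbf G_i(\mathfrak A)$. Level $0$ is given by $\varphi_{\mathbf G}$. Every element of level $i+1$ is a product of a constant number of level-$i$ elements and their inverses. Such a composition of definable permutation graphs is first-order definable, so the whole construction stays inside an $\FP+\ord$ fixed point. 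The final answer is $R(\mathfrak A, n^k,\cdot,\cdot,\cdot)$, which defines an ordered generating set binding $\vec\nu$.

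The step from $S_i$ to $S_{i+1}$ has three parts, each following a deterministic procedure driven only by the order on indices.
\begin{enumerate}
\item \emph{Transversal.} We build an ordered transversal $T_i$ of $\mathbf G_{i+1}(\mathfrak A)$ in $\mathbf G_i(\mathfrak A)$ by a coset-orbit closure. Start with $\Id$ and repeatedly form products $ts$ with $t\in T_i$ and $s\in S_i$, in lexicographic order of (index of $t$, index of $s$). Add $ts$ as a new element whenever it is not equivalent to any existing $t'$. Equivalence is tested by $\varphi_\in(i, \graph(t'^{-1}ts))$, which is legitimate since $t'^{-1}ts\in\mathbf G_i(\mathfrak A)$. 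Since $|T_i|\le n^k$, this closure has polynomially many stages. We name the elements of $T_i$ by pairs (position in $T_i$, defining product), so they remain constant-length words over $S_i$.
\item \emph{Schreier generators.} By Schreier's lemma, $\mathbf G_{i+1}(\mathfrak A)$ is generated by the elements $ts\,(\overline{ts})^{-1}$, where $\overline{ts}\in T_i$ is the unique representative equivalent to $ts$, found again via $\varphi_\in$. These are indexed by pairs of indices, hence ordered.
\item \emph{Pruning.} We keep the $j$-th Schreier generator only if
\[\ord(\text{first } j\text{ generators})>\ord(\text{first } j-1\text{ generators}).\]
Both orders are computed by the $\ord$ operator applied to the fragment of $R$ under construction, restricted by an index bound. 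The kept generators still generate $\mathbf G_{i+1}(\mathfrak A)$. Each kept generator at least doubles the order, so $|S_{i+1}|\le\log_2|\Sym(A^T)|$, which is polynomial and independent of $i$. We reindex the kept generators by their rank, using counting.
\end{enumerate}

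The main obstacle is keeping the representation polynomial across the $n^k$ levels.

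Two failure modes must be ruled out.
\begin{itemize}
\item Without pruning, the number of Schreier generators multiplies by up to $n^k$ at each level. This is exactly why $\ord$, rather than mere membership testing, is needed, and pruning removes it.
\item If elements were represented as words over $S_0$, their length would grow exponentially with the level. We avoid this by never unfolding: each level refers only to graphs already defined at the previous level, so each stage of the fixed point performs a bounded number of compositions of definable relations.
\end{itemize}

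Finally, isomorphism-invariance is automatic: all choices are dictated by numerical indices inherited from $\varphi_{\mathbf G}$, never by the domain $A$.
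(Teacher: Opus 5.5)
Your level-by-level descent via Schreier's lemma is a genuinely different route from the paper's. It works, but only after you fix a concrete error in the transversal step. You extend by right multiplication $t\mapsto ts$, but you test equivalence with $\varphi_\in(i,\graph(t'^{-1}ts))$, i.e.\ by \emph{left} cosets $t'\mathbf G_{i+1}(\mathfrak A)$. Right multiplication does not act on left cosets, so the closure can stop before reaching every coset, and your generators $ts(\overline{ts})^{-1}$ need not lie in $\mathbf G_{i+1}(\mathfrak A)$. For instance, take $S=\{(1\,2),(1\,3)\}$ generating $\Sym(\{1,2,3\})$ and $H=\langle(1\,2)\rangle$. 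The closure stops at $\{\Id,(1\,3)\}$, since $(1\,3)(1\,2)\in(1\,3)H$ and $(1\,3)^2=\Id$. It misses one of the three cosets and produces the ``generator'' $(1\,3)(1\,2)(1\,3)^{-1}=(2\,3)\notin H$, so the output generates the whole group. The fix is to test $\varphi_\in(i,\graph(ts\,t'^{-1}))$ instead. Right multiplication does act on right cosets $\mathbf G_{i+1}(\mathfrak A)x$, and the closure from $\Id$ reaches all of them because the finite group $\mathbf G_i(\mathfrak A)$ is generated by $S_i$ as a monoid. Since $ts\,t'^{-1}$ still lies in $\mathbf G_i(\mathfrak A)$, $\varphi_\in$ applies, and Schreier's lemma then holds in the form you state. (Also, elements of $T_i$ are not constant-length words over $S_i$; their depth in the closure can reach $n^k$. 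This is harmless, because each stored graph is one composition away from an earlier stored one.) With this repair, the pruning by strict increase of $\ord$ and the stage-indexed fixed point go through as you describe.

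The paper instead simulates Schreier--Sims over the whole chain at once. It maintains a partially strong generating set: a transversal table for all of $\mathbf G_0(\mathfrak A)\ge\dots\ge\mathbf G_{n^k}(\mathfrak A)$, plus a residual list for $\mathbf H(\mathfrak A)$. It inserts the generators given by $\varphi_{\mathbf G}$ one at a time by sifting with $\varphi_\in$, then saturates by sifting products of pairs of entries, with correctness following the Furst--Hopcroft--Luks factorisation argument. There, $\ord$ is used only to refuse additions to the residual list that do not enlarge the group it generates. Your route avoids sifting, prototypes and saturation, has a shorter correctness proof (Schreier's lemma plus Lagrange doubling), and also yields ordered generating sets for every intermediate $\mathbf G_i(\mathfrak A)$. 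The price is that $\ord$ is indispensable at every level: without pruning, the generating sets can grow by a factor $n^k$ per level over $n^k$ levels. In the paper's construction the table holds at most $n^{2k}$ entries regardless, and $\ord$ only controls the residual list. This is exactly what lets the proof of \cref{corol:use_of_ord_op} drop $\ord$ for abelian $\mathbf G$: sifting only products within a row bounds the residual list by counting. That $\FPC$ version is the one needed for \cref{thm:pbcg}, and your approach has no evident $\ord$-free analogue.
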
 

This result relies on a partial simulation of the Schreier-Sims algorithm, which was already given in \cref{alg:sift,alg:sgs_cons}.
More precisely, we build upon the chain of subgroups used to prove \cref{lem:accessible_schreier_sims}: if the accessibility of $H\le \Sym(X)$ is witnessed by the decreasing chain of subgroups $(G_i)_{i = 0}^m$, we can obtain a generating set for $H$ in polynomial time by evaluating \cref{alg:sgs_cons} on the subgroup chain
\[ G_0\ge \dots\ge G_m = H\ge \Stab_H(x_1)\ge\dots\ge \Stab_H(x_1,\dots,x_{|X|}) = 1\]
where $(x_i)_{i = 1}^{|X|}$ is an ordered enumeration of $X$. 
Indeed, in such a case, if \cref{alg:sgs_cons} outputs a SGS $(T_i)_{i = 0}^{m + |X|}$, $\bigcup_{i \ge m} T_i$ constitutes a generating set for $H$.

In our current case however, where $\mathbf{H}(\mathfrak A)\le \Sym(A^T)$, such an ordered enumeration of the domain of the group is not achievable and as such, we cannot expect to simulate those algorithms fully.
 
To overcome this issue, we use the $\ord$ operator to bypass the need for this chain of point-wise stabilisers. That is, our method to prove \cref{thm:ordered_schreier_sims} is to construct iteratively 
a \emph{partially strong} generating set, collecting a coset transversal only for the naturally ordered part of the adequate chain of subgroups
\[\mathbf G_0(\mathfrak A)\ge \dots \ge \mathbf G_{n^k}(\mathfrak A)\]
while we keep an unstructured set of generators for $\mathbf G_{n^k}(\mathfrak A)$.
Here and throughout the section we normalise the chain to have exactly $n^k$ steps, indexed by $i\in\{0,\dots,n^k\}$ so that it has $n^k$ transversals. This is no loss of generality: \cref{def:witnessed_accessibility} bounds the length of the chain by $n^k$, and any shorter chain is padded to that length by repeating its last group, which contributes only trivial transversals.
Therefore, this partially strong generating set structure is made of two parts: a structured array of $m$ rows and $n^k$ columns, defining coset representatives for $\mathbf G_{i + 1}(\mathfrak A)$ in $\mathbf G_i(\mathfrak A)$, and an unstructured list of permutations in $\mathbf H(\mathfrak A)$.
Moreover, the sifting procedure should be adapted according to this new notion of partially strong generating sets.

We structure the proof as follows: in the first subsection, we formalise this notion of partially strong generating sets, and show how it can be represented in extensions of $\FPC$.
Then, we provide an implementation in $\FP + \ord$ of the sifting procedure in this new setting, and conclude with the translation of the construction procedure, and the resulting formula defining a generating set for $\mathbf H(\mathfrak A)$.

\subsection{Partially strong generating sets}
We now show how to adapt the structure of SGS to defer the handling of the stabiliser chain to the $\ord$ operator.
Let us first provide an intuition: a SGS for a chain $G_0\ge\dots\ge G_d = 1$ consists of one transversal per step of the chain. We keep the same information but stop the chain early, at $G_d = \mathbf H(\mathfrak A)$ rather than at the trivial group. The transversals then account for the part of $G_0$ above $\mathbf H(\mathfrak A)$, and what is left over --- the group $\mathbf H(\mathfrak A)$ itself --- is described by a plain generating set. A PSGS is thus a SGS truncated at $\mathbf H(\mathfrak A)$, together with a generating set for the remainder.
\begin{definition}
  \label[definition]{def:psgs}
  Given a $k$-adequate chain of subgroups $(G_i)_{i = 0}^d$ for $G\le \Sym(X)$, a \emph{partially strong generating set (PSGS)} for $(G_i)$ is a pair $(\mathcal R,\mathcal S)$, where:
  \begin{itemize}
    \item $\mathcal R$ assigns to each $i < d$ a transversal $\mathcal R(i)$ of $G_{i + 1}$ in $G_i$, whose elements are moreover indexed as $\mathcal R(i) = \{\mathcal R(i,\lambda)\tq \lambda < |X|^k\}$, the index $\lambda$ ranging over an initial segment; such an indexing exists because $|G_i : G_{i+1}|\le|X|^k$ by $k$-adequacy;
    \item $\mathcal S\subseteq \Sym(X)$ is a generating set for $G_d$, with $|\mathcal S|\le |X|^k$.
  \end{itemize}
  We call $\mathcal R$ the \emph{transversal table} of the PSGS, and $\mathcal S$ the \emph{residual list} of the PSGS.
\end{definition}
That is, a PSGS is exactly the same as a SGS, except we do not require the subgroup chain to end with $G_d = 1$. When this equality fails, a generating set for $G_d$ must be provided instead. 

Throughout this section, let $k$ be a fixed integer and $T$ be a fixed type. Fix a structure $\mathfrak A$ and let $n := |A|$.
In order to prove \cref{thm:ordered_schreier_sims}, we aim to represent a PSGS for the chain:
\[\mathbf G_0(\mathfrak A)\ge \dots \ge \mathbf G_{n^k}(\mathfrak A) = \mathbf H(\mathfrak A).\]
Since, by assumption, both the index of $\mathbf{G}_{i + 1}(\mathfrak A)$ in $\mathbf{G}_{i}(\mathfrak A)$ and the length of this sequence are bounded by $|A|^k$, we can index the permutations in $\mathcal R$ with two tuples of $k$ variables. 
During the definition of the construction procedure, we show that we can bound the size of $\mathcal S$ by $|A|^{2|T|}$.
This yields the following representation of PSGS in $\FP + \ord$:
\begin{definition} 
   Let $(\mathbf K_i)_{i = 0}^{n^k}$ be functions mapping a $\Sigma$-structure $\mathfrak A$ with $|A| = n$ to groups over $A^T$, such that $(\mathbf K_i(\mathfrak A))_{i = 0}^{n^k}$ is $k$-adequate.
  A pair of relations $(R,S)$ over $\mathfrak A$ \emph{represents} a PSGS $(\mathcal R,\mathcal S)$ for $(\mathbf K_i(\mathfrak A))$ if:
\begin{itemize}
  \item $\type(R) = \numb^{2k}\cdot T^2$, and for each $\vec\mu,\vec\nu\in (\oA)^k$, $R^{\mathfrak A}(\vec\mu,\vec \nu) := \{(\vec s,\vec t)\in A^{T\cdot T}\tq (\vec\mu,\vec\nu,\vec s,\vec t)\in R^{\mathfrak A}\}$ is either empty or the graph of a permutation. Intuitively, $\vec\mu$ addresses a row of the transversal table and $\vec\nu$ a position within that row.
  \item For any fixed $\vec\mu$, the set of all permutations $\sigma\in \Sym(A^T)$ such that $\graph(\sigma) = R^{\mathfrak A}(\vec \mu,\vec \nu)$ for some $\vec \nu$ is equal to $\{\mathcal R(i,j),j\le |A|^k\}$, where $i$ is the integer encoded by $\vec \mu$.
  \item $\type(S^{\mathfrak A}) = \numb^{2|T|}\cdot T^2$ and for any $\vec \lambda\in (\oA)^{2|T|}$, $S^{\mathfrak A}(\vec \lambda)$ is either empty or the graph of a permutation.
  \item The set of all permutations $\sigma\in\Sym(A^T)$ such that $\graph(\sigma) = S^{\mathfrak A}(\vec \lambda)$ for some $\vec\lambda$ is equal to $\mathcal S$.
  \item For each $\vec \mu$, the set of $\vec \nu$ such that $R^{\mathfrak A}(\vec\mu,\vec\nu)\ne\emptyset$ form an initial segment of $(\oA)^k$ (for the natural encoding). The same holds for $S^{\mathfrak A}$ w.r.t. $\vec \lambda$. 
\end{itemize}
\end{definition}
This last condition mainly plays a role as an invariant in the iterative definitions to come.
In what follows, we often identify a PSGS $(\mathcal R,\mathcal S)$ with its representation $(R^{\mathfrak A},S^{\mathfrak A})$.

With those definitions in mind, notice that, if we manage to build a representation of a PSGS for $(\mathbf G_i(\mathfrak A))_{i = 0}^{n^k}$, its residual list $\mathcal S$ is by \cref{def:psgs} a generating set for $\mathbf G_{n^k}(\mathfrak A) = \mathbf H(\mathfrak A)$, and it is indexed by tuples of numerical variables, hence \emph{ordered}. This is exactly the aim of \cref{thm:ordered_schreier_sims}.

Remark that we require the rows and columns of both $R^{\mathfrak A}$ and $S^{\mathfrak A}$ to be indexed over the \emph{ordered} domain. 
As a counterpart, we need to show how to order those indexing sets. While this is effectively carried out throughout the following subsections, let us provide a high level intuition for this right away.

By the assumptions of \cref{thm:ordered_schreier_sims}, there is an $\FP + \ord$ formula $\varphi_{\mathbf G}(\vec p,\vec s,\vec t)$ providing an \emph{ordered} enumeration $\mathcal X = (g_1,\dots,g_{n^{|p|}})$ of a generating set for the largest group $\mathbf G(\mathfrak A)$ in the sequence. 
To construct a partially strong generating set for $(\mathbf G_i(\mathfrak A))$, we insert the elements of $\mathcal X$ one by one into an initially trivial PSGS (i.e. where $\mathcal S$ and each list in $\mathcal R$ contain only the identity permutation), so as to obtain, after $\lambda$ such insertions, a PSGS for the chain of subgroups:
\begin{equation}
  \label{eqn:partial_chain_subgroups}
  \Gspan{g_1,\dots, g_\lambda}\cap \mathbf G_0(\mathfrak A)\ge \dots \ge \Gspan{g_1,\dots, g_\lambda}\cap\mathbf{G}_{n^k}(\mathfrak A).
\end{equation}
In \cref{alg:sgs_cons}, a saturation step (given on line 7) was needed to ensure that the SGS did in fact represent a group (in particular, that the membership test defined by the sifting procedure was closed by composition). 
Similarly, to obtain an actual PSGS for the chain of subgroups of \cref{eqn:partial_chain_subgroups}, we need to apply a saturation procedure to the PSGS, that is defined in \cref{lem:sat_sgs}.

Since elements are added iteratively, in an order depending solely on the ordering of $\mathcal X$, we can order cosets of $\mathbf G_{i + 1}(\mathfrak A)$ in $\mathbf G_i(\mathfrak A)$ depending on the order in which elements belonging to those cosets were found. In the same way, we can order the indexing of $S$ depending on the order in which those generators of $\mathbf H(\mathfrak A)$ were found.
Relying on those ordered indexings by induction hypothesis, the saturation procedure can also be conducted in an iterative, canonically ordered fashion.

Having defined the central data structure of our proof, we are now ready to delve into the definition in $\FP + \ord$ of the partial simulation of the Schreier-Sims framework. Let us fix, for the rest of this section, the type $T$ of tuples on which the groups defined by $\mathbf G$ and $\mathbf H$ act.

\subsection{The sifting procedure}
\newcommand\siftRes{\mathsf{siftRes}}
\newcommand\siftLevel{\mathsf{siftLevel}}
Since the construction procedure makes calls to the sifting procedure, it seems sensible to start by defining the latter.
In what follows, suppose that $R$ and $S$ are second-order variables representing the current value of our PSGS. We now define $\Sigma\sqcup\{R\}$-$\FPC$ formulae $\siftRes(\Gamma_\sigma,s,t),\siftLevel(\Gamma_\sigma,\vec \mu)$ which specify the result of the first $n^k$ steps of the sifting procedure on the chain of subgroups represented by $(R^{\mathfrak A},S^{\mathfrak A})$ on input $\sigma$, for any suitable structure $\mathfrak A$.

\begin{lemma}
  \label[lemma]{lem:fpc_sift}
 \emph{(Same assumptions as \cref{thm:ordered_schreier_sims})} There are $\FPC[\Sigma]$ formulae $\siftRes(R,\Gamma_\sigma,\vec s,\vec t)$ and $\siftLevel(R,\Gamma_\sigma,\vec \mu)$ such that, given a structure $\mathfrak A\in\mathcal K$, a transversal table $R^{\mathfrak A}$ over $\mathfrak A$, and a permutation $\sigma\in \Sym(A^T)$:
      \begin{itemize}
        \item $(\mathfrak A,R^{\mathfrak A},\graph(\sigma),\vec a)\models \siftLevel$ iff $\vec a$ encodes the index $i$ of the maximal row of $R^\mathfrak A$ visited during the computation of \cref{alg:sift} on $\sigma$.
        \item If \cref{alg:sift} does not terminate on input $\sigma$ before $\lambda = n^k$, then 
        $\siftRes(\mathfrak A,R^{\mathfrak A},\graph(\sigma))$ is the graph of the value of $g$ before entering the for-loop with $\lambda = n^k$. 
        Otherwise, $\siftRes(\mathfrak A,R^{\mathfrak A},\graph(\sigma))$ is the graph of the permutation which is the last value of $g$ prior to the termination of $\sift(\sigma)$. 
      \end{itemize}\leavevmode
\end{lemma}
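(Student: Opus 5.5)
The plan is to write the sifting procedure of \cref{alg:sift}, restricted to the first $n^k$ levels, as a single inflationary fixed-point in which each stage encodes the current residue $g$ together with the level reached. The fixed point will define a relation $Z$ of type $\numb^k\cdot T^2$, where $Z(\vec\mu)$ is intended to be $\graph(g_i)$ for $i$ the integer encoded by $\vec\mu$, and $g_i$ is the value of $g$ on entering the loop with $\lambda = i$. We only add the slice for $\vec\mu+1$ once the slice for $\vec\mu$ exists and the test at level $i$ succeeds.

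The base case sets $Z(\vec 0) := \graph(\sigma)$, which is simply a copy of $\Gamma_\sigma$. For the inductive step at level $i$, we quantify over the column index $\vec\nu$ with $R^{\mathfrak A}(\vec\mu,\vec\nu)\neq\emptyset$. For each such $\vec\nu$, the graph of $\rho_{\vec\nu}^{-1}\cdot g_i$, where $\rho_{\vec\nu}=\func(R^{\mathfrak A}(\vec\mu,\vec\nu))$, is first-order definable from the two graphs: $(\vec s,\vec t)$ is in it iff there is $\vec u$ with $(\vec s,\vec u)\in Z(\vec\mu)$ and $(\vec t,\vec u)\in R(\vec\mu,\vec\nu)$.

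Membership of this product in $\mathbf G_{i+1}(\mathfrak A)$ is tested with the witnessing formula $\varphi_\in$ from \cref{def:witnessed_accessibility}. We substitute this definable graph for the second-order variable $X$ and assign $i$ to the numerical variables, that is, we form $\varphi_\in[X(\vec x,\vec y)/\chi(\vec\mu,\vec\nu,\vec x,\vec y)]$. This is legitimate because $\varphi_\in$ correctly decides membership in $\mathbf G_{i+1}$ for elements of $\mathbf G_i$, and $\rho^{-1}_{\vec\nu}g_i$ lies in $\mathbf G_i$ whenever $g_i$ does.

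The ``$\exists!$'' condition of \cref{alg:sift} is expressed by a counting term, or by an $\exists$ together with a uniqueness clause, over the permutations $\rho_{\vec\nu}$. Uniqueness has to be stated over distinct permutations rather than distinct indices $\vec\nu$, since the representation may list the same permutation under several indices; equality of two slices of $R$ is first-order. When the condition holds, $Z(\vec\mu+1)$ is set to the graph of $\rho^{-1}g_i$ for that unique $\rho$. Successor on $k$-tuples of numerical elements is $\FP$-definable from $\le$.

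Because each stage only adds the next slice, the inflationary semantics causes no trouble: every slice, once written, is never modified. From $Z$ we read off both outputs. $\siftLevel(\vec\mu)$ holds iff $\vec\mu$ is the maximal index with $Z(\vec\mu)\ne\emptyset$, capped at $n^k$; this is the row at which sifting stopped, or $n^k$. $\siftRes(\vec s,\vec t)$ is $\exists\vec\mu\,(\siftLevel(\vec\mu)\land Z(\vec\mu,\vec s,\vec t))$. This yields the last value of $g$ before termination, or the value on entering the loop with $\lambda=n^k$, as the statement requires.

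The main obstacle I expect is the meaning of the membership test when $g_i\notin\mathbf G_i$. For an arbitrary input $\sigma$ this can happen, and then $\varphi_\in$ has no specified behaviour. I will argue that this is harmless, because the lemma only asks us to simulate \cref{alg:sift} literally, with the test as implemented by $\varphi_\in$. This is also how it is used later, where inputs are products of elements of $\mathbf G(\mathfrak A)$, and by induction $g_i\in\mathbf G_i$ whenever level $i$ is reached. A secondary point is checking that everything is evaluated in $\FPC[\Sigma\sqcup\{R,\Gamma_\sigma\}]$, that is, that the substitution into $\varphi_\in$ stays within the logic. Since $\varphi_\in$ is assumed in $\FP+\ord$, strictly the formulae land in $\FP+\ord$. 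If $\varphi_\in$ is $\FPC$, as in the intended applications, they are $\FPC$, and I would state this explicitly.
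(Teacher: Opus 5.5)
Your proposal is correct and follows the paper's proof essentially verbatim. Like the paper, you use one inflationary fixed point whose $\vec\mu$-th slice holds $\graph(g_i)$. The step composes with the inverse of a row entry by the same first-order formula and tests the result by substituting it into $\varphi_\in$, and $\siftLevel$ and $\siftRes$ are read off as the last nonempty slice. Your extra care goes slightly beyond what the paper spells out without changing the argument: you state uniqueness over distinct permutations, you pass level $i$ to $\varphi_\in$ when moving to $\mathbf G_{i+1}$, and you note that the formulae are only $\FPC$ when $\varphi_\in$ is, i.e.\ $\FP+\ord$ under the hypotheses of \cref{thm:ordered_schreier_sims}.
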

In the statement above, when referring to the computation of \cref{alg:sift}, we assume that any choice has been made to extend $R^\mathfrak A$ from a PSGS to a SGS. Since we are only simulating the first $|R| = n^k$ iterations of the sifting procedure, this arbitrary extension does not come into play.

\newcommand\siftEval{\mathrm{siftEval}}
\newcommand\siftBool{\mathrm{sift}_{\mathrm{bool}}}
Now, suppose that $(R,S)$ represent a PSGS $(\mathcal R,\mathcal S)$, and $\sigma\in \Sym(X)$. Assume that we run through \cref{alg:sift}, and let $g_i$ be the value assigned to $g$ when entering the for-loop with $\lambda = i$. One of two outcomes can be reached:
\begin{itemize} 
  \item either for some $i < |\mathcal R|$, there is no coset of $G_{i + 1}$ in $G_i$ which $g_i$ belongs to, which implies that $g_i\not\in G$, and thus $\sigma\not\in G$
  \item or we reach the $|\mathcal R|$-th entry in the loop. Then, $\sigma\in G\iff g_{|\mathcal R|}\in G_{|\mathcal R|}\iff g\in \langle \mathcal S\rangle$.
\end{itemize}
Using $\siftRes$ and $\siftLevel$ together with the $\ord$ operator, it is therefore easy to build a formula $\siftBool(R,S,\sigma)$ which holds on a $n^k$-PSGS representation $(R,S)$ of $G$ iff $\sigma\in G$: 
\[\siftBool(R,S,\sigma) := \siftLevel(\sigma, n^k -1)\land (\siftRes(\sigma,\vec s,\vec t)\in \langle S(\vec s, \vec t)\rangle)_{\vec \lambda,\vec s\vec t}\]
The membership test $(\varphi\in\langle \psi\rangle)$ can be implemented using the $\ord$ operator. Its definition in $\FP + \ord$ can be found in~\cite[Lemma III.3]{dahanGroup2025}.

\subsection{Construction of a PSGS}
To ease the presentation, we divide the construction procedure into smaller parts. We first show how to add a single permutation to a PSGS structure. In a second time, we deal with saturation, and finally, we provide a formula for the whole construction, which iterates the insertion formulae over the generating set of $\mathbf G(\mathfrak A)$ provided by $\varphi_\mathbf G$, and then apply the saturation formulae. Note that the resulting procedure differs slightly from the one given in \cref{alg:sgs_cons}, as we defer all saturation steps to the end of the construction.

\subsubsection{From sifting to insertion}
Before we define the insertion formulae, we direct the reader's attention to the fact that what motivates the existence of a saturation mechanism is that, after a single insertion of a permutation $\sigma$ in a (partial) strong generating set, the resulting structure is not necessarily a (P)SGS anymore.
As such, a definition is necessary to state precisely the post-conditions of our insertion procedure.
\begin{definition}
    \label[definition]{def:psgs_prototype}
    Given a $k$-adequate chain of subgroups $(G_i)_{i = 0}^d$ for $G\le \Sym(X)$, a \emph{PSGS prototype} for $(G_i)_{i = 0}^d$ is a pair $(\mathcal R,\mathcal S)$ such that:
    \begin{itemize}
      \item $\mathcal R$ is a list of length $d$ of lists of permutations. Each of those inner lists contain at most $|X|^k$ permutations. 
      \item $\mathcal S$ is a list of permutations.
      \item The set of permutations contained in $(\mathcal R,\mathcal S)$ generates $G$.
      \item For each $i<d$, the set of permutations in $\mathcal R(i)$ is a subset of $G_i$, and each of those permutation belong to a different coset of $G_{i + 1}$ in $G_i$. $\mathcal S$ is a subset of $G_d$.
    \end{itemize}
\end{definition}
That is, PSGS prototypes differ from PSGS in that we only expect $\mathcal R(i)$ to be a \emph{subset} of a transversal of the cosets of $G_{i + 1}$ in $G_i$. In this context, the additional condition that $\bigcup \mathcal R\cup \mathcal S$ generates $G$ is necessary to maintain a correspondence between PSGS prototypes and the groups they define.
\newcommand
  \ins
  {\ensuremath{{\mathrm{ins}}}}
\newcommand
  \insTable
  {\ensuremath{{\mathrm{ins}_{\mathrm{table}}}}}
\newcommand
  \insList
  {\ensuremath{{\mathrm{ins}_{\mathrm{list}}}}}

We are now ready to introduce the insertion formulae $\ins$, which assume $(R^{\mathfrak A},S^{\mathfrak A})$ to be a PSGS prototype, and evaluate to a larger PSGS prototype:
\begin{lemma}
  \label[lemma]{lem:ins_psgs}
  \emph{(Same assumptions as \cref{thm:ordered_schreier_sims})}
  There is an $\FPC[\Sigma]$ formula $\insTable(R,R_\sigma,\vec\mu,\vec\nu,\vec s,\vec t)$ and an $(\FP + \ord)[\Sigma]$ formula $\insList(R,S,R_\sigma,\vec\lambda,\vec s,\vec t)$ such that, given $\mathfrak A\in\mathcal K$, a subgroup $\mathbf K(\mathfrak A)\le\mathbf G(\mathfrak A)$, a permutation $\sigma\in\mathbf G(\mathfrak A)$ and a PSGS prototype $(R^{\mathfrak A}, S^{\mathfrak A})$ for 
  \[\mathbf K(\mathfrak A)\ge\mathbf K(\mathfrak A)\cap \mathbf G_1(\mathfrak A)\ge\dots\ge \mathbf K(\mathfrak A)\cap \mathbf G_{n^k}(\mathfrak A)\]
  The relations 
  \begin{align*}
  R' &:= \insTable(\mathfrak A, R^{\mathfrak A},\graph(\sigma))\\
  S' &:= \insList(\mathfrak A,R^{\mathfrak A},S^{\mathfrak A},\graph(\sigma))
  \end{align*} form a PSGS prototype for the chain of subgroup
  \[\Gspan{\mathbf K(\mathfrak A),\sigma} \ge 
  \Gspan{\mathbf K(\mathfrak A),\sigma}\cap \mathbf G_1(\mathfrak A)
  \ge \dots
  \ge \Gspan{\mathbf K(\mathfrak A),\sigma}\cap \mathbf G_{n^k}(\mathfrak A)\]
\end{lemma}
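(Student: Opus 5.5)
The plan is to realise insertion as a single run of the sifting procedure on $\sigma$, using \cref{lem:fpc_sift}. We first evaluate $\siftLevel(R,\graph(\sigma),\vec\mu)$ and $\siftRes(R,\graph(\sigma),\vec s,\vec t)$. Write $g$ for the residue and $i$ for the level at which sifting stops. Exactly one of two cases occurs, and the case distinction is definable. In the first case, sifting fails at some row $i<n^k$: no element of $R^{\mathfrak A}(i)$ lies in the same $\mathbf G_{i+1}(\mathfrak A)$-coset as $g$. In the second case, sifting passes all $n^k$ rows and $g\in\mathbf G_{n^k}(\mathfrak A)$.

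In the first case we set $S' := S^{\mathfrak A}$ and let $\insTable$ append $g$ to row $i$. The formula $\insTable(R,R_\sigma,\vec\mu,\vec\nu,\vec s,\vec t)$ agrees with $R$ everywhere except at $(\vec\mu,\vec\nu)$. There $\vec\mu$ encodes $i$ and $\vec\nu$ encodes the first index $j$ with $R^{\mathfrak A}(i,j)=\emptyset$. This index is definable in $\FPC$, since the nonempty indices form an initial segment of $(\oA)^k$, and that is how the ordering is preserved. At this position the formula returns $\siftRes(\vec s,\vec t)$.

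In the second case $\insTable$ returns $R$ unchanged. $\insList$ appends $g$ at the first empty index $\vec\lambda$ of $S$, but only if $g\notin\langle S^{\mathfrak A}\rangle$. That membership test is the only use of $\ord$, via \cref{lem:ordered_schreier_sims} style membership from~\cite[Lemma III.3]{dahanGroup2025}. Skipping redundant elements is what will later keep $|\mathcal S|$ polynomially bounded. If $g\in\langle S^{\mathfrak A}\rangle$, nothing changes.

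Next we verify the prototype conditions for $K':=\langle\mathbf K(\mathfrak A),\sigma\rangle$. \textbf{Generation:} $\sigma$ is the product of $g$ with the coset representatives chosen during sifting, and all of these are already in the prototype. So the new set of permutations generates $\langle\mathbf K(\mathfrak A),\sigma\rangle$, whether $g$ is added or $g\in\langle S\rangle$; in the latter case $\sigma$ already lies in the old group. \textbf{Membership in the chain:} each step $g\gets\rho^{-1}g$ preserves membership in $K'$. By construction of the sift, the residue at level $i$ lies in $\mathbf G_i(\mathfrak A)$, hence in $K'\cap\mathbf G_i(\mathfrak A)$. Similarly, in the second case $g\in K'\cap\mathbf G_{n^k}(\mathfrak A)$. \textbf{Distinct cosets:} in the first case, failure at row $i$ means that no $\rho\in R(i)$ satisfies $\rho^{-1}g\in\mathbf G_{i+1}$. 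So $g$ represents a new coset of $K'\cap\mathbf G_{i+1}$ in $K'\cap\mathbf G_i$; this uses $(K'\cap\mathbf G_i)\cap\mathbf G_{i+1}=K'\cap\mathbf G_{i+1}$. The bound of $|X|^k$ on row length follows from $k$-adequacy, since the number of cosets is at most $n^k$.

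The main obstacle I expect is making the case split and the ``first free slot'' robust in the logic. $\siftLevel$ must distinguish genuine failure at row $n^k-1$ from success through all rows. To handle this I would make $\siftLevel$ return $n^k$ (encoded with one extra numerical coordinate if necessary) when all rows are passed.

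The table is addressed only by numerical tuples, and $\sigma$ is given as a relation. All of the above is therefore isomorphism-invariant and needs only $\FPC$ plus, for $\insList$, the $\ord$-based membership test.
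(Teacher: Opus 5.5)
Your proposal is correct and follows the paper's proof. In both, you sift $\sigma$ against the table, put the residue in the first empty slot of the row where sifting fails, and otherwise append it to the first empty slot of $S$ only if it lies outside $\langle S\rangle$, which is the sole use of $\mathsf{ord}$. You go slightly further than the paper: it writes down the formulae without checking the prototype conditions (generation, membership in $\langle\mathbf K(\mathfrak A),\sigma\rangle\cap\mathbf G_i(\mathfrak A)$, distinct cosets), and its test $\mathsf{siftLevel} = n^k-1$ conflates failure at the last row with success through all rows — exactly the ambiguity you propose to remove.
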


In the proof of \cref{lem:ins_psgs} (in \cref{sec:proofs_sc}), we use the ordering of the cosets to order the potential new element to insert within $\mathcal R$.

Iterating insertions over the whole generating set defined by $\varphi_{\mathbf G}$, we obtain:
\newcommand
  \protoTable
  {\ensuremath{{\mathsf{proto}_{\mathrm{table}}}}}
\newcommand
  \protoList
  {\ensuremath{{\mathsf{proto}_{\mathrm{list}}}}}
\begin{corollary}
    \label[corollary]{corol:proto_psgs}
    \emph{(Same assumptions as \cref{thm:ordered_schreier_sims})} There are $\FP + \ord$ formulae $\protoTable(R,S,\vec\mu,\vec \nu,\vec s,\vec t)$ and $\protoList(R,S,\vec\mu,\vec\nu,\vec s,\vec t)$ which define a PSGS prototype for $\mathbf G(\mathfrak A)$ through $(\mathbf G_i(\mathfrak A))_{i = 0}^{n^k}$.
\end{corollary}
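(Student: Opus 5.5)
The plan is to iterate the insertion formulae of \cref{lem:ins_psgs} over the ordered generating set supplied by $\varphi_{\mathbf G}(\vec\mu,\vec s,\vec t)$, using an inflationary fixed point indexed by the numerical tuple $\vec\mu$. Write $g_1,\dots,g_N$ for the permutations defined by $\varphi_{\mathbf G}(\mathfrak A,\vec\mu)$, enumerated by increasing $\vec\mu$ in the natural order of $(\oA)^{|\vec\mu|}$, with $N\le (n+1)^{|\vec\mu|}$. Parameters $\vec\mu$ for which $\varphi_{\mathbf G}(\mathfrak A,\vec\mu)$ is not the graph of a permutation are skipped; this is $\FO$-testable. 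Set $\mathbf K_\lambda(\mathfrak A):=\Gspan{g_1,\dots,g_\lambda}$.

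The invariant to maintain is that after stage $\lambda$ the pair $(R_\lambda,S_\lambda)$ is a PSGS prototype for the chain $\mathbf K_\lambda(\mathfrak A)\cap\mathbf G_i(\mathfrak A)$, $i\le n^k$. For the base case, the trivial pair in which every row of $R$ and the list $S$ contain only the identity at index $0$ is a prototype for $\mathbf K_0=1$, since all four conditions of \cref{def:psgs_prototype} hold trivially. For the induction step, \cref{lem:ins_psgs} is applied with $\mathbf K:=\mathbf K_\lambda$ and $\sigma:=g_{\lambda+1}\in\mathbf G(\mathfrak A)$. It gives $R_{\lambda+1}:=\insTable(R_\lambda,\varphi_{\mathbf G}(\vec\mu_{\lambda+1}))$ and $S_{\lambda+1}:=\insList(R_\lambda,S_\lambda,\varphi_{\mathbf G}(\vec\mu_{\lambda+1}))$, which form a prototype for $\Gspan{\mathbf K_\lambda,g_{\lambda+1}}=\mathbf K_{\lambda+1}$. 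At the end, $\mathbf K_N=\mathbf G(\mathfrak A)$, which is exactly the claim.

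Logically, the stage index $\lambda$ is added as an extra block of numerical coordinates to the relations. I would define, by simultaneous inflationary fixed point, relations $\hat R(\vec\lambda,\vec\mu,\vec\nu,\vec s,\vec t)$ and $\hat S(\vec\lambda,\vec\kappa,\vec s,\vec t)$. Stage $0$ is the trivial prototype. Stage $\vec\lambda+1$ is obtained by substituting the slices $\hat R(\vec\lambda,\cdot)$ and $\hat S(\vec\lambda,\cdot)$ for $R,S$ in $\insTable$ and $\insList$, via $\varphi[\psi(\vec x)/\chi(\vec y,\vec x)]$, and $\varphi_{\mathbf G}(\vec\lambda+1,\cdot)$ for $R_\sigma$. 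Successor on numerical tuples is $\FO$-definable from $\le$. Inflationarity is harmless because each slice is written once, at its own stage, and later stages only add tuples with a larger $\vec\lambda$. $\protoTable$ and $\protoList$ are then the slices at the maximal $\vec\lambda$, which is definable as the last stage. Since $\ord$ may occur inside $\insList$ in the body of an $\ifp$, the resulting formulae lie in $\FP+\ord$.

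The main obstacle is keeping the residual list within the fixed arity $\numb^{2|T|}$. Each insertion can append a permutation to $S$, so $N$ stages could exceed the index range. I would handle this by letting $\insList$ append the new residue only when it is not already in $\Gspan{S}$, using the membership test of \cite[Lemma III.3]{dahanGroup2025}. Then every append strictly enlarges $\Gspan{S}\le\Sym(A^T)$. Because of the $\ord$-defined test, such an append happens only when the order at least doubles, so the number of appends is at most $\log_2|A^T|!\le |A|^{2|T|}$ for large $n$. I expect this bound either to be part of \cref{lem:ins_psgs} already or to be provable alongside it, and I would check this first. The table rows need no such argument, since by \cref{def:psgs_prototype} each row holds at most $n^k$ coset-distinct elements.
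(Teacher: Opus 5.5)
Your proposal is correct and follows the paper's proof. Both iterate the insertion formulae of \cref{lem:ins_psgs} over the generators in the order supplied by $\varphi_{\mathbf G}$ inside a simultaneous inflationary fixed point, and both bound the residual list via the $\ord$-based membership test in \insList. The only difference is bookkeeping: the paper updates a single copy of $(R,S)$ in place and records processed generator indices in an auxiliary relation $I$, which works because insertions only fill empty slots. You instead keep the stage-indexed history $\hat R(\vec\lambda,\dots),\hat S(\vec\lambda,\dots)$, so you must guard the successor clause so that slice $\vec\lambda+1$ is computed only once slice $\vec\lambda$ is nonempty; otherwise the very first iteration already writes spurious insertions into every slice.
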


\subsubsection{Saturating a PSGS}
We have now defined almost all the parts of the construction procedure.
We now show how to turn a PSGS prototype for some group $\mathbf K(\mathfrak A)$ into a PSGS for $\mathfrak A$.
Remark that the only difference between the two notions is that if $(R,S)$ is a PSGS for some chain of subgroups $(\mathbf K_i(\mathfrak A))$, the $i$-th row of $R$ constitutes a transversal of $\mathbf K_{i + 1}(\mathfrak A)$ in $\mathbf K_i(\mathfrak A)$ (see \cref{def:psgs,def:psgs_prototype}).

By \emph{saturation} we mean the closure operation performed on line~7 of \cref{alg:sgs_cons}.
A prototype is a candidate table whose rows need not yet be transversals: some coset of $\mathbf K_{i+1}$ in $\mathbf K_i$ may be unrepresented, and a product of two entries of the table may fail to sift. Saturating means repeatedly sifting the products $\sigma\tau$ and $\tau\sigma$ of pairs of entries already present and inserting the residues where sifting fails, until no new element is produced. The fixed point of this process is a genuine PSGS: once every such product sifts successfully, each row is a full transversal and the factorisation property of \cref{sec:schreier_sims} holds.

\newcommand
  \sat
  {\ensuremath{{\mathsf{sat}}}}
\newcommand
  \satList
  {\ensuremath{{\mathsf{sat}_{\mathrm{list}}}}}
\newcommand
  \satTable
  {\ensuremath{{\mathsf{sat}_{\mathrm{table}}}}}

\begin{lemma}\label[lemma]{lem:sat_sgs}
  \emph{(Same assumptions as \cref{thm:ordered_schreier_sims})}
  There are $(\FP + \ord)[\Sigma]$ formulae $\satTable(R,S,\vec\mu,\vec s,\vec t)$ and $\satList(R,S,\vec\lambda,\vec s,\vec t)$ such that, given a structure $\mathfrak A$ and a PSGS prototype $(R^{\mathfrak A},S^{\mathfrak A})$ for some group $\mathbf K(\mathfrak A)$,
  $R' := \satTable(\mathfrak A)$ and $S' := \satList(\mathfrak A)$ form a PSGS for the same group.
\end{lemma}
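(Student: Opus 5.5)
The saturation will be defined as an inflationary fixed point over pairs $(R,S)$, in which each stage performs one ordered round of the closure operation of line~7 of \cref{alg:sgs_cons}. Fix a prototype $(R^{\mathfrak A},S^{\mathfrak A})$ for $\mathbf K(\mathfrak A)$ through the chain $\mathbf K_i := \mathbf K(\mathfrak A)\cap\mathbf G_i(\mathfrak A)$.

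\emph{Candidates.} At each stage I consider the current entries of the table, indexed by $(\vec\mu,\vec\nu)\in(\oA)^{2k}$. The candidate products are $\mathcal R(\vec\mu,\vec\nu)\cdot\mathcal R(\vec\mu',\vec\nu')$ in both orders, together with the products of a table entry with a residual-list entry $S(\vec\lambda)$. Each candidate is therefore addressed by a tuple of numerical indices. These tuples are lexicographically ordered, which gives a canonical ordering of the candidates. Composition of two permutations given as graphs is first-order definable, so each candidate is definable with its indices as parameters.

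\emph{One round.} In one round the candidates are processed sequentially, in index order, by an inner fixed point. For each candidate $\tau$, the current table is handled as in the proof of \cref{lem:ins_psgs}: I apply $\siftLevel$ and $\siftRes$ from \cref{lem:fpc_sift}. If sifting fails at level $i<n^k$, the residue $g$ is appended at the first empty position of row $i$. The initial-segment invariant makes this position definable in $\FPC$ by counting the nonempty positions. If sifting reaches level $n^k$, the residue lies in $\mathbf K_{n^k}$, and I append it to the residual list only when $g\notin\langle S\rangle$. This test uses the $\ord$ membership test of~\cite[Lemma III.3]{dahanGroup2025}, so I will in fact just invoke $\insTable$ and $\insList$ of \cref{lem:ins_psgs} with $R_\sigma$ instantiated by the candidate.

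Processing candidates one at a time, rather than in parallel, is essential. Two candidates in the same coset must not both be inserted into row $i$, and sequential processing rules this out: once the first is inserted, the second sifts past level $i$. This is precisely where the ordering matters, since a parallel insertion is not expressible isomorphism-invariantly without duplicates.

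\emph{Termination and bounds.} Each row can grow to at most $|\mathbf K_i:\mathbf K_{i+1}|\le n^k$ entries. Each insertion into the residual list strictly enlarges $\langle S\rangle\le\mathbf K_{n^k}$, and a chain of subgroups of $\Sym(A^T)$ has length at most $\log_2 |A^T|!\le |A|^{2|T|}$. Hence $|S|\le n^{2|T|}$, and the residual list fits in its indexing by $\vec\lambda$. The outer fixed point stabilises after polynomially many rounds. The inflationary semantics are harmless because we only ever add entries.

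\emph{Correctness.} This is the step I expect to be the main obstacle. Every inserted element lies in $\mathbf K(\mathfrak A)$, and all inserted elements are distinct coset representatives, so the result is still a prototype for the same group. At the fixed point, every product of two table entries, and every product of a table entry with a list entry, sifts through all levels with residue in $\langle\mathcal S\rangle$. I then need the analogue of the Schreier–Sims correctness argument, namely that the set
\[
\mathcal M := \mathcal R(0)\cdots\mathcal R(n^k-1)\cdot\langle\mathcal S\rangle
\]
is closed under multiplication by the generators. My plan is to prove by downward induction on $i$ that $\mathcal M_i := \mathcal R(i)\cdots\mathcal R(n^k-1)\langle\mathcal S\rangle$ equals $\mathbf K_i$. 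The induction would use the fact that the generators of $\mathbf K_i$ are products of the sifted-to-completion Schreier-type elements, following the standard proof in~\cite{simsComputational1970}.

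A subtle point arises here: closure of the sifting test under $\langle\mathcal S\rangle$ on the right requires that $\mathcal R$-entries normalise appropriately. I would first try to get around this by also including, among the candidates, the conjugates $r\,s\,r^{-1}$ and the products $s\,r$ for list entries $s$, which guarantees that $\mathcal M_i$ is a group. Once $\mathcal M_0$ is a group containing all generators, it equals $\mathbf K(\mathfrak A)$, and counting gives that each row is a full transversal.
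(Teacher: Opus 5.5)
Your construction matches the paper's: an ordered enumeration of pairwise products indexed by numerical tuples, sifted and inserted one at a time through $\insTable$ and $\insList$, iterated to a fixed point, with the $\ord$ test bounding $|\mathcal S|$. Your nested rounds, in place of the paper's single simultaneous fixed point with a bookkeeping relation $I$, are an inessential difference; they even give you directly that every candidate sifts completely against the final table. The gap is in the correctness step, which you flag but do not carry out, and the route you sketch does not work as stated.

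A downward induction proving $\mathcal M_i = \mathbf K_i$ cannot start. At $i = n^k$ it asserts $\langle\mathcal S\rangle = \mathbf K(\mathfrak A)\cap\mathbf G_{n^k}(\mathfrak A)$, which is part of the conclusion. Appealing to Schreier generators is circular, because they are built from a \emph{complete} transversal of $\mathbf K_{i+1}$ in $\mathbf K_i$, which is exactly what is to be proved. Finally, the claim that adding $rsr^{-1}$ and $sr$ to the candidates ``guarantees that $\mathcal M_i$ is a group'' is asserted, not argued. The normalisation worry only arises because you try to close $\mathcal M_i$ under \emph{right} multiplication.

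The missing idea is to induct on $H_i := \langle \bigcup_{j\ge i}\mathcal R(j)\cup\mathcal S\rangle$ instead of $\mathbf K_i$, and to use \emph{left} multiplication.
\begin{itemize}
\item The base case $\mathcal M_{n^k} = \langle\mathcal S\rangle = H_{n^k}$ is trivially a group.
\item For the step, take a generator $x$ of $H_i$ (so $x\in\mathcal R(j)$ with $j\ge i$, or $x\in\mathcal S$), and take $\sigma\in\mathcal R(i)$ and $m\in\mathcal M_{i+1}$. The product $x\sigma$ lies in $\mathbf G_i$ and sifts completely, taking the identity representatives at levels below $i$. Hence $x\sigma = t m'$ with $t\in\mathcal R(i)$ and $m'\in\mathcal M_{i+1}$.
\item Since $\mathcal M_{i+1}$ is a group by induction, $x\sigma m = t(m'm)\in\mathcal M_i$.
\item A finite set containing $\Id$ and closed under left multiplication by generators contains the group they generate, so $\mathcal M_i = H_i$.
\item At $i=0$ this gives $\mathcal M_0 = \mathbf K(\mathfrak A)$. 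Your counting argument then finishes the proof; alternatively, the distinct-coset property forces the first $i$ factors of any $g\in\mathbf K_i$ to be $\Id$.
\end{itemize}

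This argument needs exactly the products $x\sigma$ with $i(x)\ge i(\sigma)$. In particular it needs $s\cdot r$ with $s\in\mathcal S$ on the left, which you add only as an afterthought. The products $r\cdot s$ and the conjugates $rsr^{-1}$ are unnecessary.

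The paper reaches the same conclusion by a rewriting argument. It starts from an arbitrary word in $\bigcup\mathcal R'\cup\mathcal S'$ and repeatedly replaces an out-of-order adjacent pair $xy$ (with $i(x)\ge i(y)$) by the sifted factorisation of $xy$, until the word is sorted by index. This also relies on the products $s\cdot r$, and its enumeration covers all ordered pairs.
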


\subsection{Conclusion}
We now conclude our proof that $\mathbf H(\mathfrak A)$ admits an $\FP + \ord$ definable generating set:
\begin{proof}[Proof of \cref{thm:ordered_schreier_sims}]
The $\mathsf{proto}$ formulae from \cref{corol:proto_psgs} already define a PSGS prototype for $\mathbf G(\mathfrak A)$ through $(\mathbf G_i(\mathfrak A))_{i = 0}^{n^k}$.
Applying the $\mathsf{sat}$ formulae then yields an actual PSGS for this chain of subgroup, and the residual list constitutes a generating set for $\mathbf H(\mathfrak A)$:

\[
  \varphi_{\mathbf H}(\vec\lambda,\vec s,\vec t) :=
  \satList(R,S)[R / \protoTable, S / \protoList]\qedhere
\]
\end{proof}

While our proof is now complete, there is a final remark to be made: we have used the $\ord$ operator very scarcely.
Indeed, the $\ord$ operator has only been used in $\insList$ to ensure that we are not adding ``superfluous'' elements to $S$. That is, whenever we add an element to $S$, the resulting set $S'$ generates a larger group, and thus $|\langle S'\rangle|\ge 2\cdot |\langle S\rangle|$. Therefore, the number of elements in the final residual list is necessarily bounded by $\log_2(\Sym(A^T))\le |A|^{|T|}\log(|A|^{|T|})\le |A|^{2|T|}$.

When $\mathbf G_0$ is abelian, we can enforce a polynomial bound on the size of $S$ in another way, without using the $\ord$ operator at all.
As such, we observe that for abelian groups, which play a central role in the definition of the graph classes in \cref{sec:col_graph_aut}, the power of the $\ord$ operator is not required; $\FPC$ alone suffices.
\begin{theorem}
  \label[theorem]{corol:use_of_ord_op}
  Let $\Sigma$ be a signature, $\mathcal K\subseteq \STRUC[\Sigma]$ and suppose that $\mathbf{G,H}$ are two functions, mapping any structure $\mathfrak A\in\mathcal K$ to two groups $\mathbf H(\mathfrak A)\le\mathbf G(\mathfrak A)\le \Sym(A^T)$ for some fixed type $T$.

  Suppose that $\FPC$ defines an ordered generating set for $\mathbf G$ and witnesses the $k$-accessibility of $\mathbf H$ from $\mathbf G$ in $\mathcal K$. Suppose additionally that, for all $\mathfrak A\in\mathcal K$, $\mathbf G(\mathfrak A)$ is abelian.
  Then, there is a formula $\varphi_{\mathbf H}\in \FPC[\Sigma]$ defining an ordered generating set for $\mathbf H$ in $\mathcal K$.
\end{theorem}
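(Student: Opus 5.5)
We follow the proof of \cref{thm:ordered_schreier_sims} line by line, and observe that $\ord$ is used in exactly one place: inside $\insList$, where it tests whether a new residue $g$ already lies in $\langle S\rangle$. That test keeps the residual list of polynomial size; the membership test $\siftBool$ is only needed for that bookkeeping and for the final sanity check, not for correctness. \cref{lem:fpc_sift} already gives $\siftRes$ and $\siftLevel$ in $\FPC$, and $\insTable$ is in $\FPC$ by \cref{lem:ins_psgs}. It therefore suffices to (i) replace the $\ord$-based test in $\insList$ by an $\FPC$-definable mechanism that bounds $|\mathcal S|$ polynomially, and (ii) check that the saturation formulae of \cref{lem:sat_sgs} only call $\ord$ through that same test. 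Then we compose exactly as in the conclusion of \cref{thm:ordered_schreier_sims}.

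Abelianness makes the saturation step essentially trivial for the residual list. Since $\mathbf G(\mathfrak A)$ is abelian, every subgroup in the chain is normal and all commutators vanish. Hence the residue of a product $\sigma\tau$ after sifting is the product of the residues of $\sigma$ and $\tau$, up to products of table entries whose sifting is handled by $\insTable$. So the group generated by residues can be described without ever adding residues of products: it is generated by the residues of the original generators $g_1,\dots,g_{n^{|\vec p|}}$ from $\varphi_{\mathbf G}$, together with residues of products and powers of table entries. Concretely, I would prove the following. Let $\mathcal R$ be the saturated table, obtained in $\FPC$ by a fixed-point on rows, using only $\siftRes$/$\siftLevel$ on products of table entries and inserting failures into $R$. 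Then $\mathbf H(\mathfrak A)$ is generated by the set $\mathcal S_0$ of residues $\siftRes(\rho)$, where $\rho$ ranges over the original generators $g_j$ and over the products $\mathcal R(i,\nu)\mathcal R(i',\nu')$ and $\mathcal R(i,\nu)^{-1}$. This is a Schreier-type lemma: in an abelian group, $\mathbf H = \mathbf G_{n^k}$ is generated by the Schreier generators $t_{\text{coset}(x g)}^{-1}\,x\,g$, and sifting such an element reduces it to one of these products modulo the table. The set $\mathcal S_0$ is indexed by tuples of numerical values (indices $j$, or $(i,\nu,i',\nu')$), so it is ordered and has size polynomial in $n$ without any pruning.

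The order of steps is as follows. First, define the saturated table $R^*$ in $\FPC$ as an inflationary fixed point. At each stage, for pairs of existing entries, take the lexicographically least failing product (at most $n^{4k}$ candidates, ordered by indices) and append its residue at the failing row, at the next free column. Rows have size at most $n^k$, so the fixed point is reached in polynomially many steps. Second, define $\varphi_{\mathbf H}(\vec\lambda,\vec s,\vec t)$ as $\siftRes(R^*,\rho_{\vec\lambda})$, where $\vec\lambda$ ranges over the index tuples listed above. Third, verify that $\langle\mathcal S_0\rangle=\mathbf H(\mathfrak A)$ using commutativity.

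The main obstacle is the correctness of the first and third steps without an $\ord$-guided residual list. Two things must be shown: that every coset of $\mathbf G_{i+1}$ in $\mathbf G_i$ eventually gets a representative when only products of table entries (and original generators) are sifted, and that the residues of these finitely many indexed elements generate all of $\mathbf H$. I would first try the standard Schreier–Sims correctness argument, restricted to the first $n^k$ levels. Commutativity ensures that "residue of a product $=$ product of residues modulo $\mathbf H$" holds exactly, which lets the closure argument go through with $\mathbf H$ treated as a black-box final level.
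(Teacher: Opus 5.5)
Your plan takes a different route from the paper. You saturate the transversal table alone, discard residues, and only afterwards take as generators of $\mathbf H(\mathfrak A)$ the residues of an a priori indexed family (the $g_j$ and products of pairs of table entries). The paper instead keeps the PSGS machinery of the general theorem, restricts saturation to pairs from a common row, and deletes the $\mathsf{ord}$ clause because $|S|$ is then bounded by counting those pairs.

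Your two-phase route is viable, but the proposal does not prove the statement everything hinges on: that pair-saturation yields full transversals, and that the indexed residues generate $\mathbf H(\mathfrak A)$. The justification you sketch does not establish it. Write $g = c(g)\,\mathrm{res}(g)$, where $c(g)$ is the product of the representatives chosen while sifting $g$ and $\mathrm{res}(g)\in\mathbf H(\mathfrak A)$. Sifting commutes with right multiplication by $\mathbf H(\mathfrak A)$, so commutativity gives
\[\mathrm{res}(\sigma\tau) = \mathrm{res}(\sigma)\,\mathrm{res}(\tau)\,\mathrm{res}\bigl(c(\sigma)c(\tau)\bigr).\]
The correction term is the residue of a product of up to $2n^k$ table entries, not of a pair. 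The three justifications you offer each fall short:
\begin{itemize}
\item ``Residue of a product $=$ product of residues modulo $\mathbf H$'' is vacuous, since every residue lies in $\mathbf H$.
\item Schreier's lemma gives one generator per canonical word in $\mathcal R(0)\cdots\mathcal R(n^k-1)$. There are exponentially many, and relating them to residues of pairs is exactly the work left undone.
\item The standard Schreier--Sims correctness argument you would fall back on rewrites words over $\bigcup\mathcal R\cup\mathcal S$. It therefore needs products of residual generators with table entries to have been sifted, which your scheme never does (it does not even store residuals).
\end{itemize}

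The missing idea is the paper's sorting argument. Given any word in table entries, use commutativity to sort it by row. If two factors $x_1,x_2$ lie in the same row $\mathcal R(i)$, then $x_1x_2\in\mathbf G_i(\mathfrak A)$. At the fixed point it sifts to $r_i r_{i+1}\cdots r_{n^k-1}\,\mathrm{res}(x_1x_2)$ with $r_j\in\mathcal R(j)$. Substituting and re-sorting therefore lowers the number of row-$i$ factors without creating factors in lower rows. Inducting on $i$, every such word becomes $y_0y_1\cdots y_{n^k-1}\,p$ with $y_j\in\mathcal R(j)$ and $p$ a product of residues of same-row pairs.

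Now take any $g=\prod_j g_j^{e_j}=\prod_j c(g_j)^{e_j}\cdot\prod_j \mathrm{res}(g_j)^{e_j}$; this requires the $g_j$ to have been sifted into the table first, which your step one does not state explicitly. Reducing the first factor gives $g=y_0\cdots y_{n^k-1}h'$ with $h'\in\mathbf H(\mathfrak A)$ a product of the $\mathrm{res}(g_j)$ and same-row pair residues. This yields both claims at once:
\begin{itemize}
\item If $g\in\mathbf G_i(\mathfrak A)$, then $y_0=\dots=y_{i-1}=\mathrm{Id}$, because distinct entries of a row lie in distinct cosets and $\mathrm{Id}$ is in every row. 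Hence $y_i$ represents $g\,\mathbf G_{i+1}(\mathfrak A)$, and each row is a full transversal.
\item Taking $i=n^k$ shows that every $h\in\mathbf H(\mathfrak A)$ equals its $h'$, so it is generated by your indexed residues.
\end{itemize}
With this lemma your construction goes through in $\mathsf{FPC}$. It is cleaner than the paper's in one respect: there is no residual list and no bookkeeping of processed pairs during saturation. Cross-row products and inverses turn out to be unnecessary. Without the lemma, the first and third steps of your plan remain unsupported.
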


\section{Coloured Graph Automorphisms}
\label{sec:col_graph_aut}
  As an application of the results from the previous section, we now exhibit a large class of graphs on which $\FPC$ can define a generating set of the automorphism group.
   

  In~\cite{pakusaLinear2015}, a structure $\mathfrak A$ is said to have \emph{abelian colours} if it is equipped with a total pre-order $\preceq^{\mathfrak A}$ and relation $\Phi^{\mathfrak A}$ such that, for each $i$, 
  $\Phi(\mathfrak A,i)$ \emph{enumerates} an abelian (and ordered) group $\Gamma_i$ acting transitively on $A_i$, the $i$-th colour class of $\mathfrak A$ (w.r.t. $\preceq^{\mathfrak A}$). 
  Let us denote $\mathfrak A_i$ the substructure of $\mathfrak A$ induced by $A_i$.
  Note that here, we require the graph of \emph{any} element of $\Gamma_i$ to appear as $\Phi(\mathfrak A,i,j)$ for some $j$.
  Because $\Gamma_i\le \Sym(A_i)$ is abelian and transitive, it can be shown that $|\Gamma_i| = |A_i|$, which makes this hypothesis reasonable. One can also show that, under those hypotheses, $\Aut(\mathfrak A_i)\le \Gamma_i$. 
  
  We show that, under the conditions that $\Aut(\mathfrak A_i)\le \Gamma_i$ for each $i$ and $\Gamma_i$ is canonically ordered by $\Phi$, all the remaining group-theoretic assumptions on the $\Gamma_i$ may be relaxed while retaining $\FPC$-definability of the automorphism group of $\mathfrak A$.
  Precisely, we introduce in \cref{def:pbcg} the class $\oPBCG_k$ of \emph{Polynomially Bounded Colour-class graphs}: coloured graphs in which each colour class $A_i$ carries a group $\Gamma_i$ with $\Aut(\mathfrak A_i)\le\Gamma_i\le\Sym(A_i)$ and $|\Gamma_i| < |A|^k$, presented by an ordered generating set $\Phi$. \Cref{thm:pbcg} states that $\FPC$ defines a generating set for $\Aut(\mathfrak A)$ on that class.
  Compared to abelian colours, the groups $\Gamma_i$ are thus no longer required to be transitive, nor to be fully enumerated by $\Phi$ rather than merely generated by it, and their size is allowed to grow polynomially (and not only linearly) with $|A|$.
  While commutativity is not required either, we will see in \cref{sec:effect_of_ordering} that it is actually enforced by $\Phi$: an ordered generating set for the ambient group forces $\Aut(\mathfrak A)$ to centralise it, hence to be abelian. It should therefore be noted that abandoning commutativity in the hypotheses does not extend the reach of the method to any structure of non-abelian automorphism group.
  \Cref{thm:pbcg} is to be contrasted with the fact that the class of structures considered contains the $\CFI$-structures from~\cite{lichterSeparating2023}, which separate $\FP + \rk$ from $\P$: in this specific context, a strong distinction seems to exist between the Automorphism problem (which we thus show to be $\FPC$ definable), and the Canonisation problem (which cannot be defined in $\FP + \rk$, by~\cite{lichterSeparating2023}).

  This is due, at least in part, to the fact that the definability of an ordered generating set for the automorphism group of each colour-class is a not union-closed property, which in turn implies that we cannot rely on the usual reduction (see~\cite{babai_monte-carlo_1979,luksIsomorphismGraphsBounded1982}) from the Graph Isomorphism problem to the Graph Automorphism problem.
  Indeed, given $\mathfrak {A,B}$ two such coloured structures, and $i$ the index of a colour-class, it is not clear how to obtain an ordered generating set for $\Aut(\mathfrak A_i \sqcup \mathfrak B_i)$ from the two ordered generating sets given by $\Phi(\mathfrak A,i)$ and $\Phi(\mathfrak B,i)$. In particular, when $\mathfrak A_i\simeq\mathfrak B_i$, defining such an ordered generating set of $\Aut(\mathfrak A_i \sqcup \mathfrak B_i)$ entails to pick one canonical swap between $A_i$ and $B_i$, leading to a situation similar to that of \cref{thm:strong_limit}.
  
  Whether the Isomorphism problem for such structures is definable in $\FP + \ord$ in the general case remains, to our knowledge, open.

  Our proof technique is a direct adaptation of the \emph{Bounded Colour-class Graph} Automorphism algorithm introduced by Babai in~\cite{babai_monte-carlo_1979}, that we review in the next subsection.
  The main obstacle to its application in an isomorphism-invariant framework being that it relies on the Schreier-Sims algorithm enabling the computation of generating sets for accessible subgroups.  
  
  \subsection{From bounded colour-classes to PBCG}

  In order to introduce our generalisation, we review the group-theoretic framework to the automorphism of coloured graphs, introduced in~\cite{babai_monte-carlo_1979}.

  Let $\mathfrak A$ be a graph over a set $A$ with $n$ vertices equipped with a total pre-order $\preceq^{\mathfrak A}$ defining a colouring $c : A \to \range{m}$. We denote by $\mathfrak A_i$ the substructure of $\mathfrak A$ induced by $c^{-1}(i)$, for any $i\in c(A)$. 
  $\Aut_c(\mathfrak A)$ is the group of colour preserving automorphisms of $\mathfrak A$.

  Consider any family $(\Gamma_i)_{i\in\range m}$ of permutation groups with
  \begin{equation}\label{eqn:local_groups}
    \Aut(\mathfrak A_i)\ \le\ \Gamma_i\ \le\ \Sym(A_i)\qquad\text{for every }i\in\range m.
  \end{equation}
  The reader may take $\Gamma_i := \Aut(\mathfrak A_i)$ throughout the present subsection, which is the choice made in~\cite{babai_monte-carlo_1979}; being able to choose another family will be used in the following subsections.
  Let $\mathcal I := \{\{i,j\}\tq i,j\in\range m,\ i\le j\}$, and fix a linear ordering of $\mathcal I$.
  We denote by $s_\lambda$ the $\lambda$-th element of $\mathcal I$ with regard to this ordering.
  For any $s = \{i,j\}\in\mathcal I$, we denote by $\mathfrak A_s$ the substructure of $\mathfrak A$ induced by $c^{-1}(i)\cup c^{-1}(j)$. Note that $\mathcal I$ contains the singletons $\{i\}$, for which $\mathfrak A_{\{i\}} = \mathfrak A_i$.
  We can define the following tower of subgroups:
  \begin{equation}\label{eqn:subgroup_tower_for_colored_graphs}
    G_\lambda :=
      \left\{\sigma \in \prod_{i\in\range{m}} \Gamma_i\ \middle|\
         \forall \kappa <\lambda, \sigma_{\restriction A_{s_{\kappa}}}\in \Aut_c(\mathfrak A_{s_\kappa})
      \right\}
  \end{equation}
   By definition, for  $\lambda,\kappa\le|\mathcal I|$, if $\lambda\le\kappa$  then $G_{\kappa}\le G_{\lambda}$.
  We aim to delimit under which hypothesis this chain of subgroups witnesses the accessibility of $\Aut_c(\mathfrak A)$.
  Notice that $G_{|\mathcal I|} = \Aut_c(\mathfrak A)$, and that membership to $G_{\lambda + 1}$ within $G_{\lambda}$ is clearly $\P$-decidable. Thus, we are left with two conditions to fulfil:
  \begin{enumerate}
    \item\label{item:cond_pbcg_init} A generating set for $G_{0}= \prod_{i\in\range{m}}\Gamma_i$ can be computed in polynomial time.
    \item\label{item:cond_pbcg_rec} There is a constant $\ell$ such that $\forall \lambda<|\mathcal I|, |G_{\lambda}:G_{\lambda + 1}| \le n^\ell$
  \end{enumerate}
  Fulfilling those conditions depends on the restrictions we set on the colour-classes.

  In~\cite{babai_monte-carlo_1979}, a natural class of (classes of) graphs is considered, for which (\ref{item:cond_pbcg_init}) and (\ref{item:cond_pbcg_rec}) hold: \emph{bounded colour-class graphs}.
  A coloured graph $\mathfrak A$ has \emph{$k$-bounded colour-classes} if, for each $i$, $|A_i| \le k$.
  $\mathcal C$ is a class of \emph{bounded colour-class graphs} if there is a constant $k$ such that each graph in $\mathcal C$ has $k$-bounded colour-classes. 
  If $\mathcal C$ is such a class, then, for any $\mathfrak A\in\mathcal C$ and any $i\in\range{m}$, $\Gamma_i := \Aut(\mathfrak A_i)\le \Sym(A_i)$ has at most $k!$ elements, which can be enumerated in polynomial time, and $S := \bigcup_{i\in\range{m}} \Gamma_i$ fulfils (\ref{item:cond_pbcg_init}).
  Now, given $\lambda < |\mathcal I|$, the number of cosets of $G_{\lambda + 1}$ in $G_{\lambda}$ is bounded by the number of ways to permute $A_{s_\lambda}$, itself bounded by $k!^2$, so (\ref{item:cond_pbcg_rec}) is fulfilled.
  
  However, this polynomial-time computation cannot be translated directly to an isomorphism-invariant context. In particular, the class of structures used in the proof of \cref{thm:strong_limit} has 4-bounded colour-classes, and generating sets for their automorphism groups cannot be defined: the enumeration of $\Aut(\mathfrak A_i)$ depicted above cannot be canonically ordered, and thus \cref{thm:ordered_schreier_sims} cannot be applied.

  We now introduce a quite general class of coloured graphs that fulfil (\ref{item:cond_pbcg_init}) and (\ref{item:cond_pbcg_rec}), while providing an ordering similar to that of structures with abelian colours.
  \begin{definition}
    \label[definition]{def:pbcg}
    An instance of $\oPBCG_k$ is a coloured graph $\mathfrak A = (A,E^{\mathfrak A},\preceq^{\mathfrak A},\Phi^{\mathfrak A})$ together with a family of local groups $(\Gamma_i)$ as in~\eqref{eqn:local_groups} with $|\Gamma_i| < n^k$, where $\Phi$ is a relation of type $\numb^{k + 1}\cdot\element^2$ such that for every $i$, $\Phi(\mathfrak A,i)$ defines an ordered generating set for $\Gamma_i$ (in the sense of \cref{def:ordered_generating_set}).
  \end{definition}
  We stress that $\Gamma_i$ is not required to be abelian, nor transitive, nor to consist of automorphisms of $\mathfrak A_i$: it is merely a polynomially bounded group of permutations of the colour class, containing $\Aut(\mathfrak A_i)$, and presented by an ordered generating set. The bound $|\Gamma_i| < n^k$ is what makes the tower~\eqref{eqn:subgroup_tower_for_colored_graphs} an adequate chain, and hence a witness of the accessibility of $\Aut_c(\mathfrak A)$.
  \begin{lemma}
    \label[lemma]{lem:pbcg_poly_index}
    Let $(\Gamma_i)$ be a family of local groups satisfying~\eqref{eqn:local_groups} with $|\Gamma_i| < n^k$ for every $i$, and let $(G_\lambda)$ be the corresponding tower~\eqref{eqn:subgroup_tower_for_colored_graphs}. Then $|\mathcal I|\le n^2$ and, for all $\lambda < |\mathcal I|$, $|G_{\lambda} : G_{\lambda + 1}| < n^{2k}$.
  \end{lemma}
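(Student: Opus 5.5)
The first claim, $|\mathcal I|\le n^2$, is a direct count. Colours are indexed by $\range m$ with $m\le n$, since colour classes are non-empty and partition $A$. $\mathcal I$ consists of the unordered pairs $\{i,j\}$ with $i\le j$, so
\[|\mathcal I| = m(m+1)/2\le n(n+1)/2\le n^2.\]

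For the index bound, the plan is to control $|G_\lambda:G_{\lambda+1}|$ by the number of possible restrictions to $A_{s_\lambda}$. Write $s_\lambda=\{i,j\}$ (possibly $i=j$). By definition~\eqref{eqn:subgroup_tower_for_colored_graphs}, membership of $\sigma\in G_\lambda$ in $G_{\lambda+1}$ depends only on $\sigma_{\restriction A_{s_\lambda}}$. Every $\sigma\in G_0=\prod_i\Gamma_i$ preserves each colour class, so this restriction is the pair $(\sigma_{\restriction A_i},\sigma_{\restriction A_j})\in\Gamma_i\times\Gamma_j$.

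Next I show that two elements of $G_\lambda$ with the same restriction to $A_{s_\lambda}$ lie in the same left coset of $G_{\lambda+1}$. Suppose $\sigma,\tau\in G_\lambda$ agree on $A_{s_\lambda}$. Then $\sigma^{-1}\tau\in G_\lambda$, since $G_\lambda$ is a group: it is an intersection of $G_0$ with preimages of subgroups under restriction homomorphisms. Moreover $\sigma^{-1}\tau$ acts as the identity on $A_{s_\lambda}$, and the identity is a colour-preserving automorphism of $\mathfrak A_{s_\lambda}$. Hence $\sigma^{-1}\tau\in G_{\lambda+1}$.

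Consequently the number of cosets is at most the number of distinct restrictions $(\sigma_{\restriction A_i},\sigma_{\restriction A_j})$, which is at most $|\Gamma_i|\cdot|\Gamma_j| < n^k\cdot n^k = n^{2k}$. When $i=j$ the bound is even better: $|\Gamma_i|<n^k\le n^{2k}$. The claimed strict inequality therefore holds in both cases.

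The only point needing care is the group-theoretic one: verifying that each $G_\lambda$ is a subgroup, and that the restriction map is a homomorphism on colour-preserving permutations, so that "same restriction" implies "same coset". Both are routine, since restriction to a union of colour classes is multiplicative on permutations that preserve each colour class.
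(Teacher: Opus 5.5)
Your proof is correct and follows the paper's argument. Both bound $|\mathcal I|$ by $m(m+1)/2\le n^2$, and both bound the index by the image of the restriction morphism $\sigma\mapsto(\sigma_{\restriction A_i},\sigma_{\restriction A_j})$ into $\Gamma_i\times\Gamma_j$, using that $\sigma^{-1}\tau$ restricts to the identity on $A_{s_\lambda}$. Your last step also gives the strict bound $<n^{2k}$ as stated, whereas the paper's displayed chain only records $\le n^{2k}$.
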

  \begin{proof}
    As $m\le n$, we have $|\mathcal I| = m(m+1)/2\le n^2$.
    For all $\lambda$, let $i,j$ be such that $i \le j$ and $\{i,j\} = s_\lambda$. Then $\pi_{\lambda}(\sigma) := (\sigma_{\restriction A_i},\sigma_{\restriction A_j})$ defines a morphism $\pi_{\lambda} : G_{0}\to \Gamma_i\times \Gamma_j$.
    For $\lambda < |\mathcal I|$ and $\sigma,\tau\in G_{\lambda}$, if $\pi_\lambda(\sigma) = \pi_\lambda(\tau)$, then $\sigma G_{\lambda + 1} = \tau G_{\lambda + 1}$, since we then have
    \[(\sigma^{-1}\tau)_{\restriction A_{s_\lambda}} = 1\in \Aut_c(\mathfrak A_{s_\lambda}).\]
    Therefore
    \[
      |G_{\lambda}:G_{\lambda+1}| \le |\im(\pi_\lambda)|
      \le |\Gamma_i|\cdot |\Gamma_j|
      \le n^k\cdot n^k = n^{2k}.\qedhere
    \]
  \end{proof}
  Remark that we have assumed the generating set for $\Gamma_i$ to be given by a $k + 3$-ary relation.
  In the context where $|\Gamma_i|< n^k$, and $\Phi$ is indexed over the numerical domain, this can be enforced by a simple trick (provided in \cref{sec:proofs_sc}):
  \begin{remark}
    \label[remark]{rem:pcbgkk}
    For $k_1 < k_2$, consider $\oPBCG_{k_1,k_2}$ be the class of co\-loured graphs $\mathfrak A = (A,E^{\mathfrak A},\preceq^{\mathfrak A},\Phi^{\mathfrak A})$ as in \cref{def:pbcg} except that $\Phi$ is a relation of type $\numb^{k_2 + 1} \element^2$ with $\Gamma_i = \GspanFP{\Phi}{\vec p}{\mathfrak A,i}$ and $|\Gamma_i| < n^{k_1}$.
    There is an $\FPC$-interpretation from $\oPBCG_{k_1,k_2}$ to $\oPBCG_{k_1}$.
  \end{remark}

  \subsection{The effect of the ordering assumption}
  \label{sec:effect_of_ordering}

  Before turning to definability, we explain in what sense the ordering of the generating sets enables to assume commutativity.
  Recall from \cref{sec:limits} that if a set of permutations $S$ is definable on $\mathfrak A$, then $\Aut(\mathfrak A)$ acts on that set by conjugation, and $S$ is fixed set-wise by that action of $\Aut(\mathfrak A)$. 
  If we assume that an ordered enumeration of $S$ is definable, the action of $\Aut(\mathfrak A)$ by conjugation must fix $S$ \emph{point-wise}, and hence every automorphism of $\mathfrak A$ must commute with every element of $\langle S\rangle$.
  (Recall from \cref{sec:preliminaries} that an isomorphism $f$ acts on tuples of mixed type by $f^*$, fixing the numerical components; in particular every $\pi\in\Aut(\mathfrak A)$ induces a permutation $\pi^*$ of $A^T$, and $\pi\mapsto\pi^*$ is injective as soon as $T$ contains a domain component.)
  \begin{lemma}
    \label[lemma]{lem:ordered_centralises}
    Let $\mathcal L$ be a logic, $\mathcal K\subseteq \STRUC[\Sigma]$, and let $\varphi(\vec\mu,\vec s,\vec t)\in\mathcal L[\Sigma]$ where $\vec\mu$ is a tuple of \emph{numerical} variables and $\type(\vec s) = \type(\vec t) = T$. Write $\mathbf G(\mathfrak A) := \GspanFP{\varphi}{\vec\mu.\vec s.\vec t}{\mathfrak A}$. Then for every $\mathfrak A\in\mathcal K$ and every $\pi\in\Aut(\mathfrak A)$, the permutation $\pi^*$ commutes with every element of $\mathbf G(\mathfrak A)$.
  \end{lemma}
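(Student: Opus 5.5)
The plan is to use isomorphism-invariance exactly as in \cref{lem:canonical_generating_set}, but exploit the fact that the parameters $\vec\mu$ are numerical and hence fixed by every automorphism. Fix $\mathfrak A\in\mathcal K$ and $\pi\in\Aut(\mathfrak A)$. By isomorphism-invariance, $\pi^*$ maps $\varphi(\mathfrak A)$ onto itself. Here $\pi^*$ acts componentwise on tuples of mixed type, as the identity on numerical components and as $\pi$ on domain components.

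The first step is to fix a valuation $\vec m\in(\oA)^{|\vec\mu|}$ such that $\varphi(\mathfrak A,\vec m)$ is the graph of some permutation $\sigma\in\Sym(A^T)$. Since $\pi^*$ fixes $\vec m$ componentwise, a tuple $(\vec m,\vec s,\vec t)$ lies in $\varphi(\mathfrak A)$ iff $(\vec m,\pi^*\vec s,\pi^*\vec t)$ does. Restricting to the slice at $\vec m$, this means $\graph(\sigma)$ is invariant under the diagonal action of $\pi^*$ on $A^T\times A^T$. By the computation already carried out in \cref{sec:limits}, the image of $\graph(\sigma)$ under $\pi^*$ is $\graph(\pi^*\sigma(\pi^*)^{-1})$. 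Hence $\pi^*\sigma(\pi^*)^{-1}=\sigma$, so $\pi^*$ commutes with every generator of $\mathbf G(\mathfrak A)$.

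The second step is routine. The centraliser of $\pi^*$ in $\Sym(A^T)$ is a subgroup, and it contains every generator of $\mathbf G(\mathfrak A)$, so it contains all of $\mathbf G(\mathfrak A)=\GspanFP{\varphi}{\vec\mu.\vec s.\vec t}{\mathfrak A}$.

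The only delicate point is the first step. We must check that isomorphism-invariance really applies slice-by-slice: the invariance statement concerns the whole relation $\varphi(\mathfrak A)$, and we need that $\pi^*$ does not move the numerical parameter block. This follows directly from the definition of $f^*$ in \cref{sec:preliminaries}. We also need the identity $\graph(\sigma)^{\pi^*}=\graph(\pi^*\sigma(\pi^*)^{-1})$ for permutations of $A^T$ rather than of $A$. This is the same one-line computation as before, with $\pi^*$ in place of $\tau$, so it carries over unchanged.
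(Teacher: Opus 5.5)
Your proof is correct and follows the paper's argument essentially step for step. Because $\pi^*$ fixes numerical elements, invariance of $\varphi(\mathfrak A)$ restricts to each slice $\varphi(\mathfrak A,\vec m)$, and the conjugation computation from \cref{sec:limits} then shows that $\pi^*$ commutes with each generator; your only addition is to make explicit the step the paper leaves implicit, namely that the centraliser of $\pi^*$ is a subgroup and therefore contains all of $\mathbf G(\mathfrak A)$.
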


  \begin{proof}
    Fix $\pi\in\Aut(\mathfrak A)$ and $\vec m\in(\oA)^{|\vec\mu|}$. We have $\varphi(\mathfrak A)^\pi = \varphi(\mathfrak A)$, and $\pi^*$ fixes every numerical element, hence fixes $\vec m$; the identity therefore holds at each numerical index separately, that is $\varphi(\mathfrak A,\vec m)^\pi = \varphi(\mathfrak A,\vec m)$.
    Whenever $\varphi(\mathfrak A,\vec m)$ is the graph of a permutation $\sigma_{\vec m}$, the reasoning opening \cref{sec:limits} yields $\pi^*\sigma_{\vec m}(\pi^*)^{-1} = \sigma_{\vec m}$. These $\sigma_{\vec m}$ generate $\mathbf G(\mathfrak A)$.
  \end{proof}

  Given a group $G$, its \emph{centre} is defined as
  \[ Z(G) := \{g\in G\tq \forall h\in G, g h = h g\}.\]
  In a $\oPBCG_k$ structure the relation $\Phi$ belongs to the signature, so \cref{lem:ordered_centralises} applies, and does so at two levels: to the whole structure $\mathfrak A$, and to each induced substructure $\mathfrak A_i$ separately, since $\Phi(\mathfrak A,i)$ is a relation of $\mathfrak A_i$ defining an ordered generating set for $\Gamma_i$ on it. The two applications give the following.

  \begin{corollary}
    \label[corollary]{corol:opbcg_abelian}
    Let $\mathfrak A\in\oPBCG_k$ with colour classes $A_1,\dots,A_m$, and set $\mathbf G(\mathfrak A) := \prod_{i\in\range m}\Gamma_i$. Then
    \begin{equation}\label{eqn:local_centre}
      \Aut(\mathfrak A_i)\ \le\ Z(\Gamma_i)\qquad\text{for every }i\in\range m,
    \end{equation}
    and
    \begin{equation}\label{eqn:global_centre}
      \Aut(\mathfrak A)\ \le\ Z(\mathbf G(\mathfrak A))\ =\ \prod_{i\in\range m} Z(\Gamma_i).
    \end{equation}
    In particular $\Aut(\mathfrak A)$ is abelian.
  \end{corollary}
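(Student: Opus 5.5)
The plan is to apply \cref{lem:ordered_centralises} twice, once to each induced substructure $\mathfrak A_i$ and once to the whole structure $\mathfrak A$. We then identify the centre of a direct product, and check that the restriction of an automorphism of $\mathfrak A$ to each colour class lands in the right group.

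For \eqref{eqn:local_centre}, fix $i$ and view $\mathfrak A_i$ as a structure. The relation $\Phi(\mathfrak A,i)$, with the numerical tuple encoding $i$ fixed, is a relation of $\mathfrak A_i$ of type $\numb^k\cdot\element^2$. It is defined by an atomic formula with only numerical parameters, and it generates $\Gamma_i$. By \cref{lem:ordered_centralises}, applied with $T=\element$ so that $\pi\mapsto\pi^*$ is the identity, every $\pi\in\Aut(\mathfrak A_i)$ commutes with every element of $\Gamma_i$. Since also $\Aut(\mathfrak A_i)\le\Gamma_i$ by \eqref{eqn:local_groups}, we get $\Aut(\mathfrak A_i)\le Z(\Gamma_i)$. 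The one point needing care is that $\Phi$ must be part of the signature of $\mathfrak A_i$, so that $\Aut(\mathfrak A_i)$ preserves it. The paper's convention that $\Phi(\mathfrak A,i)$ is a relation of $\mathfrak A_i$ handles this. The numerical parameter $i$ is fixed by automorphisms in any case.

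For \eqref{eqn:global_centre}, apply \cref{lem:ordered_centralises} to $\mathfrak A$ with the formula $\varphi(\vec\mu,s,t):=\Phi(\vec\mu,s,t)$, where the first component of $\vec\mu$ ranges over colours. It defines the union over $i$ of the generators of the $\Gamma_i$, with each generator extended by the identity outside $A_i$. These generate $\prod_i\Gamma_i=\mathbf G(\mathfrak A)$, because the factors act on disjoint supports. Hence every $\pi\in\Aut(\mathfrak A)$ centralises $\mathbf G(\mathfrak A)$.

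It remains to show $\pi\in\mathbf G(\mathfrak A)$. Every automorphism preserves $\preceq$ and therefore each colour class, and its restriction $\pi_{\restriction A_i}$ is an automorphism of $\mathfrak A_i$. By \eqref{eqn:local_groups}, $\pi_{\restriction A_i}\in\Gamma_i$, so $\pi=\prod_i\pi_{\restriction A_i}\in\prod_i\Gamma_i$. Together with the centralising property this gives $\Aut(\mathfrak A)\le Z(\mathbf G(\mathfrak A))$.

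The equality $Z(\prod_i\Gamma_i)=\prod_i Z(\Gamma_i)$ is the standard fact for direct products: commuting holds componentwise, and testing against elements supported on a single factor gives the reverse inclusion. Finally, $\Aut(\mathfrak A)$ is a subgroup of the abelian group $\prod_i Z(\Gamma_i)$, and is therefore abelian. The only real obstacle is the bookkeeping in the two applications of \cref{lem:ordered_centralises}: checking that the parametrised relations are genuinely definable in the respective structures using only numerical parameters, and that restriction to colour classes commutes with the automorphism action. The group theory itself is routine.
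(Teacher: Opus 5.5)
Your proposal is correct and follows the paper's proof essentially step for step. \cref{lem:ordered_centralises} applied to each $\mathfrak A_i$ with the atomic formula $\Phi$ gives \eqref{eqn:local_centre}; applied to $\mathfrak A$, together with the restriction argument giving $\Aut(\mathfrak A)\le\mathbf G(\mathfrak A)$ and the standard description of the centre of a direct product, it gives \eqref{eqn:global_centre}. (One detail that both you and the paper leave implicit: $\Phi$ itself only defines graphs of permutations of $A_i$, not of $A$. The ``extension by the identity'' you describe therefore needs a formula such as $\Phi(\mu,\vec p,s,t)\lor(s=t\land\lnot\exists t'\,\Phi(\mu,\vec p,s,t'))$. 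This is harmless, since \cref{lem:ordered_centralises} holds for any formula with numerical parameters.)
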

  \begin{proof}
    For~\eqref{eqn:local_centre}, apply \cref{lem:ordered_centralises} to the structure $\mathfrak A_i$ and the atomic formula $\Phi(\vec\mu,s,t)$, which has numerical parameters only and defines an ordered generating set for $\Gamma_i$ on $\mathfrak A_i$: every $\pi\in\Aut(\mathfrak A_i)$ commutes with all elements of $\Gamma_i$. As $\pi\in\Aut(\mathfrak A_i)\le\Gamma_i$ by~\eqref{eqn:local_groups}, it lies in $Z(\Gamma_i)$.

    For~\eqref{eqn:global_centre}, every automorphism of $\mathfrak A$ preserves $\preceq^{\mathfrak A}$, hence each colour class setwise, and restricts on $A_i$ to an automorphism of $\mathfrak A_i$, which lies in $\Gamma_i$; so $\Aut(\mathfrak A)\le\mathbf G(\mathfrak A)$. The same formula $\Phi$ defines an ordered generating set for $\mathbf G(\mathfrak A)$ on $\mathfrak A$, so \cref{lem:ordered_centralises} applies with $T = \element$ and every $\pi\in\Aut(\mathfrak A)$ commutes with all elements of $\mathbf G(\mathfrak A)\ge\Aut(\mathfrak A)$. The equality is a well known fact about the centre of a direct product.
  \end{proof}

  Therefore, rather than working with $(\Gamma_i)_{i\in[m]}$, which need not be abelian, we work with their centres $(Z(\Gamma_i))_{i\in[m]}$, which are abelian and still contain $\Aut(\mathfrak A_i)$ --- so that \cref{corol:use_of_ord_op} enables the definition of generating sets for accessible subgroups without using the $\ord$ operator. It remains only to check that the centre is itself presented by a definable ordered generating set.

  \begin{lemma}
    \label[lemma]{lem:enumeration}
    Let $\mathfrak A\in\oPBCG_k$. There is an $\FPC$ formula $\Phi^*(\mu,\vec\nu,s,t)$, with $|\vec\nu| = k$, such that for every colour $i$ the map $\vec\nu\mapsto\Phi^*(\mathfrak A,i,\vec\nu)$ is a bijection from an initial segment of $(\oA)^k$ onto $\{\graph(\gamma)\tq\gamma\in\Gamma_i\}$.
  \end{lemma}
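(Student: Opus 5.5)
The plan is to enumerate $\Gamma_i$ in a canonical order by an inflationary fixed-point that builds the closure of the ordered generators $\Phi(\mathfrak A,i,\cdot)$ under right multiplication. Every element of $\Gamma_i$ is a word in the generators, and since $\Gamma_i$ is finite no inverses are needed. The order is "first discovered, then lexicographically by (parent index, generator index)". Fix $i$. We define by simultaneous induction a relation $E(\vec\nu,s,t)$ of type $\numb^k\cdot\element^2$ whose slices $E(\vec\nu)$ are graphs of permutations, with the nonempty slices forming an initial segment $\vec 0,\dots,\vec c-1$ of $(\oA)^k$ (a BFS queue). The initial stage places $\Id$ at index $\vec 0$.

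At each stage, we form the candidate set of pairs $(\vec\nu,\vec p)$, where $\vec\nu$ is an occupied index and $\vec p$ indexes a generator $\gamma_{\vec p}=\func(\Phi(\mathfrak A,i,\vec p))$. The product graph $E(\vec\nu)\cdot\gamma_{\vec p}$ is $\FO$-definable from the two graphs by $\exists u\,(E(\vec\nu,s,u)\land\Phi(i,\vec p,u,t))$. We call a pair \emph{new} if this product differs from every $E(\vec\nu')$; equality of two definable graphs is first-order. We also require that it differs from the products of all lexicographically smaller candidate pairs, so that each new element is kept only once. The new pairs are then listed in lexicographic order of $(\vec\nu,\vec p)$, which is a linear order on numerical tuples. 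The $j$-th new pair is appended at index $\vec c+j$, where $j$ is obtained with the counting operator as the number of smaller new pairs and $\vec c$ is the current count. Integer arithmetic on $k$-tuples of numbers below $n^k$ is available in $\FPC$ by Immerman--Vardi on $\oA$. Restricting $s,t$ to the colour class $A_i$ (via $\preceq$ and $\mu=i$) and making $\Phi^*$ the identity outside $A_i$, or simply restricting its domain, keeps everything inside the colour class. Since $\Phi$ only has numerical parameters, the whole construction is a single $\FPC$ formula in the parameter $\mu$. Because it is inflationary and only appends, it reaches a fixed point in at most $|\Gamma_i|<n^k$ stages.

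For correctness we check three things. The set $\{E(\vec\nu)\}$ is closed under right multiplication by the generators at the fixed point and contains $\Id$. It is therefore a submonoid of the finite group $\Gamma_i$ containing the generators, hence equal to $\Gamma_i$. The map $\vec\nu\mapsto E(\vec\nu)$ is injective by the newness test. Its domain is an initial segment by construction, and it fits in $k$-tuples because $|\Gamma_i|<n^k$.

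The main obstacle is the bookkeeping of the simultaneous inflationary induction. We need the stage-wise "current count" and the freshly appended block to be definable without referring to earlier stages by index. To handle this, I would carry a stage marker as an extra numerical coordinate, or encode the counter as part of the fixed-point relation, using a standard simultaneous fixed point reduced to a single $\ifp$. The dedup-by-lexicographic-minimality step also has to be written carefully. Everything else is routine $\FPC$ arithmetic.
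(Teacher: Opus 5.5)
Your proposal is correct and follows essentially the paper's own proof: a breadth-first search of the Cayley graph of $\Gamma_i$ by an inflationary fixed point, starting from the identity at index $\vec 0$, where a pair $(\vec\nu,\vec p)$ counts as fresh when its product is neither already present nor produced by a lexicographically smaller pair, and each fresh product is appended at index (current count) $+$ (number of smaller fresh pairs). The bookkeeping you flag as the main obstacle does not actually arise: the inflationary fixed-point operator is evaluated on the current stage, so the current count is just a counting term over the current relation, exactly as in the paper, and no stage marker or separate counter is needed.
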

  \begin{proof}
    We carry out a breadth-first search of the Cayley graph of $\Gamma_i$ with respect to the ordered generating set $\Phi(\mathfrak A,i)$, using an inflationary fixed-point.
    Let $X$ be a relation variable of type $\numb\cdot\numb^{k}\cdot\element^2$, the intended value of $X(i,\vec\nu)$ being the graph of the $\vec\nu$-th element of $\Gamma_i$. Stage $0$ sets $X(i,\vec 0) := \graph(\Id)$.
    Given $X$, the composition of the permutation indexed by $\vec\nu$ with the generator indexed by $\vec p$ is defined by
    \[\mathsf{comp}(i,\vec\nu,\vec p,s,t) := \exists u\,\bigl(X(i,\vec\nu,s,u)\wedge\Phi(i,\vec p,u,t)\bigr),\]
    and the pair $(\vec\nu,\vec p)$ is \emph{fresh} if $\mathsf{comp}(i,\vec\nu,\vec p)$ occurs neither in $X$ nor as $\mathsf{comp}(i,\vec\nu',\vec p')$ for a lexicographically smaller pair. It is easily seen that freshness is first-order definable. At each stage, for every fresh $(\vec\nu,\vec p)$, the permutation graph $\mathsf{comp}(i,\vec\nu,\vec p)$ is added to $X(i)$ at the index
    \[\bigl(\#\vec\rho.\ \exists u,v\ X(i,\vec\rho,u,v)\bigr) + \bigl(\#(\vec\nu',\vec p')<_{\mathrm{lex}}(\vec\nu,\vec p).\ (\vec\nu',\vec p')\text{ fresh in }X\bigr).\]
    The operator is inflationary, so its fixed point $\Phi^*$ is $\FPC$-definable. An induction on the number of stages shows that after $\ell$ stages $X(i,\cdot)$ holds exactly the elements of $\Gamma_i$ expressible as a product of at most $\ell$ generators, indexed without repetition. As $\Phi(\mathfrak A,i)$ generates $\Gamma_i$ and $|\Gamma_i|<n^k$, the fixed point is reached after fewer than $n^k$ stages and enumerates all of $\Gamma_i$, the indices used forming an initial segment of $(\oA)^k$.
  \end{proof}

  \begin{corollary}
    \label[corollary]{corol:centre_ordered}
    There is an $\FPC$ formula defining an ordered generating set for $\mathbf Z(\mathfrak A) := \prod_{i\in\range m} Z(\Gamma_i)$ on $\oPBCG_k$.
  \end{corollary}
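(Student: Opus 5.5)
The plan is to use the full enumeration $\Phi^*$ from \cref{lem:enumeration} and filter it by a first-order centrality test. Since $\Phi^*(\mathfrak A,i,\cdot)$ lists \emph{every} element of $\Gamma_i$ without repetition, the central elements form the sub-list of those indices $\vec\nu$ whose permutation commutes with every $\Phi^*(\mathfrak A,i,\vec\nu')$. Because $Z(\Gamma_i)$ is a subgroup, this sub-list is itself a generating set; in fact it lists the whole of $Z(\Gamma_i)$. The product $\mathbf Z(\mathfrak A)=\prod_i Z(\Gamma_i)$ is then generated by the union over colours $i$ of these sub-lists, each permutation extended by the identity outside $A_i$.

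Concretely, define
\[\mathsf{cent}(i,\vec\nu) := \exists s\,t\ \Phi^*(i,\vec\nu,s,t)\wedge\forall\vec\nu'\,\forall s\,\forall t\,\bigl(\exists u\,(\Phi^*(i,\vec\nu,s,u)\wedge\Phi^*(i,\vec\nu',u,t))\leftrightarrow\exists u\,(\Phi^*(i,\vec\nu',s,u)\wedge\Phi^*(i,\vec\nu,u,t))\bigr).\]
This states that the two compositions have equal graphs, i.e.\ that the permutations commute. It is an $\FPC$ formula, since $\Phi^*$ is one and quantification over numerical tuples is allowed. Then set
\[\psi(\mu,\vec\nu,s,t) := \mathsf{cent}(\mu,\vec\nu)\wedge\bigl((s\in A_\mu\wedge\Phi^*(\mu,\vec\nu,s,t))\vee(s\notin A_\mu\wedge s=t)\bigr).\]
Membership $s\in A_\mu$, meaning $s$ lies in the $\mu$-th colour class of the pre-order $\preceq$, is $\FPC$-definable by counting the classes below $s$. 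The parameters $(\mu,\vec\nu)$ are purely numerical, so $\psi$ defines an ordered generating set in the sense of \cref{def:ordered_generating_set}.

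It remains to verify $\GspanFP{\psi}{\mu\vec\nu.s.t}{\mathfrak A}=\mathbf Z(\mathfrak A)$. For inclusion $\subseteq$: each defined permutation is some $z\in Z(\Gamma_i)$ acting on $A_i$ and trivially elsewhere, hence lies in $\prod_j Z(\Gamma_j)$. For inclusion $\supseteq$: every element of the product is a commuting product of its restrictions to the colour classes, each of which is among the listed generators. A small care point is parameter values $\vec\nu$ outside the initial segment used by $\Phi^*$. There $\Phi^*$ is empty, and the conjunct $\exists s\,t\,\Phi^*$ makes $\psi$ empty too, so no spurious permutation is defined. Likewise numerical $\mu$ that do not index a colour are harmless.

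The main (mild) obstacle is this bookkeeping: ensuring $\psi(\mathfrak A,\mu,\vec\nu)$ is either empty or a genuine permutation graph of $A$ for every parameter valuation. One must also check that the arity matches whatever form later sections require, for instance $\numb^{k+1}\cdot\element^2$ as in \cref{def:pbcg}. If compaction of the index set is needed, the same counting trick used in \cref{lem:enumeration} reindexes the central elements onto an initial segment. Alternatively, \cref{rem:pcbgkk} absorbs any arity mismatch.
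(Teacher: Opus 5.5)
Your proposal is correct and follows essentially the same route as the paper: you filter the full enumeration $\Phi^*$ from \cref{lem:enumeration} by a first-order commutation test, and you index the surviving central elements of all colours by $(\mu,\vec\nu)$. Your explicit extension by the identity outside $A_\mu$ and your emptiness guard on unused indices are bookkeeping that the paper leaves implicit.
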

  \begin{proof}
    By \cref{lem:enumeration} the elements of $\Gamma_i$ are indexed by $\Phi^*$, so
    \[\zeta(\mu,\vec\nu,s,t) := \begin{landcases}
      \Phi^*(\mu,\vec\nu,s,t)\\
      \forall\vec\rho, \Phi^*(\mu,\vec\nu)\circ \Phi^*(\mu,\vec\rho) = \Phi^*(\mu,\vec \rho)\circ \Phi^*(\mu,\vec\nu)\end{landcases}\]
    is first-order in $\Phi^*$ and enumerates $Z(\Gamma_i)$, in the order inherited from $\Phi^*$. Ordering $\bigcup_i Z(\Gamma_i)$ by the pair $(\mu,\vec\nu)$ gives an ordered generating set for the direct product, of size at most $m\cdot n^k$.
  \end{proof}

  To conclude this analysis, the commutativity of the local groups may be assumed without loss of generality.

  \begin{proposition}
    \label[proposition]{prop:wlog_abelian}
    Let $\mathfrak A\in\oPBCG_k$ with local groups $(\Gamma_i)$. Then the family $(Z(\Gamma_i))_{i\in\range m}$ also satisfies~\eqref{eqn:local_groups}, with $|Z(\Gamma_i)| < n^k$. The tower~\eqref{eqn:subgroup_tower_for_colored_graphs} may therefore be instantiated at it, and its base group
    \[\mathbf Z(\mathfrak A)\ :=\ \prod_{i\in\range m} Z(\Gamma_i)\]
    is abelian, contains $\Aut(\mathfrak A)$, and admits an $\FPC$-definable ordered generating set on $\mathfrak A$.
  \end{proposition}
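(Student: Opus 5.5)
The plan is to check the four claims of \cref{prop:wlog_abelian} one at a time, since each follows directly from results already established.

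First, that $(Z(\Gamma_i))_{i\in\range m}$ satisfies~\eqref{eqn:local_groups}. By~\eqref{eqn:local_centre} of \cref{corol:opbcg_abelian} we have $\Aut(\mathfrak A_i)\le Z(\Gamma_i)$. Since $Z(\Gamma_i)\le\Gamma_i\le\Sym(A_i)$, this gives
\[\Aut(\mathfrak A_i)\le Z(\Gamma_i)\le\Sym(A_i),\]
and $|Z(\Gamma_i)|\le|\Gamma_i|<n^k$. The hypotheses of \cref{lem:pbcg_poly_index} therefore hold for the new family. So the tower~\eqref{eqn:subgroup_tower_for_colored_graphs} built from the centres has length at most $n^2$ and indices below $n^{2k}$. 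Its top group is still $\Aut_c(\mathfrak A)$, because the definition of $G_{|\mathcal I|}$ only needs the product of local groups to contain every colour-preserving automorphism.

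Second, the structural claims on $\mathbf Z(\mathfrak A)$.
\begin{itemize}
\item \emph{Abelian.} Each $Z(\Gamma_i)$ is abelian, and the factors act on disjoint colour classes, so they commute with one another. Hence the direct product is abelian.
\item \emph{Contains $\Aut(\mathfrak A)$.} This is exactly~\eqref{eqn:global_centre}, which gives $\Aut(\mathfrak A)\le Z(\mathbf G(\mathfrak A))=\prod_i Z(\Gamma_i)=\mathbf Z(\mathfrak A)$. Since $\preceq^{\mathfrak A}$ is in the signature, automorphisms are colour-preserving, so $\Aut(\mathfrak A)=\Aut_c(\mathfrak A)$.
\item \emph{Definable ordered generating set.} This is \cref{corol:centre_ordered}. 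The formula $\zeta$ is indexed by the numerical tuple $(\mu,\vec\nu)$, which is exactly the format of \cref{def:ordered_generating_set}.
\end{itemize}

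The only step needing care is the last one: matching \cref{corol:centre_ordered} to \cref{def:ordered_generating_set}. Two things must be checked.
\begin{enumerate}
\item Each $\zeta(\mathfrak A,i,\vec\nu)$, which is a graph of a permutation of $A_i$, must be read as a permutation of the whole domain $A$ by extending it with the identity outside $A_i$. This can be done by a first-order disjunction, as in $\varphi_{\mathbf G}$ of \cref{thm:strong_limit}.
\item Indices $(\mu,\vec\nu)$ at which $\zeta$ is empty must not be treated as permutations. This holds automatically, because only parameter values defining a graph of a permutation contribute to $\GspanFP{\zeta}{\mu\vec\nu.s.t}{\mathfrak A}$.
\end{enumerate}
The union of these extended permutations generates the internal direct product $\prod_i Z(\Gamma_i)$. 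This completes the proof.
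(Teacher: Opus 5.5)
Your proof is correct and follows the paper's argument essentially verbatim. You use \eqref{eqn:local_centre} together with $Z(\Gamma_i)\le\Gamma_i$ for the local condition and the size bound, commutativity of the direct product of the centres, \eqref{eqn:global_centre} for $\Aut(\mathfrak A)\le\mathbf Z(\mathfrak A)$, and \cref{corol:centre_ordered} for the ordered generating set. Your extra checks (extending each $\zeta(\mathfrak A,i,\vec\nu)$ by the identity outside $A_i$, and noting that empty indices contribute nothing to the generated group) are details the paper leaves implicit, and both are sound.
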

  \begin{proof}
    By~\eqref{eqn:local_centre}, $\Aut(\mathfrak A_i)\le Z(\Gamma_i)$, and $Z(\Gamma_i)\le\Gamma_i\le\Sym(A_i)$, so~\eqref{eqn:local_groups} holds for this family; moreover $|Z(\Gamma_i)|\le|\Gamma_i| < n^k$.
    A direct product of abelian groups is abelian, and $\Aut(\mathfrak A)\le\mathbf Z(\mathfrak A)$ is shown in \cref{corol:opbcg_abelian}. The last claim is \cref{corol:centre_ordered}, the formula $\zeta$ there being an $\FPC$ formula over the signature of $\mathfrak A$.
  \end{proof}

  \subsection{Definability of the automorphism group}

  \begin{theorem}
    \label[theorem]{thm:pbcg}
    $\FPC$ defines the automorphism group of $\oPBCG_k$ structures.
  \end{theorem}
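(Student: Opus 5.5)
The plan is to apply \cref{corol:use_of_ord_op} with $\mathbf G := \mathbf Z(\mathfrak A) = \prod_{i\in\range m} Z(\Gamma_i)$ and $\mathbf H(\mathfrak A) := \Aut(\mathfrak A)$. Since $\preceq^{\mathfrak A}$ is part of the signature, every automorphism preserves the colouring, so $\Aut(\mathfrak A) = \Aut_c(\mathfrak A)$. By \cref{prop:wlog_abelian}, $\mathbf Z(\mathfrak A)$ is abelian, contains $\Aut(\mathfrak A)$, and has an $\FPC$-definable ordered generating set, namely the formula $\zeta$ of \cref{corol:centre_ordered}. So two hypotheses of \cref{corol:use_of_ord_op} already hold. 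It remains to show that $\FPC$ witnesses the $\ell$-accessibility of $\Aut(\mathfrak A)$ from $\mathbf Z(\mathfrak A)$ for some fixed $\ell$. For this we instantiate the tower~\eqref{eqn:subgroup_tower_for_colored_graphs} with local groups $Z(\Gamma_i)$.

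The chain is $G_\lambda := \{\sigma\in\mathbf Z(\mathfrak A)\tq \forall\kappa<\lambda,\ \sigma_{\restriction A_{s_\kappa}}\in\Aut_c(\mathfrak A_{s_\kappa})\}$. We fix the ordering of $\mathcal I$ definably, ordering pairs $\{i,j\}$ with $i\le j$ lexicographically by the numerical indices of their colours. Colours are numbered canonically in $\FPC$ by counting the $\preceq$-classes below a given one, so $s_\lambda$ is definable from $\lambda$ encoded as a $2$-tuple of numerical values.

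The membership formula $\varphi_\in(\vec\mu,X)$ will state that $X$ maps $A_{s_\lambda}$ to itself and preserves $E$ on it. Here $\lambda$ is the integer encoded by $\vec\mu$; for $\lambda\ge|\mathcal I|$ we let $\varphi_\in$ hold trivially, which pads the chain as in \cref{sec:subgroup_comp}. Concretely, $\varphi_\in$ quantifies over $x,y$ in the colour classes indexed by $s_\lambda$, takes their $X$-images $x',y'$, and requires $E(x,y)\leftrightarrow E(x',y')$. This is an $\FO$ formula plus the counting needed to identify colour indices, much like the $\varphi_\in$ in the proof of \cref{thm:strong_limit}. By induction, the chain this defines coincides with $(G_\lambda)$: $G_{\lambda+1}$ is the set of elements of $G_\lambda$ whose restriction to $A_{s_\lambda}$ is in $\Aut_c(\mathfrak A_{s_\lambda})$. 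Colour preservation is automatic, since elements of $\mathbf Z(\mathfrak A)$ preserve each $A_i$.

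\Cref{lem:pbcg_poly_index}, applied to the family $(Z(\Gamma_i))$ (allowed by \cref{prop:wlog_abelian}, with $|Z(\Gamma_i)|<n^k$), gives $|\mathcal I|\le n^2$ and indices below $n^{2k}$. Taking $\ell := \max(2k,2)$ makes every requirement of \cref{def:witnessed_accessibility} hold, including the final group $G_{|\mathcal I|} = \Aut_c(\mathfrak A) = \Aut(\mathfrak A)$: every pair of colours, including singletons, is checked, so all edges are preserved. \Cref{corol:use_of_ord_op} then yields an $\FPC$ formula $\varphi_{\mathbf H}$ defining an ordered generating set for $\Aut(\mathfrak A)$, uniformly over $\oPBCG_k$.

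The main obstacle is the bookkeeping of types and uniformity. \Cref{corol:use_of_ord_op} takes an ordered generating set indexed by a \emph{tuple} of numerical variables, but $\zeta$ is indexed by $(\mu,\vec\nu)$; this is fine since both components are numerical. We also need the chain length to be normalised to $n^\ell$ exactly, which padding handles. Finally, the formulas must be uniform over the whole class, including small $n$ where $n^k$ bounds might be tight, and I would check these edge cases carefully but expect no real difficulty. If the abelian route through \cref{corol:use_of_ord_op} hit a technical snag, a fallback would be \cref{thm:ordered_schreier_sims} with base $\prod_i\Gamma_i$, but that yields only $\FP+\ord$, so the centre reduction is essential for the $\FPC$ claim.
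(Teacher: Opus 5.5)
Your proof is essentially the paper's: the same base group $\mathbf Z(\mathfrak A)=\prod_i Z(\Gamma_i)$ with ordered generating set $\zeta$, the same edge-checking $\varphi_\in$ over the colour pairs $s_\lambda$, \cref{lem:pbcg_poly_index} for the index bounds, and \cref{corol:use_of_ord_op} to conclude. The one check you omit, and which the paper makes explicitly, is that the last group of your chain, $\{\sigma\in\mathbf Z(\mathfrak A)\tq \sigma\text{ preserves }E\}$, also preserves $\Phi$. This holds because each $\sigma_{\restriction A_i}\in Z(\Gamma_i)$ commutes with every permutation enumerated by $\Phi(\mathfrak A,i)$ and hence fixes its graph. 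Without it, edge- and colour-preservation only give $\Aut(\mathfrak A)\le G_{|\mathcal I|}$, not the equality $G_{|\mathcal I|}=\Aut(\mathfrak A)$ that you assert.
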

  \begin{proof}
    By \cref{prop:wlog_abelian} we instantiate the tower~\eqref{eqn:subgroup_tower_for_colored_graphs} at the local groups $Z(\Gamma_i)$, so that its base group is $\mathbf G(\mathfrak A) := \mathbf Z(\mathfrak A) = \prod_i Z(\Gamma_i)$. 
    
    As argued above, for structures in $\oPBCG_k$, this subgroup chain witnesses the accessibility of $\Aut_c(\mathfrak A)$ from $\mathbf G(\mathfrak A)$: its indices are bounded by \cref{lem:pbcg_poly_index}, applicable since $|Z(\Gamma_i)|\le|\Gamma_i| < n^k$, and $\Aut_c(\mathfrak A)\le\mathbf G(\mathfrak A) = \prod_i Z(\Gamma_i)$ by~\cref{corol:opbcg_abelian}. 
    
    As $\mathbf G(\mathfrak A)$ is abelian, \cref{corol:use_of_ord_op} applies in place of \cref{thm:ordered_schreier_sims}, and $\FPC$ suffices to define a generating set of any accessible subgroup of $\mathbf G(\mathfrak A)$. 
    It only remains to supply the two formulae required by \cref{corol:use_of_ord_op} for that chain.

    Since $Z(\Gamma_i)$ is generated by the permutations enumerated by $\zeta(\mathfrak A,i)$, and a family of generating sets for the factors generates their direct product, we may take $\varphi_{\mathbf G}(\mu,\vec\nu,s,t) := \zeta(\mu,\vec\nu,s,t)$, which is $\FPC$-definable by \cref{corol:centre_ordered} and satisfies $\GspanFP{\varphi_\mathbf G}{\mu\vec\nu}{\mathfrak A} = \mathbf G(\mathfrak A)$.
    Finally, we define $\varphi_\in$ as follows:
    \[\varphi_\in(\mu,\nu, R) := \forall x,y, \begin{lorcases}
      x\not\in (A_\mu\cup A_\nu)\\
      y\not\in (A_\mu\cup A_\nu)\\
      \exists x',y',\begin{landcases}
      R(x,x')\land
      R(y,y')\\
      E(x,y)\iff E(x',y').
    \end{landcases}
  \end{lorcases}\]
    If $R^{\mathfrak A}$ is the graph of a permutation $\sigma\in G_\lambda$, and $s_\lambda = \{i,j\}$ where $i \le j$, then $(\mathfrak A, i,j,R^{\mathfrak A})\models \varphi_\in$ iff $\sigma$ respects the edge relation on $\mathfrak A_{s_\lambda}$. This is indeed equivalent to $\sigma\in\Aut_c(\mathfrak A_{s_\lambda})$. Note that $\Aut_c(\mathfrak A_{s_\lambda})$ also requires $\Phi$ to be preserved, but every $\sigma\in\mathbf G(\mathfrak A)$ does so automatically. Indeed $\sigma_{\restriction A_i}\in Z(\Gamma_i)$ for each $i$, so $\sigma_{\restriction A_i}$ commutes with ---and hence fixes the graph of--- each permutation enumerated by $\Phi(\mathfrak A,i)$.
    This concludes the proof as we can now apply \cref{corol:use_of_ord_op} on the chain of subgroups defined by $\varphi_\in$, starting from the group $\mathbf G$ for which an ordered generating set is defined by $\varphi_{\mathbf G}$, yielding a generating set for $\mathbf H(\mathfrak A) = \Aut_c(\mathfrak A)$.
  \end{proof}

This result should be contrasted with Lichter's recent breakthrough~\cite{lichterSeparating2023} that $\FP + \rk < \P$. Indeed, the structures shown by Lichter to be indistinguishable in $\FP + \rk$ (which is strictly more expressive than $\FPC$~\cite{dawarLogicsRankOperators2009}) are coloured in such a way that each colour class has a polynomially bounded, abelian, ordered automorphism group. As such, while $\FP + \rk$ cannot distinguish those structures, \cref{thm:pbcg} applies, and $\FPC$ can define their automorphism groups.

While there is generally a polynomial-time reduction from the Graph Isomorphism problem over $\mathcal C$ to the Graph Automorphism problem over $\mathcal C$, this requires the class of graphs $\mathcal C$ at hand to be \emph{union-closed}. In the current context, this is not the case.
Indeed, proving that $\oPBCG_k$ is union-closed would entail, given two such structures $\mathfrak A,\mathfrak B$, to build, within the logic at hand, an ordered enumeration of the automorphisms of the subgraph induced by $A_i\cup B_i$, for each $i$. It is a corollary of Lichter's result and \cref{thm:pbcg} that this cannot be done in neither $\FPC$ nor $\FP + \rk$.

\Cref{lem:ordered_centralises} also delimits the technique of \cref{sec:subgroup_comp} more generally. \Cref{thm:ordered_schreier_sims} is of use only for ambient groups $\mathbf G$ lying in the centraliser of $\Aut(\mathfrak A)$, and its conclusion can concern $\mathbf H(\mathfrak A) = \Aut(\mathfrak A)$ only when that group is abelian.

\section{Conclusion}

Pursuing the study initiated in~\cite{dahanGroup2025} of the ability of candidate logics for $\P$ to express group-theoretic operations, we have shown that generating sets for accessible subgroups are \emph{not} definable in any logic for $\P$ in the general case (\cref{sec:limits}), but are definable in $\FP + \ord$ when the base group admits a definable ordered generating set, and in $\FPC$ when that group is moreover abelian (\cref{sec:subgroup_comp}). This enables the definition of a generating set for the automorphism group of a large class of graphs --- Polynomially Bounded Colour-class graphs with ordered colour-class generators, which include all graphs with abelian colours --- within $\FPC$ (\cref{sec:col_graph_aut}).

It should be noted that \cref{lem:ordered_centralises} implies that this generalization is still constrained to structures whose automorphism groups are abelian. This constraint can be traced back to the assumption that the generating sets are ordered canonically.
This prompts the question whether one can relax the ordering assumption, so that \cref{thm:ordered_schreier_sims} remains true, without restricting the structures to those with $\Aut(\mathfrak A)$ abelian.

There are several other directions in which to extend this study. First, the notion of PBCG is quite artificial, and it would be interesting to see whether the results of \cref{sec:subgroup_comp} enable the definition of generating sets for automorphism groups of more natural classes of graphs.
Second, the results of \cref{sec:limits} and \cref{sec:subgroup_comp} exhibit a strong distinction between the definability of generating sets for accessible subgroups in the ordered and unordered contexts. The existence of middle grounds between those two settings could yield interesting results. For instance, the definability of accessible subgroups could be studied in the context of \emph{unordered} groups with strong structural properties.
This was achieved in~\cite{dahanGroup2025} for abelian groups by shifting the representation of groups from generating sets to a notion of morphism-definability. Whether this can be extended to other classes of groups remains an open question.

\begin{acks}
I thank the anonymous referees of LICS 2026 for their careful reading and for
numerous corrections and suggestions, which have improved this article
considerably.
\end{acks}

\printbibliography
\clearpage
\appendix

\section{Proofs of \cref{sec:subgroup_comp}}
\label{sec:proofs_sc}

\begin{proof}[Proof of \cref{lem:fpc_sift}]
  We first show that we can build a formula $\siftEval(R,R_\sigma,\vec\mu,\vec s,\vec t)$ that defines the trace of the values of $g$ along the for-loop in the evaluation of $\sift((R),\sigma)$. Let $g_i$ be the value of $g$ when entering the for-loop of~\cref{alg:sift} with $\lambda = i$. Precisely, if $i$ is the integer encoded by $\vec a$, and $\lambda = i$ is reached along the computation, $\siftEval(\mathfrak A,R^{\mathfrak A}, \sigma, \vec a) = \graph(g_i)$.
  \begin{equation}
    \label{eqn:sift_formula}
    \siftEval(R,R_\sigma) := \ifp_{\Theta, \vec\mu,\vec s,\vec t} \begin{lorcases}
    \vec \mu = 0 \land \sigma(\vec s) = \vec t\\
    \exists \vec \nu, \begin{landcases}\varphi_\in(\vec\mu,X)[X(\vec x,\vec y) / \chi(\vec x,\vec y)]\\
      \chi(\vec s,\vec t)
    \end{landcases}
   \end{lorcases}\end{equation}
   where
   \[\chi(\vec x,\vec y) := \exists \vec z, \begin{landcases}
      \Theta(\vec \mu - 1, \vec x, \vec z)\\
      R(\vec \mu -1,\vec \nu, \vec y,\vec z)
   \end{landcases}\]
  and where by $\vec \mu - 1$, we mean the tuple encoding the predecessor of the integer encoded by $\vec\mu$ (which is obviously definable in $\FPC$).
  Recall that $R$ is our given transversal table and $\varphi_\in$ the membership formula from \cref{def:witnessed_accessibility}, that takes as input a numerical tuple, and a relation encoding a permutation.

  Let us explain this definition. We claim that this inflationary fixed-point definition is such that, after $k$ iterations, $\Theta$ evaluates to $\bigcup_{i < k} \{i\}\times \graph(g_i)$.
  By definition of the $\ifp$ operator, $\Theta$ initially evaluates to the empty relation, and as such, after one iteration, it is easy to see that $\Theta$ evaluates to $\{0\}\times \graph(\sigma)$. This is in accordance with the fact that $g_0 = \sigma$.

  Let us now show the correctness of the recursive part of this fix-point definition, starting with the definition of $\chi$.
  Notice that, if $R(\mathfrak A,i -1, j) = \graph(\tau)$ and $\Theta(\mathfrak A, i - 1) = \graph(g_{i -1})$ --- that is, if the transversal table contains the permutation $\tau$ at position $(i-1,j)$ and, by induction hypothesis, the permutation resulting of $i-1$ steps of sifting is indeed contained in $\Theta(\mathfrak A,i-1)$ --- then, $\chi(\mathfrak A,R,\Theta,i,j) = \graph(\tau^{-1}g_{i -1})$. 
  That is, $\chi$ is precisely the composition of the permutation defined by $\Theta(\vec\mu - 1)$ with the inverse of the permutation defined by $R(\vec \mu - 1, \vec \nu)$.

  \Cref{eqn:sift_formula} effectively assigns the permutation defined by $\chi$ to $\Theta(i)$, for the value of $\vec\nu$ which encodes the unique $j$ (if such a $j$ exists) such that $R(i-1,j)$ and $\Theta(i - 1)$ belong to the same coset of $\mathbf G_i(\mathfrak A)$ in $\mathbf G_{i -1}(\mathfrak A)$ (as checked using $\varphi_\in$) which is precisely the operation conducted within the then-clause on line 4 of \cref{alg:sift}.

  Now, if at the $i$-th entry in the for-loop, $g_i$ does not belong to any coset defined by $R(\mathfrak A, i)$, none of the two disjuncts in \cref{eqn:sift_formula} hold, and $\Theta(\vec \nu) = \emptyset$ holds for any number of iterations of the formula within the $\ifp$ operator, and for any $\vec\nu$ encoding an integer $j \ge i$.
  From this stems the definition of $\siftLevel$:
  \[\siftLevel(R,R_\sigma,\vec \mu) := \begin{landcases}
    \exists \vec s,\vec t,\siftEval(R,R_\sigma,\vec\mu,\vec s,\vec t)\\
    \forall \vec s,\vec t,\lnot\siftEval(R,R_\sigma,\vec\mu + 1,\vec s,\vec t)
  \end{landcases}
  \]
  From there, $\siftRes$ is also easily definable, as $\siftRes(\mathfrak A, R,\graph(\sigma))$ should evaluate to $\siftEval(\mathfrak A, R,\graph(\sigma),i)$, where $i$ is the minimum between $n^k$ and the number of iterations after which the $\sift$ procedure halts, i.e., the unique integer in $\siftLevel(\mathfrak A,R,\graph(\sigma))$:
  \[
    \siftRes(R,R_\sigma,\vec s,\vec t) := \exists\vec\mu, \siftLevel(R,R_\sigma,\vec\mu)\land \siftEval(R,R_\sigma,\vec\mu,\vec s,\vec t)\qedhere
  \]
\end{proof}

\begin{proof}[Proof of \cref{lem:ins_psgs}]
  We first depict the expected behaviour of $\ins$.
  Consider $\tau$ the result of the partial sifting of $\sigma$, that is $\tau := \perm(\siftRes(\mathfrak A,R^{\mathfrak A},\graph(\sigma)))$ and $i$ the unique index of $R^{\mathfrak A}$ such that $(\mathfrak A,R^{\mathfrak A},\graph(\sigma),i)\models\siftLevel$.
  \begin{itemize}
    \item If $i = n^k - 1$, then $\tau$ should be added to $S$. To make sure that $S$ always contains less than $n^{2|T|}$ elements, we only add $\sigma$ to $S$ if $\tau\not\in\GspanFP{S(\vec\lambda,\vec s,\vec t)}{\vec\lambda}{\mathfrak A}$ (as argued before, this can be tested using the $\ord$ operator).
    \item Otherwise, $\tau$ is not represented in the partial transversal $R(i + 1)$. Therefore, $\tau$ must be added to $R(\vec i + 1,j)$, where $j$ is the minimal value for which $R(\mathfrak A,i + 1,j) = \emptyset$. 
  \end{itemize}
  We now give the definitions of $\insTable$ and $\insList$, where the index $i$ is encoded by the tuple $\vec \mu$, and $j$ by the tuple $\vec\nu$:
  \begin{align*}
    \insTable(R,\sigma,\vec \mu,\vec\nu,\vec s,\vec t) &:= 
    \begin{lorcases}
      R(\vec \mu,\vec\nu,\vec s,\vec t)\\
      \begin{landcases}
        \siftLevel(R,\sigma,\vec \mu)\land \vec \mu\ne n^k -1\\
        \forall \vec {s'},\vec {t'}, \lnot R(\vec \mu,\vec\nu,\vec {s'},\vec {t'})\\
        \forall \vec\nu'<\vec\nu, \exists \vec{s'},\vec{t'}, R(\vec \mu,\vec\nu',\vec{s'},\vec{t'})\\
        \forall\vec\nu'<\vec\nu, \lnot\varphi_\in(\vec\mu, X)[X(\vec x,\vec y) / R_{\rho^{-1}\tau}(\vec\nu',\vec x,\vec y)]\\
        \tau(\vec s) = \vec t\\
      \end{landcases}\\
  \end{lorcases}\\
  \insList(R,S,\sigma,\vec \lambda,\vec s,\vec t) &:=
  \begin{landcases}
    \forall \vec \lambda'<\vec \lambda, \exists \vec{s'},\vec{t'}, S(\lambda',\vec{s'},\vec{t'})\\
    \begin{lorcases}
      S(\vec\lambda,s,t)\\
      \begin{landcases}
        \forall \vec{s'},\vec{t'}, \lnot S(\lambda,\vec{s'},\vec{t'})\\
        \siftLevel(R,\sigma, n^k -1)\\
        \tau(s) = t\\
        \lnot(\tau\in\langle S\rangle)_{\vec\lambda,\vec s, \vec t}
      \end{landcases}
    \end{lorcases}
  \end{landcases}
  \intertext{where $\tau$ and $\rho^{-1}\tau$ are the permutations defined by}
  R_\tau(\vec s,\vec t) &:= \siftRes(T,\sigma, \vec s, \vec t)\\
  R_{\rho^{-1}\tau}(\vec\nu',\vec s,\vec t) &:= \exists \vec z, \begin{landcases}R_{\tau}(\vec x,\vec z)\\
  R(\vec\mu,\vec\nu',\vec y,\vec z)
  \end{landcases}
\end{align*}

The last clause of $\insList$ ensures that the group $\langle S'\rangle$ is strictly larger than $\langle S\rangle$, therefore by a multiplicative factor. As such, since $|\mathbf H(\mathfrak A)|\le |A^T|!$, only $|A^T|\log(|A^T|)$ insertions into $S$ can effectively take place throughout the construction procedure (which ensures that tuples of arity $2|T|$ are large enough to represent indices of any $S$). 
\end{proof}

\begin{proof}[Proof of \cref{corol:proto_psgs}]
  Suppose that we have already added a subset $X$ of the permutations defined by $\varphi_{\mathbf G}$, and suppose additionally that $I^{\mathfrak A}$ is a numerical second-order relation containing exactly those indices $i$ for which we have already inserted $\perm(\varphi_\mathbf G(\mathfrak A,i))$. In this context, the index of the next permutation to be inserted can be defined in $\FPC$, and the updated value of $I$ too:
  \begin{align*}
    \mathrm{step}_{ind}(R',S',I,\vec p) &:= \forall \vec q<\vec p, I(\vec q)\\
  \intertext{With $\mathrm{step}_{ind}$ defined, we can easily define formulae to update $(R,S)$ accordingly:}
    \mathrm{step}_{table}(R',S',I,\vec\mu,\vec\nu,\vec s,\vec t) &:= \begin{lorcases}
      R'(\vec\mu,\vec\nu,\vec s,\vec t)\\
      \begin{landcases}
        \forall \vec {s'},\vec {t'}, \lnot R'(\vec\mu,\vec\nu,\vec {s'},\vec {t'})\\
        \exists \vec p,\begin{landcases}
          \lnot I(\vec p)\land \mathrm{step}_{ind}(R',S',I,\vec p)\\
          \mathrm{ins}_{\mathrm{table}}^*(R',S',R_\sigma,\vec \mu,\vec \nu,\vec s,\vec t)
        \end{landcases}
      \end{landcases}
    \end{lorcases}\\
    \mathrm{step}_{list}(R',S',I,\vec\lambda,\vec s,\vec t) &:= \begin{lorcases}
      S'(\vec\lambda,\vec s,\vec t)\\
      \begin{landcases}
        \forall \vec s',\vec t', \lnot S'(\vec\lambda,\vec s',\vec t')\\
        \exists \vec p, \begin{landcases}
          \lnot I(\vec p)\land \mathrm{step}_{ind}(R',S',I,\vec p)\\
          \mathrm{ins}_{\mathrm{list}}^*(R',S',R_\sigma,\vec \lambda,\vec s,\vec t)
        \end{landcases}
      \end{landcases}
    \end{lorcases}
  \end{align*}
  where the formula $\mathrm{ins}_{\mathrm{table}}^*$ (resp. $\mathrm{ins}_{\mathrm{list}}^*$) is the formula $\insTable$ (resp. $\insList$) where each occurrence of $R_\sigma(\vec x,\vec y)$ has been substituted by $\varphi_{\mathbf G}(\vec p, \vec x, \vec y)$.
  Those $\mathrm{step}$ formulae merely iterate the insertion formulae defined in the last lemma over all permutations defined by $\varphi_{\mathbf G}$. It is only left to consider the simultaneous fixed-point (for a definition of the simultaneous fixed-point operator and the proof that it can be defined with $\ifp$, we refer the reader to~\cite[Theorem 8.2.2]{ebbinghausFinite1995}) of those three formulae, however, we must ensure that the initial values of the fixed-point second-order variables are correctly set:
  \begin{align*}
    \protoTable(R,S,\vec \mu,\vec\nu,\vec s,\vec t) &:= (\sifp_{\vec\mu\vec\nu\vec s\vec t R';\vec\lambda\vec s\vec t S';\vec p I}\mathrm{step}'_{table};\mathrm{step}'_{list};\mathrm{step}_{ind})\\
    \protoList(R,S,\vec \lambda,\vec s,\vec t) &:= (\sifp_{\vec\lambda\vec s\vec t S';\vec\mu\vec\nu\vec s\vec t R';\vec p I}\mathrm{step}'_{list};\mathrm{step}'_{table};\mathrm{step}_{ind})
  \end{align*}
  Indeed, by definition of the $\sifp$ operator, all second-order variables are initiated to the empty relations, while we expect the iteration of the $\mathrm{step}$-formulae to apply initially on the relations $R,S$ given as input to the formulae $\protoTable,\protoList$. This motivates the following definition of the formulae $\mathrm{step'}$:
  \begin{align*}
  \mathrm{step}'_{table}(R,S,R',S',I,\vec\mu,\vec\nu,\vec s,\vec t) &:= \begin{lorcases}
    \forall \vec p, \lnot I(\vec p)\land R(\vec\mu,\vec\nu,\vec s,\vec t)\\
    \mathrm{step}'_{table}(R',S',I,\vec\mu,\vec\nu,\vec s,\vec t)
  \end{lorcases}\\
  \mathrm{step}'_{list}(R,S,R',S',I,\vec\mu,\vec\nu,\vec s,\vec t) &:= \begin{lorcases}
    \forall \vec p, \lnot I(\vec p)\land S(\vec\mu,\vec\nu,\vec s,\vec t)\\
    \mathrm{step}'_{list}(R',S',I,\vec\mu,\vec\nu,\vec s,\vec t)
  \end{lorcases}
  \end{align*}
\end{proof}

\begin{proof}[Proof of \cref{lem:sat_sgs}]
  Following \cref{alg:sgs_cons}, we consider all products of the form $\sigma\tau$ with $\sigma,\tau\in\bigcup \mathcal R\cup\mathcal S$, and insert them (as defined by the $\mathsf{ins}$ formulae) into the PSGS prototype defined by $(R,S)$, until saturation (and thus a fixed-point) is reached.

  It is vital not to insert simultaneously two such products, as inserting two different permutations at the same index would break the very structure of PSGS prototypes\footnote{This would lead inserting the \emph{union} of the graphs of the two permutations, which is not the graph of a permutation}. This is where the ordering on the indexings of $R$ and $S$ show most important, as they enable the definition of an ordered enumeration of products of ordered pairs of elements appearing in $(R,S)$, and we first exhibit this enumeration. In the remainder of this proof, we use tuples of the form $\vec n = (b_n,\vec\mu_n,\vec\nu_n,\vec\lambda_n)$ to represent the variables indexing the following enumeration:
  \newcommand
    \enum
    {\ensuremath{{\mathrm{enum}}}}
  
  \[\enum(R,S,\vec m,\vec n,\vec s,\vec t) :=
  \exists \vec u,\begin{landcases}
    \begin{lorcases}
      b_m = 0\land R(\vec\mu_m,\vec\nu_m,\vec s,\vec u)\\
      b_m = 1\land S(\vec\lambda_m,\vec s,\vec u)
    \end{lorcases}\vspace{0.5em}\\
    \begin{lorcases}
      b_n = 0 \land R(\vec\mu_n,\vec\nu_n,\vec u,\vec t)\\
      b_n = 1\land S(\vec\lambda_n,\vec u,\vec t)
    \end{lorcases}
  \end{landcases}
  \]
  Using the same method as in \cref{corol:proto_psgs}, we can use this enumeration to keep track of all pairs of elements whose product has already been inserted in $(R,S)$ in a second-order relation $I$. After one step, here is how $I$ should be updated:
  {\allowdisplaybreaks
  \begin{align*}
    \mathsf{step}^{\mathsf{sat}}_{ind}(R,S,I,\vec m,\vec n) &:= 
      \begin{landcases}
        \Bij(X)[X(\vec x,\vec y) / \enum(R,S,\vec m,\vec n,\vec x, \vec y)]\\
        \forall \vec m'\vec n'<\vec m\vec n, \begin{lorcases}
          I(\vec m',\vec n')\\
          \lnot\Bij(\enum(R,S,\vec m',\vec n'))
      \end{lorcases}
    \end{landcases}
    \intertext{
      where $\Bij(X)$ is a formula which holds iff $X$ is a $2k$-ary relation which is the graph of a bijection. With this upkeeping of the set of indices defined, an iteration of the saturation process can be defined as follows:
    }
    \mathsf{step}^{\mathsf{sat}}_{table}(R,S,I,\vec\mu,\vec\nu,\vec s,\vec t) &:=
    \begin{lorcases}
      R(\vec\mu,\vec\nu,\vec s,\vec t)\\
      \begin{landcases}
        \forall \vec s',\vec t', \lnot R(\vec\mu,\vec\nu,\vec s',\vec t')\\
        \exists \vec m,\vec n, \begin{landcases}
        \lnot I(\vec m,\vec n)\land \mathsf{step}^{\mathsf{sat}}_{ind}(R,S,I,\vec m,\vec n)\\
        \mathrm{ins}^*_{\mathrm{table}}(R,R_\sigma,\vec \mu,\vec \nu,\vec s,\vec t)
      \end{landcases}
    \end{landcases}
  \end{lorcases}\\
  \mathsf{step}^{\mathsf{sat}}_{list}(R,S,I,\vec\lambda,\vec s,\vec t)&:=
  \begin{lorcases}
    S(\vec\lambda,\vec s,\vec t)\\
    \begin{landcases}
      \forall \vec s',\vec t', \lnot S(\vec\lambda,\vec s',\vec t')\\
      \exists \vec m,\vec n, \begin{landcases}
      \lnot I(\vec m,\vec n)\land \mathsf{step}^{\mathsf{sat}}_{ind}(R,S,I,\vec m,\vec n)\\
      \mathrm{ins}^*_{\mathrm{list}}(R,S,R_\sigma,\vec \mu,\vec \nu,\vec s,\vec t)
    \end{landcases}
  \end{landcases}
\end{lorcases}
\end{align*}}
where, here, $\mathrm{ins}^*_{\mathrm{table}}$ (resp. $\mathrm{ins}^*_{\mathrm{list}}$) is the formula $\insTable$ (resp. $\insList$), where each occurrence of $R_\sigma(\vec x, \vec y)$ is substituted by $\enum(R,S,\vec m,\vec n)$.
Notice that only one permutation is added per iteration of 
\[(R,S,I)\mapsto (\mathsf{step}_{table}(\mathfrak A,R,S,I),\mathsf{step}_{list}(R,S,I), \mathsf{step}_{ind}(R,S,I))\] 

To obtain a completely saturated pair $(R,S)$, it is only left to consider the least fix-point of this iteration. This is once again an application of the simultaneous fixed-point operator, where special care is taken with regard to initial values:
\begin{align*}
  \satTable(R,S) &:= (\sifp_{\vec\mu\vec\nu\vec s\vec t R;\vec \lambda\vec s\vec t S;\vec m,\vec n, I}\mathsf{step}^\mathsf{sat}_{table};\mathsf{step}^\mathsf{sat}_{list}; \mathsf{step}^\mathsf{sat}_{ind})\\
  \satList(R,S) &:= (\sifp_{\vec \lambda\vec s\vec t S;\vec\mu\vec\nu\vec s\vec t R;\vec m,\vec n, I}\mathsf{step}^\mathsf{sat}_{list};\mathsf{step}^\mathsf{sat}_{table}; \mathsf{step}^\mathsf{sat}_{ind})
\end{align*}
Once again, in those expressions, the formulae $\mathsf{step}^\mathsf{sat}_{table}$ and $\mathsf{step}^\mathsf{sat}_{list}$ should be substituted by formulae taking into account the case where $I = \emptyset$ (to detect the initial situation), where they should evaluate respectively to the values of $R$ and $S$ given as input to $\satTable$ and $\satList$. This is done exactly as in the proof of \cref{corol:proto_psgs}. Note that, to avoid the variable capture of $R$ and $S$, those variables should be substituted in the definition of the $\mathsf{step}^{\mathsf{sat}}$ formulae, so that the $\sifp$ operator does not bind them.

It is only left to show that, if $(R,S)$ is a PSGS prototype,
\[(R',S') := (\satTable(\mathfrak A,R,S),\satList(A,R,S))\]
constitutes a PSGS for the same group. That is, for any $i < n^k$, we must show that $R'(i)$ defines a transversal of the cosets of $\mathbf G_{i + 1}(\mathfrak A)$ in $\mathbf G_i(\mathfrak A)$.

The following argument is a direct adaptation of~\cite[Theorem 1]{furstPolynomialtimeAlgorithmsPermutation1980}.
Notice that $(R',S')$ is a PSGS if any element $g\in\mathbf K(\mathfrak A)$ can be written as a product $r_1 r_2\dots r_{n^k -1} s$ with $r_i\in R'(i)$ and $s\in\langle S'\rangle$. (The construction below in fact produces an element $s$ belonging to $S'$ itself, but only $s\in\langle S'\rangle$ is needed, and it is this weaker form that the abelian case of \cref{corol:use_of_ord_op} will use.)
Indeed, in such a case, if $g\in G_i$ and $g\not\in G_{i + 1}$, it must hold that $r_1 = r_2 = \dots = r_{i-1} = \Id$ (as those are the initially set unique representatives in $R'$ of the coset $\Id G_{\lambda + 1}$ in $G_\lambda$), and thus, $r_i$ is a representative of the coset $gG_{i + 1}$.

Now, consider $g\in \mathbf K(\mathfrak A)$, and let us show that $g$ can be written as such a product. Because $\mathbf K(\mathfrak A) = \langle \bigcup R'\cup S'\rangle$, let $s_1\dots s_m$ be a decomposition of $g$ in elements of $\bigcup R'\cup S'$.
Define the \emph{index} of such an element $x\in\bigcup R'\cup S'$, denoted $i(x)$, as the unique $i$ such that $x\in R'(i)$ if such an $i$ exists, or $i(x) = n^k$ if $x\in S'$.
Let $y$ be an element of minimal index whose predecessor $x$ in the sequence $s_1,\dots,s_m$ is such that $i(x)\ge i(y)$. Because $x$ and $y$ both belong to $\bigcup R'\cup S'$, their product has been inserted in the table, and therefore their product admit a decomposition of the form $r_1r_2\dots r_{n^k-1}s$ with $r_i\in R'(i)$ and $s\in S'$. Moreover, since $xy\in G_{i(y)}$, for all $j<i(y)$, $r_j = \Id$. Therefore, rewriting $xy$ as $r_{i(y) + 1}\dots r_{n^k-1}$ in $s_1\dots, s_m$ yields another decomposition of $g$. This process can be iterated until no such $y$ can be found, at which point the sequence is such that, for any $\mu < \nu$, $i(s_\mu) < i(s_\nu)$, i.e. we have found an adequate decomposition of $g$.
\end{proof}

\begin{proof}[Proof of \cref{corol:use_of_ord_op}]
Commutativity has a radical implication on the saturation procedure: it is no longer necessary to consider products $\sigma\tau$ in which either factor belongs to $S$. We establish this first, then observe that it removes the only use made of the $\ord$ operator.

\medskip\noindent
Let $(R,S)$ be a PSGS prototype for the chain $\mathbf G_0(\mathfrak A)\ge\dots\ge\mathbf G_{n^k}(\mathfrak A)$, all of whose groups are abelian, and let $(R',S')$ be obtained by saturating $(R,S)$ with respect to the products $xy$ of pairs $x,y$ lying in a \emph{common row} $R(i)$ only. We claim that $(R',S')$ is already a PSGS.

By the criterion recalled in the proof of \cref{lem:sat_sgs}, it suffices to show that every $g\in\mathbf K(\mathfrak A) = \langle\bigcup R' \cup S'\rangle$ admits a decomposition $r_1r_2\dots r_{n^k - 1}s$ with $r_i\in R'(i)$ and $s\in\langle S'\rangle$.
Let $s_1\dots s_m$ be a decomposition of $g$ into elements of $\bigcup R'\cup S'$, and let $i(x)$ denote the index of $x$, as in the proof of \cref{lem:sat_sgs}. All these elements lie in the abelian group $\mathbf G_0(\mathfrak A)$, so the product is unchanged by permuting its factors; we may therefore assume the sequence sorted, that is $i(s_\lambda)\le i(s_\mu)$ whenever $\lambda\le\mu$. Factors of equal index are then consecutive.

We show by induction on $i<n^k$ that such a sorted decomposition may be brought to one having at most one factor of index $i$. The factors of index $n^k$, which lie in $S'$, are left untouched: their product lies in $\langle S'\rangle$, which is all the criterion requires.
Assume the claim established for all indices below $i$, and let $x_1,\dots,x_p$ be the consecutive factors of index $i$, with $p\ge2$. Both $x_1$ and $x_2$ lie in $R'(i)$, so by construction the product $x_1x_2$ has been sifted, and admits a decomposition into factors of $\bigcup R'\cup S'$. Since $x_1x_2\in\mathbf G_i(\mathfrak A)$, that decomposition has $r_j = \Id$ for every $j<i$, so it consists of factors of index at least $i$, exactly one of which lies in $R'(i)$. Substituting it for $x_1x_2$ and sorting afresh --- which is free, by commutativity --- yields a sorted decomposition of $g$ with $p - 1$ factors of index $i$, and with the factors of index below $i$ untouched, since none were created. Iterating $p-1$ times establishes the claim for $i$.
As no step of the induction increases the number of factors of index below the current one, the process terminates, and delivers a decomposition of the required shape. Hence $(R',S')$ is a PSGS, and no product $rs$ or $sr$ with $r\in\mathcal R$ and $s\in\mathcal S$ ever needs to be sifted.

\medskip\noindent
Every insertion into $S$ now arises either from the given generating set for $\mathbf G(\mathfrak A)$, of size at most $|A|^{|\vec p|}$, or from the sifting of a product of two entries of a common row of $R$. The transversal table has at most $|A|^k$ rows and, by $k$-adequacy, at most $|A|^k$ entries in each, so
\[|R|\le |A|^{2k}\]
holds at every stage of the iteration --- and in particular at its fixed point --- irrespective of how many insertions have taken place. The number of pairs available for sifting is therefore at most $|A|^{4k}$, and
\[|S|\ \le\ |A|^{|\vec p|} + |A|^{4k}.\]
This bound is independent of which permutations are actually inserted, and so requires no test to enforce. This is precisely where the abelian case departs from the general one, where the corresponding bound was obtained (in \cref{lem:ins_psgs}) by using the $\ord$ operator to verify that each insertion strictly increases the order of $\langle S\rangle$.

\medskip\noindent
Accordingly, the indexing tuples of the saturation procedure may be taken of the form $\vec m = (\vec\mu_m,\vec\nu_m)$, tracking positions within $R$ alone, with the enumeration formula adapted to the shortened tuples. In the relation variable $S$ the tuple $\vec\lambda$ may be shortened from $2|T|$ to $4k$, and the last clause of $\insList$ --- the sole occurrence of the $\ord$ operator in the proof of \cref{thm:ordered_schreier_sims} --- deleted. Every remaining formula is unchanged but for the length of $\vec\lambda$, and the argument of \cref{thm:ordered_schreier_sims} applies verbatim. The formula $\varphi_{\mathbf H}$ so obtained lies in $\FPC[\Sigma]$ and defines an ordered generating set for $\mathbf H$, as required.
\end{proof}

\begin{proof}[Proof of~\cref{rem:pcbgkk}]
    We first define a formula that holds on $i,\vec x$ if $\Phi(i,\vec x)$ is the graph of a permutation distinct from all permutations defined by $\Phi(i,\vec y)$, for $\vec y < \vec x$:
    \[\mathsf{new}(\mu,\vec p) := \forall \vec q < \vec p, \exists  s, t, \Phi(\mu,\vec p, s, t)\land\lnot\Phi(\mu,\vec q,\vec s,\vec t)\]
    We are now able to define a reindexing $\Phi'$ of $\Phi$, which is indexed over tuples of $k_1 + 1$ numerical variables instead of $k_2 + 1$, that contains each permutation at most once:
    \[\Phi'(\mu,\vec x, s, t) := \exists \vec p,\begin{landcases}
    \mathsf{new}(\mu,\vec p)\\
    (\# \vec q, (\vec q < \vec p \land \mathsf{new}(\mu,\vec q))) = \vec x\\
    \Phi(\mu,\vec p, s, t)
    \end{landcases}\]
    Because $|\GspanFP{\Phi}{\vec p}{\mathfrak A,i}| < n^{k_1}$ for all $i$, and the permutations enumerated by $\Phi'$ are pairwise distinct, all permutations defined by $\Phi$ must occur in $\Phi'$. (Note that if $\Phi$ contained $ v < n^{k_1}$ permutations, $\Phi'(\mathfrak A, v + c)$ is empty for all $c$.)
  \end{proof}

\end{document}